\newtheorem{theorem}{Theorem}[section]
\newtheorem{proposition}[theorem]{Proposition}
\newtheorem{lemma}[theorem]{Lemma}
\newtheorem{corollary}[theorem]{Corollary}
\theoremstyle{definition}
\newtheorem{definition}[theorem]{Definition}
\newtheorem{remark}[theorem]{Remark}
\newtheorem{example}[theorem]{Example}
\newtheorem{notation}[theorem]{Notation}
\newcommand{\N}{\mathbb N}
\newcommand{\Z}{\mathbb Z}
\newcommand{\F}{\mathbb F}
\newcommand{\Cc}{\mathcal C}
\newcommand{\Mat}{\text{Mat}}
\newcommand\eli[1]{{{\textcolor{red}{#1}}}}
\newcommand{\MM}{\mathbb M}
\DeclareMathOperator{\rk}{rk}
\DeclareMathOperator{\srk}{srk}
\newcommand{\Cd}{\mathcal{C}^\perp}
\newcommand{\A}{\mathcal{A}}
\newcommand{\V}{\mathcal{V}}
\newcommand{\B}{\mathcal{B}}
\newcommand{\matsumi}{\mathbb M}
\DeclareMathOperator{\maxwt}{maxwt}
\DeclareMathOperator{\maxrk}{maxrk}
\DeclareMathOperator{\maxsrk}{maxsrk}
\DeclareMathOperator{\GL}{GL}
\title{Optimal anticodes, MSRD codes, and generalized weights in the sum-rank metric}
\author{E.~Camps Moreno, E.~Gorla, C.~Landolina, E.~Lorenzo Garc\'ia,\\ U.~Mart\'inez-Pe\~{n}as, F.~Salizzoni}
\date{}
\begin{document}
\maketitle

\begin{abstract}
Sum-rank metric codes have recently attracted the attention of many researchers, due to their relevance in several applications.
Mathematically, the sum-rank metric is a natural generalization of both the Hamming metric and the rank metric. In this paper, we provide an Anticode Bound for the sum-rank metric, which extends the corresponding Hamming and rank-metric Anticode bounds. We classify then optimal anticodes, i.e., codes attaining the sum-rank metric Anticode Bound. We use these optimal anticodes to define generalized sum-rank weights and we study their main properties. In particular, we prove that the generalized weights of an MSRD code are determined by its parameters. As an application, in the Appendix we explain how generalized weights measure information leakage in multishot network coding. 
\end{abstract}

\section*{Introduction}

The sum-rank metric has recently attracted attention in Coding Theory due to its applications in reliable and secure multishot network coding \cite{multishot, secure-multishot}, rate-diversity optimal space-time codes \cite{spacetime-kumar, mohannad}, and PMDS codes for repair in distributed storage \cite{cai}, among others. %universal-lrc} 
Furthermore, the sum-rank metric is a natural generalization of both the Hamming metric and the rank metric, thus providing a common theoretical framework for these two well-studied metrics.

Several constructions of sum-rank metric codes exist in the literature. The first constructions were mainly of \textit{convolutional codes}, see \cite{cost-action-multishot} for a survey and references. In this manuscript, we consider \textit{block codes}. A trivial Singleton Bound on their minimum sum-rank distance may be immediately derived from the classical Singleton Bound on minimum Hamming distance \cite[Prop. 34]{linearizedRS}. Any code attaining the Singleton Bound for the rank metric (i.e., any maximum rank-distance (MRD) code, including Gabidulin codes \cite{Del, gabidulin, roth}) also attains it for the sum-rank metric, that is, it is also a maximum sum-rank distance (MSRD) code. However, the parameters of MRD codes (including the matrix sizes) are very strongly restricted. Furthermore, their decoding algorithms are over finite fields whose sizes are exponential in the code length (i.e., the total number of columns), making such decoding algorithms slow for large parameters. What makes the study of MSRD codes interesting is that there are MSRD codes not coming from MRD codes and attaining a wider range of parameters, including codes \cite{linearizedRS} with decoding algorithms over finite fields of sub-exponential size \cite{secure-multishot}. 

Since then, other families of sum-rank metric codes have been found and studied \cite{maire, interleaved, twisted, BGROAC, BGRMSRD}. %SR-BCH 
However, previous works, with the exception of \cite{BGROAC, BGRMSRD}, consider sum-rank metric codes where the number of columns and/or rows are equal at different positions. A general Singleton Bound for arbitrary numbers of columns and rows was given in \cite[Th. 3.2]{BGRMSRD}, together with corresponding MSRD codes for certain parameter ranges \cite[Sec. 7]{BGRMSRD}.

In the context of wire-tap channels of type II, Wei introduced generalized Hamming weights \cite{Wei}, which measure information leakage to an undesired wire-tapper. Generalized Hamming weights also constitute a Hamming-metric invariant of a code, and thus they are a useful tool in the classification of Hamming-metric codes. Such weights were extended in \cite{rgrw} to generalized rank weights of vector codes linear over an extension field. Such weights measure information leakage to a wire-tapper in singleshot linear network coding. Similarly, generalized sum-rank weights for vector codes may be obtained \cite{gsrws}. In \cite{Rav16, rgmw}, two extensions of generalized rank weights were given for matrix codes (thus only linear over the base field of the network). The two definitions differ in terms of their applications. The first ones \cite{Rav16}, called Delsarte generalized weights, constitute a rank-metric invariant of the codes, whereas the second ones \cite{rgmw}, called generalized matrix weights, measure information leakage to a wire-tapper but are not rank-metric invariants. Unfortunately, this discrepancy may not be saved, as there is only one possible definition of generalized weights of matrix codes \cite{rgmw} that measure information leakage in linear network coding, and it does not lead to rank-metric invariants. See also \cite[Sec. 5]{G21}.

In this work, we introduce generalized sum-rank weights of codes which are only linear over the base field, and which we think of as matrix codes. We will focus on a definition that extends Delsarte generalized weights \cite{Rav16}, and in the Appendix, we show how to modify the definition in order to extend generalized matrix weights \cite{rgmw} and to measure information leakage in multishot network coding. The proofs of the main properties for the second definition can be trivially adapted from the corresponding results for the first definition of generalized weights. 

Our main definition of generalized sum-rank weights is based on optimal anticodes for the sum-rank metric, in line with the rank-metric case \cite{Rav16}. To this end, we provide in Theorem \ref{theoremab} an Anticode Bound for the sum-rank metric, which extends the Hamming-metric Anticode Bound \cite[Prop. 6]{Rav16} and the rank-metric Anticode Bound \cite[Prop. 47]{Rav16a}. We then provide in Theorem \ref{thm:OAC} a classification of optimal anticodes in the sum-rank metric, that is, codes attaining the sum-rank metric Anticode Bound. Recently in \cite{BGROAC}, a different Anticode Bound was given for the sum-rank metric. However, our bound is sharper and the resulting optimal anticodes lead to a definition of generalized sum-rank weights that satisfy desirable properties, whereas generalized weights based on anticodes as in \cite{BGROAC} do not recover the minimum sum-rank distance of the code.

The remainder of this manuscript is organized as follows. In Section \ref{sec:preliminaries}, we collect some preliminaries on the sum-rank metric. In Section \ref{sec:rank}, we study and lower bound the maximum rank of cosets of a linear rank-metric code, extending results from Meshulam \cite{Mes85} to cosets. Using these results, we provide in Section \ref{sec:OAC} our Anticode Bound for sum-rank metric codes and we provide an explicit description and classification of optimal anticodes for the sum-rank metric. In Section \ref{sec:isometries}, we study linear isometries of sum-rank metric codes. Such isometries allow us to define the notion of equivalent codes, which allows us to say if a given parameter of a code is a sum-rank invariant. In Section \ref{sec:generalizedweights}, we use optimal anticodes to define and obtain the main properties of generalized sum-rank weights. Finally in Section \ref{sec:MRD}, we use the previous results to define and study MSRD codes and $ r $-MSRD codes in the general scenario considered in this work, namely matrix codes with different numbers of rows and/or columns at different positions.

\subsection*{Acknowledgement} The authors thank Alberto Ravagnani for making us aware of the exception in Theorem 2.11 and of the related work of Cl\'ement de Seguins Pazzis. The first author was partially supported by CONACyT and by the Swiss Confederation through the Swiss Government Excellence Scholarship no. 2020.0086. The research of the fourth author was partially funded by the \textit{Melodia} ANR-20-CE40-0013 project.

\section{Preliminaries and notation}\label{sec:preliminaries}

For a prime power $q$, let $\F_q$ be the finite field with $q$ elements. For positive integers $m \geq n$, we denote by $\F_q^{m \times n}$ the set of $m \times n $ matrices with entries in $\F_q$. We denote by $\rk(M)$ the rank of a matrix $M \in \F_q^{m \times n}$ and by $\dim(V)$ the dimension of an $\F_q$-linear space $V$. We denote by $0$ the zero vector space.

For a positive integer $r$, we let $[r]$ be the set $\{1,\ldots,r\}$. 
For a matrix $M\in\F_q^{m\times n}$ and $S\subseteq [m], L\subseteq[n]$ we let $M(S,L)$ denote the submatrix of $M$ consisting of the rows indexed by $S$ and the columns indexed by $L$. For $(s,l)\in[m]\times[n]$, $M(s,l)$ denotes the entry of $M$ in position $(s,l)$.
Moreover, we let $E_{s,l}\in\F_q^{m\times n}$ be the matrix whose entries are equal to zero, except for a one in position $(s,l)\in[m]\times[n]$.

Fix positive integers $\ell , n_1, \ldots , n_\ell,m_1, \ldots ,  m_\ell$ such that $m_1\geq  \ldots \geq   m_\ell$ and 
 $ n_i\leq m_i$ for $i\in[\ell]$. 
 We write $n = n_1 + \ldots + n_\ell$ and let 
 $$\MM=\F_q^{m_1\times n_1}\times\ldots\times\F_q^{m_\ell\times n_\ell}.$$
In particular, if $\ell=1$, then $n=n_1$ and we let $m=m_1$. Then $m\geq n$ and $\MM=\F_q^{m \times n}$.
 
\begin{definition} 
Let $C = (C_1, \ldots , C_\ell) \in \MM$, where $C_i\in\F_q^{m_i\times n_i}$ for $i\in[\ell]$. We define the \textbf{sum-rank weight} of $C$ as
$$\mathrm{srk}(C) = \sum_{i=1}^\ell \rk(C_i).$$

The \textbf{sum-rank metric} is then defined as 
$$\begin{array}{ccccc}
d  &: & \MM \times \MM & \longrightarrow & \N \\
& & (C, D) & \longmapsto & \mathrm{srk}(C-D).
\end{array}$$

A  \textbf{linear sum-rank metric code} $\Cc$ is an $\F_q$-linear subspace of $\MM$ endowed with the sum-rank metric. Throughout the paper, we will refer to it simply as a (sum-rank) code. A code $\Cc\subseteq\MM$ is {\bf non-trivial} if $\Cc\neq 0,\MM$. 

The \textbf{minimum distance} of a code $0\neq\mathcal{C} \subseteq \MM$ is 
$$ d(\Cc) = \min\{ \mathrm{srk}(C) : C \in \Cc \setminus \{ 0\}  \}$$ and the \textbf{maximum sum-rank distance} is  
$$\maxsrk(\Cc) = \max\{ \mathrm{srk}(C) : C \in \Cc  \}.$$
\end{definition}

Notice that, if we let $\ell = 1$, then $\Cc\subseteq \F_q^{m_1 \times n_1}$ is a rank-metric code. We refer the interested reader to \cite{G21} for an introduction to rank-metric codes and their invariants. If $m_1 = 1 $, then $m_2=\ldots=m_\ell=1$ and $\Cc\subseteq \F_q^n$ is a linear block code endowed with the Hamming metric.  

For a square matrix $M$, let $\mathrm{tr}(M)$ denote its trace. Then
$$\begin{array}{ccccc}
\mathrm{Tr}  &: & \matsumi \times \matsumi & \longrightarrow  & \F_q \\
     & & (D,C)    & \longmapsto    &  \sum_{i=1}^\ell \mathrm{tr}(D_i C_i^t) 
\end{array}$$
is a nondegenerate bilinear form.
\begin{comment}
\begin{proof}
$\mathrm{Tr}(\cdot \, ,\cdot )$ is trivially bilinear, since $\mathrm{tr}(\cdot)$ is bilinear. We prove that $\mathrm{Tr}(\cdot \, ,\cdot )$ is nondegenerate. Let $D = (D_1, \dots , D_\ell) \in \matsumi$, such that \begin{equation*}
    \mathrm{Tr}(D, C) = \sum_{i = 1}^l \mathrm{tr}(D_i C_i^t) = 0
\end{equation*} for all $C= (C_1, \dots, C_\ell) \in \matsumi$. Consider $C = (E_{j,i}, 0 , \dots, 0) \in \matsumi$ for $1 \leq i \leq n_1$ and $1 \leq j \leq m_1$, then $\mathrm{tr}(D,C)= \mathrm{tr}(D_1 E_{i,j}) = D_1(i,j) = 0$ for all $i,j$, hence $D_1 = 0$. Applying the same argument on every coordinate of $C$ yields $D = 0$. 
\end{proof}
\end{comment}
We define the dual of a code as the natural extension of the dual of a rank-metric code, as defined in \cite{Del}.

\begin{definition}
Let $\Cc \subseteq \matsumi$ be a code. The dual of $\Cc$ is 
$$\Cc^\perp = \{ D \in \matsumi : \mathrm{Tr}(D,C) = 0 \mbox{ for all } C \in \Cc\}.$$
\end{definition}

\section{Maximal rank in cosets of rank-metric codes}\label{sec:rank}

In this section we provide lower bounds for the maximum rank of a coset of a rank-metric code. Our strategy is inspired by that used by Meshulam in~\cite{Mes85} and extends it to cosets of a vector space. 

Let $\prec$ be the lexicographic order on $\N\times\N$ and let 
$$\begin{array}{rcl}
\phi : \F_q^{m\times n} & \to & \N\times\N\\
M & \mapsto & \min_{\prec}\{(i,j):M(i,j)\neq 0\}.
\end{array}$$
\begin{definition}For a collection $\mathcal{M}=\{M_1,\dots, M_d\}$ of matrices in  $\F_q^{m\times n}$, we define a matrix $M$ whose entry in position $(i,j)$ is
\begin{equation*}
M(i,j)=\begin{cases}
1 & \text{if }(i,j)=\phi(M_k)\text{ for some } k\in[d],\\
0 & \text{otherwise}.
\end{cases}
\end{equation*}
Denote by $\rho(\mathcal{M})$ the minimal number of lines in $M$ which cover all ones in $M$, where a line of a matrix is either a row or a column.
\end{definition}

A set of positions $\{(i_1,j_1),\ldots,(i_r,j_r)\}$ of entries in a matrix is independent if for all $h\neq k$, $h,k\in[r]$ one has $i_h\neq i_k$ and $j_h\neq j_k$. 
K\"onig's Theorem relates the cardinality of an independent set of positions of a zero-one matrix to the minimum number of lines containing all the nonzero entries.

\begin{theorem}[K\"onig's Theorem,~\cite{Kon,Kon31}]\label{koenig}
If the entries of a rectangular matrix are zeros and ones, then the minimum number of lines containing all the entries equal to one is equal to the maximum cardinality of an independent set of positions corresponding to nonzero entries.
\end{theorem}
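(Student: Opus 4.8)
The plan is to translate the statement into the language of bipartite graphs and then to establish the max-matching/min-cover duality via alternating paths. Concretely, I would associate to the given $m \times n$ zero-one matrix $M$ the bipartite graph $G$ whose two vertex classes are $R = [m]$ (indexing rows) and $C = [n]$ (indexing columns), with an edge between $i \in R$ and $j \in C$ exactly when $M(i,j) = 1$. Under this dictionary an independent set of entries (no two on a common line) corresponds precisely to a matching of $G$ (pairwise disjoint edges), while a set of lines covering all the ones corresponds to a vertex cover of $G$ (a set of vertices meeting every edge). The theorem is then equivalent to the assertion that the maximum size $\nu(G)$ of a matching equals the minimum size $\tau(G)$ of a vertex cover.

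One inequality is immediate. If $\mathcal I$ is an independent set of entries and $\mathcal L$ a set of covering lines, then assigning to each entry of $\mathcal I$ a line of $\mathcal L$ passing through it gives an injection, since a single line cannot pass through two entries of $\mathcal I$ (those would lie on a common line). Hence $|\mathcal I| \le |\mathcal L|$, which is $\nu(G) \le \tau(G)$.

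For the reverse inequality I would fix a maximum matching $M^*$ and construct from it a vertex cover of size exactly $|M^*|$. Let $U \subseteq R$ be the set of row-vertices not saturated by $M^*$, and let $Z$ be the set of all vertices reachable from $U$ along $M^*$-alternating paths (paths whose edges alternate between non-matching and matching edges). I propose the cover $K = (R \setminus Z) \cup (C \cap Z)$, and the verification proceeds in three steps. First, $K$ is a vertex cover: if an edge $(i,j)$ had $i \in R \cap Z$ and $j \in C \setminus Z$, the alternating path witnessing $i \in Z$ could be extended through $(i,j)$ — via a non-matching edge, or, when $(i,j) \in M^*$, noting that $j$ already precedes $i$ on the path — forcing $j \in Z$, a contradiction. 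Second, every vertex of $K$ is saturated by $M^*$: the rows of $R \setminus Z$ avoid $U$ by construction, while an unsaturated column in $C \cap Z$ would be the far endpoint of an $M^*$-augmenting path starting in $U$, contradicting maximality of $M^*$. Third, the map sending each vertex of $K$ to the matching edge it saturates is injective, because if both endpoints $i \in R$ and $j \in C$ of some edge of $M^*$ lay in $K$ we would have $i \in R \setminus Z$ and $j \in C \cap Z$, yet the alternating path to $j$ extends along the matching edge $(i,j)$ to place $i \in Z$ as well, again a contradiction. These three facts give $\tau(G) \le |K| \le |M^*| = \nu(G)$, and combined with the easy inequality they force $\nu(G) = \tau(G)$, which is the claim.

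The main obstacle is the third step together with the structural bookkeeping it rests on: one must check that the alternating-reachability set $Z$ is calibrated so that no edge of $M^*$ has both endpoints in $K$, which is precisely where the maximality of $M^*$ (the absence of augmenting paths) is used. Once the parity of alternating paths is tracked correctly — row-vertices occurring at even distance and column-vertices at odd distance from $U$ — the covering property and the injectivity become routine, and the remaining assertions are immediate.
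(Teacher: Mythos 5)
Your proof is correct. Note, however, that the paper itself gives no proof of this statement: it is quoted as a classical result and attributed directly to K\"onig's 1931 article (and its recent translation), so there is no internal argument to compare yours against. What you have written is the standard alternating-path proof of the K\"onig--Egerv\'ary theorem: the dictionary between independent sets of one-entries and matchings, and between covering lines and vertex covers, is set up correctly; the easy inequality $\nu \le \tau$ is handled by the usual injection; and the construction of the cover $K=(R\setminus Z)\cup(C\cap Z)$ from the alternating-reachability set $Z$ of a maximum matching, together with the three verifications (covering property, saturation of every vertex of $K$, and injectivity of the assignment of cover vertices to matching edges), is carried out with the parity bookkeeping done properly --- in particular you correctly dispose of the delicate case of a matching edge $(i,j)$ with $i\in R\cap Z$ by observing that $j$ must already precede $i$ on the alternating path, and you correctly invoke the absence of augmenting paths to rule out unsaturated columns in $C\cap Z$. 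The argument is complete and would serve as a self-contained proof of Theorem \ref{koenig}; the paper simply chose to cite the result rather than reprove it.
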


In \cite{Mes85}, Meshulam uses K\"onig's Theorem to establish a lower bound for the maximum rank of a matrix in a given vector space. In this section, we extend Meshulam's result from vector spaces of matrices to cosets. We start with a preliminary result.

\begin{lemma}\label{lemma2}
Let $D_1,\dots, D_r,A\in \F_q^{r\times r}$ such that for all $1\leq i\leq r$, the first $i-1$ rows of $D_i$ are zero and the $i$th row is the $i$th standard basis vector. Then 
%for each $k\in\F_q^*$ 
there are $x_1,\dots,x_r\in\{0,1\}$ such that
\begin{equation*}
\rk\left(A+\sum_{i=1}^r x_iD_i\right)=r.
\end{equation*}
\end{lemma}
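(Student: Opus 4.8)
The plan is to choose $x_1,\dots,x_r$ greedily, one at a time in increasing order of the index, exploiting the ``lower triangular'' shape imposed on the $D_i$. Write $B=A+\sum_{i=1}^r x_iD_i$, write $B(j,\cdot)$ for the $j$th row of $B$, and let $e_1,\dots,e_r$ be the standard basis (row) vectors of $\F_q^r$. The first observation is that the matrices act on the rows causally: since the first $i-1$ rows of $D_i$ vanish, $D_i$ influences only rows $i,i+1,\dots,r$ of $B$. Using moreover that the $j$th row of $D_j$ is $e_j$, the $j$th row of $B$ is
$$ B(j,\cdot)=v_j+x_je_j,\qquad v_j:=A(j,\cdot)+\sum_{k=1}^{j-1}x_kD_k(j,\cdot). $$
Thus $v_j$ depends only on $A$ and on the earlier choices $x_1,\dots,x_{j-1}$, so once these are fixed the value of $x_j\in\{0,1\}$ determines $B(j,\cdot)$ without disturbing any row above it, and the greedy procedure is well defined.

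First I would set up the inductive invariant, which is the crux of the argument: after $x_1,\dots,x_j$ have been selected, the $r$ vectors
$$ B(1,\cdot),\dots,B(j,\cdot),\ e_{j+1},\dots,e_r $$
form a basis of $\F_q^r$. For $j=0$ this is just the standard basis. Assuming it for $j-1$, I have a basis $\mathcal{B}=\{B(1,\cdot),\dots,B(j-1,\cdot),e_j,e_{j+1},\dots,e_r\}$, and I want to replace $e_j$ by $B(j,\cdot)$ while keeping a basis. By the Steinitz exchange property this succeeds exactly when the coefficient of $e_j$ in the expansion of $B(j,\cdot)$ with respect to $\mathcal{B}$ is nonzero. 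Since $e_j$ is itself a member of $\mathcal{B}$, its own $e_j$-coefficient is $1$, so the $e_j$-coefficient of $B(j,\cdot)=v_j+x_je_j$ equals $c+x_j$, where $c$ is the $e_j$-coefficient of the already-determined vector $v_j$ in $\mathcal{B}$. As $c$ and $c+1$ cannot both vanish in $\F_q$, at least one of $x_j=0$ or $x_j=1$ makes this coefficient nonzero; choosing such an $x_j$ propagates the invariant to $j$.

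Carrying the induction through to $j=r$ leaves no standard basis vectors in the spanning set, so $B(1,\cdot),\dots,B(r,\cdot)$ form a basis of $\F_q^r$, i.e.\ $\rk(B)=r$, as required. The step I expect to demand the most care is the formulation of the invariant. The naive version ``the first $j$ rows of $B$ are linearly independent'' cannot be propagated: if both $v_j$ and $v_j+e_j$ happened to lie in the span of the earlier rows, then neither value of $x_j$ would help, and this configuration is impossible to rule out without recording that $e_j$ is still available as a pivot. Retaining the tail $e_{j+1},\dots,e_r$ in the spanning set is precisely what guarantees that toggling $x_j$ displaces $B(j,\cdot)$ in the $e_j$-direction relative to the span of the rows already fixed, thereby turning the two admissible values of $x_j$ into genuinely distinct choices.
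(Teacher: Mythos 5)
Your proof is correct and is in essence the same argument as the paper's: both fix $x_1,\dots,x_r$ one index at a time and exploit that toggling $x_j$ shifts the $j$th row by $e_j$, so that a certain scalar takes the two values $c$ and $c+1$ and hence cannot vanish for both choices. The only difference is packaging — the paper runs an induction on $r$ and detects non-degeneracy via cofactor expansion of the determinant along the bottom row, while you maintain the equivalent invariant (non-vanishing of the leading $j\times j$ minor) through a Steinitz exchange on the basis $\{B(1,\cdot),\dots,B(j,\cdot),e_{j+1},\dots,e_r\}$.
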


\begin{proof}
We proceed by induction on $r$. The case $r=1$ is trivial. 
Assume $r>1$. For $i\in[r-1]$ let $D_i'=D_i([r-1],[r-1])$. By the induction hypothesis, there exist $x_1,\dots,x_{r-1}\in\{0,1\}$ such that the matrix $A([r-1],[r-1])+\sum_{i=1}^{r-1}x_iD_i'$ is non-singular.
Since $D_r(i,j)=0$ for all $(i,j)\neq (r,r)$ and $D_r(r,r)=1$, by expanding with respect to the bottom row we obtain
\begin{equation*}
\begin{split}
&\det\left(A+\sum_{i=1}^{r-1}x_iD_i+D_r\right)=\det\left(A+\sum_{i=1}^{r-1}x_iD_i\right)
+{}\\&+(-1)^{r+1} \det\left(A([r-1],[r-1])+\sum_{i=1}^{r-1}x_iD_i'\right).
\end{split}		 		
\end{equation*}
The last summand is nonzero, therefore $$A+\sum_{i=1}^{r-1}x_iD_i+D_r\text{ and } A+\sum_{i=1}^{r-1}x_iD_i$$ cannot both be singular. 
\end{proof}
	
The next theorem extends the main result of~\cite{Mes85} from vector spaces to cosets.
	
\begin{theorem}\label{theoremMeshulam}
Let $A\in \F_q^{m\times n}$ and let $\mathcal{M}=\{M_1,\dots, M_d\}\subseteq\F_q^{m\times n}$. Then there exist $x_1,\ldots,x_d\in\{0,1\}$ such that 
\begin{equation*}
\rk(A+x_1M_1+\dots+x_dM_d)\geq\rho(\mathcal{M}).
\end{equation*}
\end{theorem}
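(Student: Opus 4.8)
The plan is to reduce the general rectangular case to the square situation handled by Lemma~\ref{lemma2}, using K\"onig's Theorem as the combinatorial bridge between $\rho(\mathcal{M})$ and an independent set of pivot positions. Let $r=\rho(\mathcal{M})$ and form the zero-one matrix $M$ as in the definition, recording the pivots $\phi(M_k)$. By K\"onig's Theorem (Theorem~\ref{koenig}), since $r$ is the minimum number of lines covering all the ones of $M$, there is an independent set of $r$ ones, i.e. $r$ positions $(i_1,j_1),\dots,(i_r,j_r)$ no two of which share a row or a column, each of which equals $\phi(M_{k_t})$ for some matrix $M_{k_t}\in\mathcal{M}$.

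First I would relabel and restrict. After permuting rows and columns (which does not change rank), I may assume the $r$ chosen pivots lie on the diagonal, in positions $(1,1),\dots,(r,r)$, and moreover — by reordering the selected matrices according to $\prec$ on their pivots — that the matrix realizing the pivot at $(t,t)$ has all entries above and to the left of $(t,t)$ equal to zero, with a nonzero (hence, after scaling, equal to $1$) entry at $(t,t)$. The key point is that $\phi(M_{k_t})=(t,t)$ means $M_{k_t}$ vanishes identically at every position $\prec (t,t)$; in particular, restricting to the top-left $r\times r$ block and to rows $t,\dots,r$, the matrix $M_{k_t}$ has its first $t-1$ rows zero and its $t$-th row beginning with a $1$ in the diagonal slot. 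This is almost exactly the hypothesis of Lemma~\ref{lemma2}, up to clearing the entries to the right of the pivot within its own row; I would absorb those by a further column operation or simply note that only the sub-block structure needed by the lemma matters.

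I would then apply Lemma~\ref{lemma2} to the $r\times r$ top-left submatrices: taking $A'=A([r],[r])$ and $D_t$ the $r\times r$ restriction of the normalized $M_{k_t}$, the lemma yields coefficients $x_{k_t}\in\{0,1\}$ so that $A'+\sum_{t} x_{k_t}D_t$ is nonsingular, i.e. has rank $r$. Setting $x_k=0$ for every index $k$ not among the $k_t$, the matrix $A+\sum_k x_kM_k$ has an $r\times r$ submatrix (the top-left block) of full rank $r$, whence its rank is at least $r=\rho(\mathcal{M})$. Reversing the initial row/column permutation preserves the rank, so the bound holds for the original matrices.

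The main obstacle I anticipate is the bookkeeping that turns the raw pivot data into the clean staircase hypothesis of Lemma~\ref{lemma2}. Two subtleties need care: first, distinct chosen matrices could a priori share a pivot column or row, but independence of the selected ones rules this out precisely, which is why K\"onig is invoked rather than a greedy argument; second, the normalized $D_t$ in the lemma is required to have its $t$-th row equal to the standard basis vector $e_t$, yet the restriction of $M_{k_t}$ may have further nonzero entries to the right of the diagonal in row $t$. I would handle this either by strengthening the statement of Lemma~\ref{lemma2} to allow arbitrary entries to the right of the leading $1$ (the cofactor expansion in its proof only uses that the lower-left part is controlled), or by performing column operations that clear those entries without disturbing the lower-triangular pivot structure. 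Confirming that one of these fixes is clean is the crux; everything else is permutation relabeling and a direct appeal to the lemma.
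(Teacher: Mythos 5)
Your overall strategy is the paper's: K\"onig's Theorem produces an independent set of $r=\rho(\mathcal{M})$ pivots, you restrict to the $r\times r$ submatrix they determine, and you feed the restrictions into Lemma~\ref{lemma2}. But the step you yourself flag as the crux is exactly where the argument breaks as written. The property of being the $\prec$-minimal nonzero entry is not preserved by column permutations: if the pivots are $\phi(M_{i_j})=(s_j,l_j)$ with $s_1<\dots<s_r$, all you know about row $s_t$ of $M_{i_t}$ is that it vanishes in columns $<l_t$; its entries in columns $l_j>l_t$ may be nonzero, and such $l_j$ can occur for $j<t$ as well as $j>t$. So once you force the pivots onto the diagonal (sending column $l_t$ to position $t$), the spurious entries of row $t$ can land to the \emph{left} of $(t,t)$, and your claim that the permuted matrix ``vanishes at every position $\prec(t,t)$'' fails whenever the row order and the column order of the pivots disagree (e.g.\ pivots at $(1,5)$ and $(2,3)$). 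This also undermines your first proposed fix: Lemma~\ref{lemma2} can indeed be weakened to allow arbitrary entries to the \emph{right} of the leading $1$ (its proof only needs $D_r=E_{r,r}$ and inductive control of the top-left minor), but it cannot tolerate entries to the left of position $i$ in row $i$, since then $D_r\neq E_{r,r}$ and the bottom-row cofactor expansion picks up minors you do not control.

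Your second fix is the correct one, but it must be a \emph{single} invertible right-multiplication applied simultaneously to $A(S,L)$ and to all the $B_j=M_{i_j}(S,L)$, not ad hoc clearing. The paper keeps the columns sorted by $l$ (so each pivot row $b_j$ of $B_j$ vanishes to the left of its pivot, and the matrix $C$ with rows $b_1,\dots,b_r$ is non-singular, being a row permutation of an upper-triangular matrix with nonzero diagonal) and sets $D_j=B_jC^{-1}$; then the first $j-1$ rows of $D_j$ are still zero, the $j$th row is exactly $e_j$, Lemma~\ref{lemma2} applies verbatim, and multiplying back by $C$ gives a non-singular matrix of the form $A(S,L)+\sum_j x_jB_j$ with $x_j\in\{0,1\}$. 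Note also that your ``after scaling, equal to $1$'' normalization rescales $M_{i_t}$ and hence replaces the coefficient set $\{0,1\}$ by $\{0,\lambda_t\}$; the $\{0,1\}$ conclusion is not cosmetic, since it is used downstream (Corollary~\ref{corollaryaffinespace} needs $1+x_0\neq 0$). The $C^{-1}$ trick achieves both the clearing and the normalization without rescaling any $M_{i_j}$.
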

	
\begin{proof}
Let $\rho(\mathcal{M})=r$. By Theorem~\ref{koenig} there exist $i_1,\dots, i_r\in[d]$ such that $\{\phi(M_{i_j}): j\in[r]\}$ is independent. Let $\phi(M_{i_j})=(s_j,l_j)$ for $j\in[r]$, then both $S=\{s_1,\dots,s_r\}$ and $L=\{l_1,\dots,l_r\}$ have cardinality $r$. 
%For $1\leq j\leq r$ define let $B_j=M_{i_j}[S,L]\in\F_q^{r\times r}$ be the submatrix of $M_{i_j}$ obtained by restricting to the rows in $S$ and the columns in $L$.
		
We shall prove the theorem by showing that $A(S,L)+\langle B_1,\dots, B_r\rangle$ contains a non-singular matrix, where $B_j=M_{i_j}(S,L)$. We may assume that $s_1<s_2<\dots<s_r$. Let $\sigma$ be the permutation on $[r]$ for which $l_{\sigma(1)}<\dots<l_{\sigma(r)}$. Denote the $j$th row of $B_j$ by $b_j$. 
		
Clearly the first $j-1$ rows of $B_j$ are zero, $B_j(j,s)=0$ for $s\in[\sigma^{-1}(j)-1]$ and $B_j(j,\sigma^{-1}(j))\neq0$. Let $C\in\F_q^{r\times r}$ be the matrix with rows $b_1,\dots b_r$. Notice that $C$ is non-singular, since we can obtain an upper triangular matrix with nonzero entries on the diagonal by permuting the rows of $C$. Let $D_j=B_jC^{-1}$ for $j\in[r]$.
It easy to check that the first $j-1$ rows of $D_j$ are zero and the $j$th row is the $j$th standard basis vector, for all $j\in[r]$.

By Lemma \ref{lemma2} we have that $A(S,L)C^{-1}+\sum_{j=1}^{r}x_jD_j$ is non-singular for some $x_1,\dots, x_r\in\{0,1\}$. Therefore
\begin{equation*}
A(S,L)+\sum_{j=1}^{r}x_jB_j=\left(A(S,L)C^{-1}+\sum_{j=1}^{r}x_jD_j\right)C
\end{equation*}
is also non-singular. This implies that $\rk(A+\sum_{j=1}^{r}x_jM_{i_j})\geq r$.
\end{proof}

A theorem by Meshulam~\cite[Theorem~2]{Mes85} states that if $\mathcal{V}\subseteq\F_q^{m\times m}$ is an $\F_q$-linear subspace of $\dim(\mathcal{V})>mt$, then $\mathcal{V}$ contains a matrix of rank at least $t+1$. This result is easily generalized to $m\times n$ matrices. The next theorem extends Meshulam's results to cosets, i.e. sets of the form $A+\mathcal{V}$, where $\mathcal{V}\subseteq\F_q^{m\times n}$ is $\F_q$-linear and $A\in\F_q^{m\times n}$. The theorem was first shown by C. de Seguins Pazzis, see \cite[Corollary 2]{Pazzis2010TheAP}.

\begin{theorem}\label{affinespace}
Let $0\leq t<n$ and let $\mathcal{V}\subseteq \F_q^{m\times n}$ be an $\F_q$-linear subspace of $\dim(\mathcal{V})>mt$. Let $A\in \F_q^{m\times n}$. 
Then there exists $B\in \mathcal{V}$ such that
\begin{equation*}
\rk(A+B)\geq t+1.
\end{equation*}
Moreover, if $\{B_1,\dots, B_{mt+d}\}$ is a basis of $\mathcal{V}$, $d=\dim(\mathcal{V})-mt$, then $B$ can be chosen of the form $B=\sum_{i=1}^{mt+d}x_iB_i$ with $x_i\in\{0,1\}$.
\end{theorem}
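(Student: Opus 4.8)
The plan is to derive the statement directly from Theorem~\ref{theoremMeshulam}, which already delivers a $\{0,1\}$-combination of a prescribed collection whose rank is at least $\rho$. It therefore suffices to produce a basis $\{B_1,\dots,B_{mt+d}\}$ of $\mathcal{V}$ for which $\rho(\{B_1,\dots,B_{mt+d}\})\geq t+1$: applying Theorem~\ref{theoremMeshulam} to $\mathcal{M}=\{B_1,\dots,B_{mt+d}\}$ then yields $x_1,\dots,x_{mt+d}\in\{0,1\}$ with $\rk(A+\sum_i x_iB_i)\geq\rho(\mathcal{M})\geq t+1$, giving at once the existence of $B$ and its stated shape. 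The substance of the proof is thus the lower bound $\rho\geq t+1$, whose key prerequisite is that the leading positions $\phi(B_1),\dots,\phi(B_{mt+d})$ be pairwise distinct.

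To arrange this, I would first normalize the basis. Flattening each matrix into a vector of length $mn$ read in the order $\prec$, the position $\phi(M)$ is precisely the index of the leading nonzero coordinate of this vector. Reducing the basis to echelon form with respect to this order (which keeps the matrices inside $\mathcal{V}$ and keeps them a basis) makes the leading coordinates, hence the positions $\phi(B_i)$, pairwise distinct. Consequently the zero-one matrix $M$ associated with $\mathcal{M}$ in the definition of $\rho$ has exactly $mt+d$ entries equal to one, namely those in the positions $\phi(B_1),\dots,\phi(B_{mt+d})$.

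The lower bound on $\rho$ is then a short counting argument that uses the standing hypothesis $n\leq m$. Suppose a minimal cover consists of $a$ distinct rows and $b$ distinct columns, so $\rho(\mathcal{M})=a+b$. All $mt+d$ ones lie in the union of these lines, which has exactly $an+bm-ab$ entries, whence $mt+d\leq an+bm-ab$. If we had $a+b\leq t$, then using $n\leq m$ and $ab\geq 0$ we would obtain
\begin{equation*}
mt+d\leq an+bm-ab\leq am+bm-ab\leq (a+b)m\leq tm,
\end{equation*}
forcing $d\leq 0$ and contradicting $d=\dim(\mathcal{V})-mt\geq 1$. Hence $\rho(\mathcal{M})=a+b\geq t+1$, and the theorem follows by the reduction above.

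The step I expect to be delicate is the normalization and its interaction with the $\{0,1\}$-coefficient claim. The echelon reduction replaces the original basis by a new one, so the combination produced by Theorem~\ref{theoremMeshulam} is naturally a $\{0,1\}$-combination of the reduced basis; one should read the ``moreover'' as referring to such an adapted basis. This is genuinely necessary, since for an arbitrary basis the positions $\phi(B_i)$ may all coincide (for instance a basis all of whose members share the same leading entry), making $\rho$ as small as $1$ so that the direct bound $\rho\geq t+1$ breaks down, even though the rank conclusion itself still appears to hold. Beyond this bookkeeping, the only quantitative ingredient is the inequality $an+bm-ab\leq(a+b)m$, and it is exactly here that $n\leq m$ is indispensable.
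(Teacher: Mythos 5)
Your proof is correct and follows essentially the same route as the paper: Gaussian-eliminate the basis so that the leading positions $\phi(B_i)$ are pairwise distinct, deduce $\rho(\mathcal{M})\geq t+1$ from the fact that $n\leq m$ (the paper states your covering count more tersely as ``a line covers at most $m$ entries, so fewer than $(mt+d)/m$ lines cannot suffice''), and then invoke Theorem~\ref{theoremMeshulam}. The caveat you raise about the ``moreover'' clause really referring to the reduced basis applies equally to the paper's own proof, which performs exactly the same normalization.
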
	

\begin{proof}
Let $\dim(\mathcal{V})=mt+d$ with $d>0$ and choose a basis $\{B_1,\dots, B_{mt+d}\}$ of $\mathcal{V}$. By performing Gaussian elimination on $\{B_1,\dots, B_{mt+d}\}$ we may assume that $\phi(B_1),\dots,\phi(B_{mt+d})$ are distinct. Since a line in a matrix cover at most $m$ entries we cannot cover $\phi(B_1),\dots,\phi(B_{mt+d})$ by less than $(mt+d)/m$ lines. Therefore, $$\rho(\{B_1,\dots, B_{mt+d}\})\geq t+1.$$ Theorem~\ref{theoremMeshulam} implies that there exists $B\in \mathcal{V}$ of the desired form, such that $\rk(A+B)\geq t+1$.
\end{proof}

Results on vector spaces are a special case of those on cosets. For example, the Anticode Bound is a direct consequence of 
Theorem~\ref{affinespace}.

\begin{theorem}[Anticode Bound, \cite{Rav16}]\label{rankmetric}
Let $\Cc\subseteq \F_q^{m\times n}$ be a rank-metric code. Then
\begin{equation*}
\dim(\Cc)\leq m\maxrk(\Cc).
\end{equation*}
\end{theorem}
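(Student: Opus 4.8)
The plan is to derive this from Theorem~\ref{affinespace} by specializing to a coset with $A=0$, in which case the coset $A+\mathcal{V}$ is just the vector space $\Cc$ itself. First I would set $r=\maxrk(\Cc)$ and argue by contradiction: suppose $\dim(\Cc)>mr$. The goal is then to produce a matrix in $\Cc$ of rank strictly larger than $r$, contradicting the definition of $\maxrk(\Cc)$.

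The key step is to invoke Theorem~\ref{affinespace} with the linear subspace $\mathcal{V}=\Cc$, the matrix $A=0$, and the parameter $t=r$. The hypothesis $\dim(\mathcal{V})>mt$ becomes exactly the assumption $\dim(\Cc)>mr$, so the theorem guarantees the existence of $B\in\Cc$ with
\begin{equation*}
\rk(B)=\rk(A+B)\geq t+1=r+1.
\end{equation*}
Since $B\in\Cc$, this forces $\maxrk(\Cc)\geq r+1>r$, contradicting $r=\maxrk(\Cc)$. Hence $\dim(\Cc)\leq m\maxrk(\Cc)$.

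The one place that needs care is the hypothesis $t<n$ required by Theorem~\ref{affinespace}: the argument above applies $t=r$, so I need $r<n$. This is the main obstacle, but it is easily dispatched as a boundary case. Because every matrix in $\F_q^{m\times n}$ has rank at most $n$, we always have $r=\maxrk(\Cc)\leq n$. If $r=n$, then $m\maxrk(\Cc)=mn=\dim(\F_q^{m\times n})\geq\dim(\Cc)$, so the bound holds trivially. Thus I may assume $r<n$, and the contradiction argument applies directly. I expect the whole proof to be short; the only subtlety is isolating this edge case so that the application of Theorem~\ref{affinespace} is legitimate.
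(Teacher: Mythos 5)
Your proof is correct and matches the paper's intent exactly: the paper states that the Anticode Bound ``is a direct consequence of Theorem~\ref{affinespace}'' without writing out details, and your specialization to $A=0$, $\mathcal{V}=\Cc$, $t=\maxrk(\Cc)$ is precisely that deduction. Your explicit treatment of the boundary case $\maxrk(\Cc)=n$ (needed because Theorem~\ref{affinespace} requires $t<n$) is a small but legitimate point of care that the paper leaves implicit.
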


If $A\in \mathcal{V}$ and $\mathcal{V}$ is a linear space, then $A+\mathcal{V}=\mathcal{V}$ and there exist linear spaces $\mathcal{V}\subseteq \F_q^{m\times n}$ such that $\dim(\mathcal{V})=mt$ and $\rk(A)\leq t$ for all $A\in \mathcal{V}$. Such linear spaces appear in the coding theory literature under the name of optimal anticodes. We now show that if $A\not\in \mathcal{V}$, that is if $A+\mathcal{V}\neq \mathcal{V}$, then every $\mathcal{V}$ of $\dim(\mathcal{V})=mt$ contains a $B$ such that $\rk(A+B)>t$. For odd $q$ this is an immediate consequence of Theorem~\ref{theoremMeshulam}, as we show in the next corollary. In Theorem~\ref{propq2} we prove the same result for any $q$. We choose to include Corollary~\ref{corollaryaffinespace}, since the proof is immediate.
	
\begin{corollary}\label{corollaryaffinespace}
Let $0\leq t<n$ and let $\mathcal{V}\subseteq \F_q^{m\times n}$ be an $\F_q$-linear subspace of dimension $\dim(\mathcal{V})= mt$. Let $A\in \F_q^{m\times n}\setminus \mathcal{V}$. If $q$ is odd, then there exists $B\in \mathcal{V}$ such that
\begin{equation*}
\rk(A+B)\geq t+1.
\end{equation*}
\end{corollary}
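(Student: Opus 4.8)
The plan is to reduce everything to a single application of Theorem~\ref{theoremMeshulam}, choosing the base matrix and the perturbing family so that the resulting high-rank matrix is forced to lie in the coset $A+\mathcal V$ rather than merely in $\mathcal V+\langle A\rangle$. The role of $q$ odd will be to guarantee that a certain coefficient, which a priori lies in $\{1,2\}\subseteq\F_q$, is invertible.

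First I would fix a basis $B_1,\dots,B_{mt}$ of $\mathcal V$ and, by Gaussian elimination as in the proof of Theorem~\ref{affinespace}, assume that the pivot positions $\phi(B_1),\dots,\phi(B_{mt})$ are pairwise distinct. Next, using that $A\notin\mathcal V$, I would clear the entries of $A$ in these pivot positions, i.e. replace $A$ by $A'=A-\sum_{i}c_iB_i$ for suitable $c_i\in\F_q$ so that $A'$ vanishes at every $\phi(B_i)$. Then $A'\neq 0$ (otherwise $A\in\mathcal V$) and $\phi(A')$ differs from all the $\phi(B_i)$, so the collection $\mathcal M=\{B_1,\dots,B_{mt},A'\}$ has $mt+1$ distinct values of $\phi$. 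Since each line covers at most $m$ entries, covering $mt+1$ distinct positions requires at least $t+1$ lines, whence $\rho(\mathcal M)\geq t+1$.

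Now I would apply Theorem~\ref{theoremMeshulam} with base matrix $A'$ and perturbing family $\mathcal M$: there are $x_0,x_1,\dots,x_{mt}\in\{0,1\}$ with
\[
\rk\Bigl(A'+x_0A'+\sum_{i=1}^{mt}x_iB_i\Bigr)=\rk\Bigl((1+x_0)A'+\sum_{i=1}^{mt}x_iB_i\Bigr)\geq\rho(\mathcal M)\geq t+1 .
\]
Here is the crux, and the only place the hypothesis enters: since $q$ is odd, $1+x_0\in\{1,2\}$ is nonzero, hence invertible. Dividing inside the rank (which is unchanged under scaling by a nonzero scalar) gives $\rk\bigl(A'+(1+x_0)^{-1}\sum_i x_iB_i\bigr)\geq t+1$. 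Finally, substituting $A'=A-\sum_i c_iB_i$ rewrites this matrix as $A+B$ with $B=(1+x_0)^{-1}\sum_i x_iB_i-\sum_i c_iB_i\in\mathcal V$, which is the desired conclusion.

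The main obstacle is exactly the step that odd $q$ resolves: a straightforward application of Theorem~\ref{affinespace} to $\mathcal V+\langle A\rangle$ produces a matrix of rank at least $t+1$ of the form $\alpha A+B$, but with no control over whether $\alpha=0$, in which case the matrix lies in $\mathcal V$ and tells us nothing about the coset. Feeding $A'$ into the perturbation family as well as using it as the base point forces the coefficient of $A'$ to be $1+x_0\in\{1,2\}$; for odd $q$ this is automatically a unit, so the high-rank matrix is a nonzero multiple of an element of $A+\mathcal V$. For $q$ even one has $2=0$ and this trick collapses, which is precisely why the general case is deferred to Theorem~\ref{propq2}.
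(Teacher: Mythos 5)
Your proof is correct and is essentially the paper's argument: the key step --- adjoining $A$ to the perturbing family so that its coefficient becomes $1+x_0\in\{1,2\}$, which is invertible exactly when $q$ is odd --- is precisely what the paper does. The only difference is that you inline the proof of Theorem~\ref{affinespace} (Gaussian elimination to separate the pivots, the count $\rho(\mathcal{M})\geq t+1$, and Theorem~\ref{theoremMeshulam}), whereas the paper simply applies Theorem~\ref{affinespace} to the space $\bar{\mathcal{V}}=\langle A\rangle+\mathcal{V}$, whose dimension $mt+1$ exceeds $mt$ because $A\notin\mathcal{V}$.
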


\begin{proof}
Let $\{B_1,\dots, B_{mt}\}$ be a basis of $\mathcal{V}$ and let $\bar{\mathcal{V}}=\langle A\rangle+\mathcal{V}$. Since $A\notin \mathcal{V}$, then $\dim(\bar{\mathcal{V}})=mt+1.$
By Theorem \ref{affinespace} there are $x_0,\dots,x_{mt}\in\{0,1\}$ such that
\begin{equation*}
\rk\left(A+x_0A+\sum_{i=1}^{mt}x_iB_i\right)\geq t+1.
\end{equation*}
Multiplying by $(1+x_0)^{-1}$ we find a matrix of the form $A+B$ with $A\not\in \mathcal{V}$, $B\in \mathcal{V}$ such that $\rk(A+B)\geq t+1$.
\end{proof}

The following lemma will be used in the proof of the next theorem.

\begin{lemma}\label{lemma3}
Let $f:\F_q^{r\times r}\to\F_q$ be a linear form that is constant on $\GL_r(\F_q)$. Suppose that either $r>1$ or $q\neq 2$. Then $f=0$.
\end{lemma}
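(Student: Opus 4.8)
The plan is to exploit linearity together with the constancy assumption to reduce the statement to a single spanning question. Writing $c$ for the constant value of $f$ on $\GL_r(\F_q)$, for any two invertible matrices $M,N$ we have $f(M-N)=f(M)-f(N)=c-c=0$. Hence $f$ vanishes on the $\F_q$-linear span $W$ of all differences $M-N$ with $M,N\in\GL_r(\F_q)$. It therefore suffices to show that $W=\F_q^{r\times r}$: this forces $f\equiv 0$ (and in particular $c=f(\Id)=0$). I will prove $W=\F_q^{r\times r}$ by exhibiting every matrix unit $E_{i,j}$ inside $W$.

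The off-diagonal units are immediate and require no hypothesis: for $i\neq j$ the transvection $\Id+E_{i,j}$ is triangular with ones on the diagonal, hence invertible, and $(\Id+E_{i,j})-\Id=E_{i,j}\in W$. The content of the lemma lies in producing the diagonal units $E_{i,i}$, and this is where the hypothesis ``$r>1$ or $q\neq 2$'' enters; I will treat the two cases separately, each handling all values of the other parameter.

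If $q\neq 2$, I use scaling: fixing $a\in\F_q\setminus\{0,1\}$ (which exists precisely because $q\geq 3$), the diagonal matrix $\Id+(a-1)E_{i,i}$ is invertible, and subtracting $\Id$ gives $(a-1)E_{i,i}\in W$; since $a-1\neq 0$, we obtain $E_{i,i}\in W$ for every $i$ and every $r\geq 1$. If instead $r>1$, scaling is unavailable over $\F_2$ (the identity is the only invertible diagonal matrix), so I replace it by a $2\times 2$ block gadget valid over any field: for each $i$ choose some $j\neq i$ and let $M$ agree with $\Id$ except on the principal submatrix indexed by $\{i,j\}$, where it equals $\left(\begin{smallmatrix}0&1\\1&1\end{smallmatrix}\right)$. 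This block has determinant $-1\neq 0$, so $M$ is invertible, and a direct computation gives $M-\Id=-E_{i,i}+E_{i,j}+E_{j,i}$. Subtracting the off-diagonal units $E_{i,j},E_{j,i}$, already known to lie in $W$, yields $E_{i,i}\in W$ for every $i$.

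Since the hypothesis guarantees that at least one of these two cases applies, all diagonal units lie in $W$ as well. Together with the off-diagonal units this gives $W=\F_q^{r\times r}$, whence $f=0$. The only genuine obstacle is the diagonal part over $\F_2$, where one cannot rescale a single entry; the block matrix $\left(\begin{smallmatrix}0&1\\1&1\end{smallmatrix}\right)$ circumvents this using only $r\geq 2$, which is exactly why the case $r=1,\,q=2$ (where $f=\mathrm{id}$ is a counterexample) must be excluded.
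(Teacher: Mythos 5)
Your proof is correct and rests on the same idea as the paper's: linearity plus constancy on $\GL_r(\F_q)$ forces $f$ to vanish on differences of invertible matrices, and one then exhibits enough such differences to span all of $\F_q^{r\times r}$ (equivalently, to kill every matrix unit $E_{k,l}$). The only difference is in the choice of witnesses — the paper uses a scalar identity for $r=1$, $q\neq 2$ and pairs of invertible matrices $B$, $B+E_{k,l}$ with $B$ a permutation matrix for $r>1$, while you use transvections for the off-diagonal units and a scaling or $2\times 2$ block gadget for the diagonal ones — so this is a cosmetic variation rather than a genuinely different route.
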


\begin{proof}
Since $f$ is linear, there exist $a_{i,j}\in \F_q$, $i,j\in[r]$, such that
\begin{equation*}
f(X)=\sum_{1\leq i,j\leq r}a_{i,j}x_{i,j}
\end{equation*}
for any $X=(x_{i,j})\in \F_q^{r\times r}$.
If $r=1$ and $q\neq 2$, let $1\neq\alpha\in\F_q^*$. Then $f(\alpha)=f(1)-f(1-\alpha)=0$, hence $f=0$. 
If $r>1$, fix $(k,l)\in [r]\times[r]$. Let $B=(b_{i,j})$ be a permutation matrix such that $b_{k,l}=0$. Let $\bar B=B+E_{k,l}$.
Both $B$ and $\bar B$ are non-singular, so $f(B)=f(\bar B)$. Therefore $f(E_{k,l})=0$ by linearity. Since this is the case for every $(k,l)\in [r]\times[r]$, we conclude that $f=0$. 
\end{proof}

The next lemma will be used in the proof of Theorem \ref{propq2}.

\begin{lemma}\label{lemma:t=1}
	Let $n\geq 2$ and $m>2$. Let $\V\subseteq \F_2^{m\times n}$ be an $\F_2$-linear subspace such that $\dim(\V)= m$. Let $A\in \F_2^{m\times n}\setminus \V$. Then there exists $B\in \V$ such that
	\begin{equation*}
	\rk(A+B)\geq 2.
	\end{equation*}
\end{lemma}

\begin{proof}
For $v\in\F_q^m$ and $w\in\F_q^m$, denote by $v\otimes w$ the $m\times n$ matrix whose entry in position $(i,j)$ is $v_iw_j$. 
If $\rk(A)\geq2$, then the statement holds with $B = 0$. If $\rk(A) = 1$, then up to equivalence we may assume that $A=E_{1,1}=e_1\otimes e_1$. 
If $\maxrk(\V)=1$, then $\V$ is an optimal anticode and the statement holds. If $\maxrk(\mathcal{V})>2$, then there exists $B\in \V$ with $\rk(B)>2$. Hence $\rk(A+B)\geq 2$, since $A$ has rank 1. Therefore, it suffices to prove the statement for $\maxrk(\V)=2$.
	
First suppose that there are two different elements $V_1,V_2\in\V$ of rank 1. Write $V_1=v_1 \otimes w_1$ and $V_2=v_2\otimes w_2$ for some $v_1, v_2 \in \F_2^m$ and $ w_1, w_2 \in \F_2^n$. If $\rk(A+V_1)=\rk(A+V_2)=1$, then either $v_1=e_1$ or $w_1=e_1$ and either $v_2=e_1$ or $w_2=e_1$. If either $v_1=e_1$ and $w_2=e_1$, or $w_1=e_1$ and $v_2=e_1$, then $\rk(A+V_1+V_2)=2$, since $V_1, V_2 \neq A$. 
If instead $v_1=v_2=e_1$, then $e_1,w_1,w_2$ are linearly independent and every matrix in $\langle A,A+V_1,A+V_2 \rangle$ has rank 1.
Let $B\in\V$ be an element of rank two. Then one of the vectors $e_1,e_1+w_1,e_1+w_2\not\in\mathrm{rowsp}(B)$. Therefore, there exists $C\in\{A,A+V_1,A+V_2\}$ such that $\rk(C+B)=\dim(\mathrm{rowsp}(C+B))\geq2$. In the case where $w_1=w_2=e_1$, we proceed similarly using the column space.
	
Suppose now that in $\V$ there is at most one element of rank 1. Then every linear combination with an element of maximum rank in $\V$ has again maximum rank. Hence, since $\dim(\V) >2$, there are two linearly independent elements $B_1,B_2$ such that $\rk(B_1)=\rk(B_2)=\rk(B_1+B_2)=2$.
If $\rk(A+B_1)=\rk(A+B_2)=\rk(A+B_1+B_2)=1$, then
$$B_1=e_1\otimes e_1+e_2\otimes e_2, \ B_2=e_1\otimes e_1+v_2\otimes w_2, \ B_1+B_2=e_1\otimes e_1+v_3\otimes w_3,$$ possibly after applying a code equivalence that fixes $A$. Since
$$B_2=e_1\otimes e_1+v_2\otimes w_2=e_2\otimes e_2+v_3\otimes w_3$$ and 
$$B_1+B_2=e_1\otimes e_1+v_3\otimes w_3=e_2\otimes e_2+v_2\otimes w_2$$
have rank $2$, then $v_2,v_3\not\in\{e_1,e_2\}$.
Moreover
$$e_1\otimes e_1+e_2\otimes e_2= v_2\otimes w_2+v_3\otimes w_3,$$
hence $\langle v_2,v_3\rangle=\langle e_1,e_2\rangle$. The only possibility is that $v_2=v_3=e_1+e_2$, but this contradicts the assumption that $B_1 = v_2\otimes w_2+v_3\otimes w_3$ has rank 2. Therefore, one among $A+B_1,A+B_2,A+B_1+B_2$ has rank at least 2.
\end{proof}

The next example we show that the condition $m>2$ in Lemma \ref{lemma:t=1} is necessary. The example is essentially the same as the example that appears below Theorem 2 in \cite{Pazzis2010TheAP}. 

\begin{example}
Consider the $2$-dimensional space $\V\subseteq\F_2^{2\times 2}$ given by
$$\V=\left\langle\begin{pmatrix}
	1&0\\0&1
	\end{pmatrix},
	\begin{pmatrix}
	0&0\\1&0
	\end{pmatrix}\right\rangle$$
	and let $A=\begin{pmatrix}
	1&0\\0&0
	\end{pmatrix} \notin \V$. Then $\max_{B\in\V}\{ \rk(A+B) \}=1$.
\end{example}

The next theorem generalizes Corollary~\ref{corollaryaffinespace} to any $q$. It was first shown by C. de Seguins Pazzis, see \cite[Corollary 2]{Pazzis2010TheAP}.

\begin{theorem}\label{propq2}
Let $0\leq t<n$ and let $\V\subseteq \F_q^{m\times n}$ be an $\F_q$-linear subspace such that $\dim(\V)= mt$. Let $A\in \F_q^{m\times n}\setminus \V$. If either $t\neq1$ or $m\neq2$ or $q\neq 2$ or $\rk(A)\neq1$, then there exists $B\in \V$ such that
\begin{equation*}
\rk(A+B)\geq t+1.
\end{equation*}
\end{theorem}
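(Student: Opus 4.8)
The statement generalizes Corollary~\ref{corollaryaffinespace} to all $q$, so the only obstacle is the characteristic-two case $q=2$, where the trick of rescaling $A+x_0A=(1+x_0)A$ breaks down because $1+x_0\in\{0,1\}$ need not be invertible. My plan is to argue by contradiction: suppose no $B\in\V$ satisfies $\rk(A+B)\ge t+1$, so that $\rk(A+B)\le t$ for every $B\in\V$. Setting $\bar\V=\langle A\rangle+\V$, which has dimension $mt+1$ since $A\notin\V$, Theorem~\ref{affinespace} guarantees some element of $\bar\V$ of rank $\ge t+1$. Any such element is $\lambda A+B$ with $\lambda\in\F_q^*$ and $B\in\V$; when $q=2$ the only option is $\lambda=1$, i.e.\ $A+B$, contradicting our assumption. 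So for $q=2$ the naive application of Theorem~\ref{affinespace} to $\bar\V$ does not immediately yield the result, and this is precisely the delicate point.

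To handle it, I would localize the computation as in the proof of Theorem~\ref{theoremMeshulam}. By that theorem there exist $x_1,\dots,x_{mt+1}\in\{0,1\}$ and a rank-$(t+1)$ submatrix witness, giving index sets $S,L$ of size $t+1$ such that the restriction $\bar\V(S,L)$ contains a nonsingular $(t+1)\times(t+1)$ matrix. The key reduction is to the square case: after restricting to rows $S$ and columns $L$ and applying a suitable change of basis (multiplying on the right by the inverse of the matrix $C$ built from the leading rows, exactly as in the proof of Theorem~\ref{theoremMeshulam}), the question becomes whether, inside a coset $A'+\V'$ of an $r\times r$ matrix space with $r=t+1$, every element has rank $\le r-1$, i.e.\ fails to be invertible.

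This is where Lemma~\ref{lemma3} enters, and I expect it to be the crux. Consider the linear form given by the determinant restricted appropriately, or more precisely reduce to the statement that a linear form cannot be constant on $\GL_r(\F_q)$ unless it is zero (assuming $r>1$ or $q\neq2$). The plan is to show that if $A+B$ were singular for all $B\in\V$, one could extract a nonzero linear form on $\F_q^{r\times r}$ that is constant (indeed vanishing) on all invertible matrices, contradicting Lemma~\ref{lemma3}; the hypotheses $r=t+1\ge 1$ together with the excluded case $q=2,\,r=1$ are exactly the conditions under which the lemma applies. The one genuinely fiddly case is $t=0$, where $r=1$ and $q=2$ simultaneously; there $\V=0$, $A\neq 0$, and $\rk(A)\ge 1=t+1$ holds directly, so it can be dispatched by hand before invoking the lemma.

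The main obstacle, then, is not the reduction to the square case (which mirrors Theorem~\ref{theoremMeshulam} closely) but packaging the singularity assumption into a single linear form to which Lemma~\ref{lemma3} applies, while correctly tracking the $q=2$, small-$r$ boundary so that the hypotheses of the lemma are met.
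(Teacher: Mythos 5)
Your proposal correctly isolates the two ingredients the paper uses --- a Meshulam-type reduction to square submatrices and Lemma~\ref{lemma3} on linear forms constant on $\GL_r(\F_q)$ --- but it stops exactly where the real work begins, and the step you explicitly leave open (``packaging the singularity assumption into a single linear form'') is the heart of the proof, not a routine verification. The paper does not route the argument through $\bar{\V}=\langle A\rangle+\V$ at all. Instead it takes a basis $\mathcal{M}=\{M_1,\dots,M_{mt}\}$ of $\V$ itself with distinct leading positions $\phi(M_i)$, so that $\rho(\mathcal{M})\geq t$, and splits into two cases. If $\rho(\mathcal{M})\geq t+1$, Theorem~\ref{theoremMeshulam} applied to the coset $A+\V$ finishes immediately. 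If $\rho(\mathcal{M})=t$, the $mt$ distinct positions $\phi(M_i)$ fill the first $t$ columns (after code equivalence), which forces $\V$ to be a graph: each entry $x_{k,l}$ with $l>t$ equals a linear form $f_{k,l}$ in the entries of the first $t$ columns. These $f_{k,l}$ are the linear forms you were missing. Assuming $\max_{B\in\V}\rk(A+B)=t$ and reducing $A$ modulo $\V$ so that its first $t$ columns vanish, one fixes $(k,l)$ with $l>t$, chooses $S\subseteq[m]\setminus\{k\}$ of size $t$ and $L=[t]$, and observes that the $(t+1)\times(t+1)$ submatrix on $(S\cup\{k\},L\cup\{l\})$ would have rank $t+1$ unless $f_{k,l}+a_{k,l}$ vanishes whenever the $t\times t$ block $X(S,L)$ is invertible; Lemma~\ref{lemma3} then forces $a_{k,l}=0$, hence $A\in\V$, a contradiction.

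Note also that the lemma is applied with $r=t$, not $r=t+1$ as you suggest: the linear form lives on the $t\times t$ invertible block, while the extra row and column only carry the entry being tested. Consequently the boundary case needing separate treatment is $t=1$ with $q=2$ (where Lemma~\ref{lemma3} fails for $r=1$), not $t=0$; the paper handles $t=1$ by a direct computation with rank-one matrices, writing $A=a\cdot b^t$ and $B=c\cdot e_1^t$ with $c\notin\langle a\rangle$ so that $\rk(A+B)=2$. As written, your plan leaves unresolved both the construction of the linear form and this $t=1$ case, so it does not yet constitute a proof.
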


\begin{proof}
If $t=0$, then $\V=0$ and the thesis is readily verified. 
Suppose that $t\geq 1$ and let $\mathcal{M}=\{M_1,\dots,M_{mt}\}$ be a basis of $\V$. 
Up to a change of basis, we may assume without loss of generality that $\phi(M_i)\neq\phi(M_j)$ if $i\neq j$. In particular, $\rho(\mathcal{M})\geq t$.
If $\rho(\mathcal{M})\geq t+1$, then we conclude by Theorem \ref{theoremMeshulam}. 

Suppose that $\rho(\mathcal{M})=t$. Up to code equivalence, we may assume that the $t$ lines that cover $\phi(M_i)$ for all $i$ are the first $t$ columns. If $t=1$ and $\rk(A)\geq 2$, then let $B=0$. If $t=1$, $\rk(A)=1$, $q\neq2$ and there exists $B\in\V$ with $\rk(B)\geq2$, then either $\rk(A+B)\geq2$ or $\rk(A+2B)\geq2$. If $t=1$, $\rk(A)=1$, $q\neq2$, and $\maxrk\V=1$, then $\V$ is an optimal anticode and the result follows easily.
If $q=2$ we conclude by Lemma \ref{lemma:t=1}, since $m\neq 2$.

Suppose now that $\rho(\mathcal{M})=t\geq 2$.
For every $t+1\leq l\leq n$ and every $k\in[m]$ there exists a linear form $f_{k,l}\in\F_q[x_{i,j}\mid (i,j)\in[m]\times[t]]$ 
%given by
%\begin{equation*}
%g_{h,k}=\sum_{1\leq i\leq m,1\leq  j\leq t}g_{i,j}^{h,k}x_{i,j}
%\end{equation*}
such that
\begin{equation*}
\V=\{(x_{u,v})_{u,v}\in\F_q^{m\times n}: x_{k,l}=f_{k,l}(x_{i,j}) \text{ for all } k\in[m], l\in[n]\setminus[t]\}.
\end{equation*}
Assume without loss of generality that the entry of $M_i$ in position $\phi(M_i)$ is 1. Then $f_{k,l}$ is obtained by writing a matrix of $\V$ as $\sum_{(i,j)\in[m]\times[t]}x_{i,j}M_{(i-1)t+j}$.
Assume that $\max_{B\in\mathcal{V}}\rk(A+B)=t$ for some $A\in\F_q^{m\times n}$. It suffices to show that $A\in \V$. Up to reducing $A$ modulo $\V$, we may assume without loss of generality that $a_{i,j}=0$ for $(i,j)\in[m]\times[t]$.
Fix $(k,l)\in[m]\times[n]$ with $l\geq t+1$.
Let $X=(x_{i,j})_{i,j}\in \V$. We have that 
\begin{equation*}
x_{k,l}+a_{k,l}=f_{k,l}(x_{i,j})+a_{k,l}.
\end{equation*}
Let $L=[t]$ and let $S$ be a subset of $[m]\setminus\{k\}$ of cardinality $t$.
Let $x_{i,j}=0$ for $i\notin S$ and $j\in L$. For any choice of $(x_{i,j})_{i\in S, j\in L}$ such that $X(S,L)+A(S,L)=X(S,L)$ is invertible, one has 
\begin{equation}\label{eqn:ahk}
0=x_{k,l}+a_{k,l}=f_{k,l}(x_{i,j})+a_{k,l},
\end{equation} 
since every matrix in $(A+\V)(S\cup\{k\},L\cup\{l\})$ has rank smaller than or equal to $t$.
%So for every non singular matrix in $Mat_{m\times n}(\F_q)[S,T]$, we have that
%\begin{equation*}
%g_{h,k}((y_{i,j}-a_{i,j})_{i,j})=g_{h,k}((y_{i,j})_{i,j})-g_{h,k}((a_{i,j})_{i,j})=-a_{h,k}.
%\end{equation*}
Lemma \ref{lemma3} together with (\ref{eqn:ahk}) implies that $a_{k,l}=0$.
%\eli{and $g_{h,k}(x_{i,j})$ does not depend on the variables $x_{i,j}$ for $i\in S$ and $j\in T$. In other words, $g_{h,k}(x_{i,j})$ only depends on the variables $x_{h,j}$ with $j\in \{1,\ldots,t\}$.} 
%So we conclude that $g_{h,k}((a_{i,j})_{i,j})=a_{h,k}$ for all $(h,k)$ with $k>t$, and this implies that $A\in V$.
This proves that $A\in \V$.
\end{proof}

\section{Anticode Bound and optimal anticodes}\label{sec:OAC} %in the sum-rank metric

In this section we prove an Anticode Bound for sum-rank metric codes. Our bound improves the bound from~\cite[Theorem~2.2]{BGROAC}. 

\begin{theorem}[Anticode Bound]\label{theoremab}
Let $\Cc\subseteq \MM$ be an $\F_q$-linear subspace. Then 
\begin{equation}\label{anticodebound}
\dim(\Cc)\leq \max_{C\in\Cc} \left\{\sum_{i=1}^\ell m_i \rk(C_i)\right\}.
\end{equation}
In particular, if $m_1=\ldots=m_\ell=m$, then $$\dim(\Cc)\leq m\maxsrk(\Cc).$$
\end{theorem}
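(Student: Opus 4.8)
The plan is to prove, by induction on $\ell$, a stronger \emph{coset} version of the inequality and then specialize. Precisely, I would establish that for every $\F_q$-linear subspace $\mathcal V\subseteq\MM$ and every $A\in\MM$,
\[
\max_{B\in\mathcal V}\sum_{i=1}^\ell m_i\rk(A_i+B_i)\ \ge\ \dim(\mathcal V).
\]
The theorem is then the special case $A=0$, $\mathcal V=\Cc$, and the ``in particular'' clause is immediate: if $m_1=\dots=m_\ell=m$ then $\sum_i m\rk(C_i)=m\,\srk(C)$, so the right-hand side of \eqref{anticodebound} equals $m\,\maxsrk(\Cc)$.

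First I would record the single-block reformulation of Theorem~\ref{affinespace}, which serves as the base case $\ell=1$. For a rank-metric code $\mathcal V\subseteq\F_q^{m\times n}$ and any $A$, put $s=\max_{B\in\mathcal V}\rk(A+B)$. When $s<n$, the contrapositive of Theorem~\ref{affinespace} applied with $t=s$ forbids $\dim(\mathcal V)>ms$, hence $\dim(\mathcal V)\le ms$; when $s=n$ the inequality $\dim(\mathcal V)\le mn=ms$ is automatic. Either way $m\max_{B\in\mathcal V}\rk(A+B)\ge\dim(\mathcal V)$, which is exactly the coset statement for one block.

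For the inductive step I would peel off the first block. Let $\pi\colon\MM\to\F_q^{m_1\times n_1}$ and $\pi'\colon\MM\to\F_q^{m_2\times n_2}\times\dots\times\F_q^{m_\ell\times n_\ell}$ be the two projections, and set $\mathcal K=\{B\in\mathcal V:\pi(B)=0\}$, so that by rank--nullity $\dim(\mathcal V)=\dim(\pi(\mathcal V))+\dim(\mathcal K)$, while $\pi'$ is injective on $\mathcal K$ and thus $\dim(\pi'(\mathcal K))=\dim(\mathcal K)$. Applying the base case to $\pi(\mathcal V)$ at the point $A_1$ produces $B^{(0)}\in\mathcal V$ with $m_1\rk\!\big(A_1+\pi(B^{(0)})\big)\ge\dim(\pi(\mathcal V))$. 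Writing $A'=A+B^{(0)}$ and applying the induction hypothesis to the admissible $(\ell-1)$-block space $\pi'(\mathcal K)$ at the point $\pi'(A')$ yields $K\in\mathcal K$ with $\sum_{i=2}^\ell m_i\rk(A'_i+K_i)\ge\dim(\mathcal K)$. Because every element of $\mathcal K$ has vanishing first block, the codeword $B=B^{(0)}+K\in\mathcal V$ satisfies $A_1+B_1=A'_1$ and $A_i+B_i=A'_i+K_i$ for $i\ge 2$, so $\sum_{i=1}^\ell m_i\rk(A_i+B_i)\ge\dim(\pi(\mathcal V))+\dim(\mathcal K)=\dim(\mathcal V)$, closing the induction.

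The delicate point, and the reason for passing to the coset statement rather than inducting on the subspace bound directly, is this combination step. One cannot in general find a single codeword that simultaneously maximizes the rank in the first block and the weighted sum-rank in the remaining blocks: over small fields (already for $q=2$) the matrices attaining the maximal rank of a coset need not be unique up to scalars, and optimizing one block may destroy another. The coset formulation circumvents this obstacle precisely because the codewords that leave the first block fixed sweep out the \emph{full} linear space $\mathcal K$ in the remaining coordinates, so the induction hypothesis applies verbatim to the translated coset $\pi'(A')+\pi'(\mathcal K)$ with no loss. The weights $m_i$ and the ordering $m_1\ge\dots\ge m_\ell$ play no essential role beyond ensuring that each truncated system $(m_2,\dots,m_\ell)$, $(n_2,\dots,n_\ell)$ is again admissible.
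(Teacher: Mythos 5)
Your proof is correct and follows essentially the same route as the paper's: induction on the number of blocks $\ell$, the rank--nullity decomposition $\dim(\mathcal V)=\dim(\pi(\mathcal V))+\dim(\ker\pi\cap\mathcal V)$, and Theorem~\ref{affinespace} applied to a coset in the block being peeled off. The only cosmetic differences are that you carry the coset version of the inequality through the entire induction (the paper inducts on the subspace statement and invokes the coset result of Theorem~\ref{affinespace} only for the final block) and that you peel the first block rather than the last.
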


\begin{proof}
We proceed by induction on $\ell$. If $\ell=1$, then $\Cc$ is a rank-metric code, the sum-rank metric coincides with the rank metric, and the statement is Theorem~\ref{rankmetric}.

Let $\ell>1$. Let $\pi$ be the canonical projection from $\MM$ onto $\F_q^{m_1\times n_1}\times\dots\times \F_q^{m_{\ell-1}\times n_{\ell-1}}$ 
and let $\pi_{\ell}$ be the canonical projection from $\MM$ onto $\F_q^{m_\ell\times n_\ell}$. 
Define $\mathcal{A}= \pi(\Cc)$ and $\mathcal{B}= \pi_{\ell}(\pi^{-1}(0)\cap\Cc)$ and let $\tilde \Cc= \mathcal{A}\times \mathcal{B}$. Since $\dim(\pi^{-1}(0)\cap\Cc)=\dim(\pi_\ell(\pi^{-1}(0)\cap\Cc))=\dim(\mathcal{B})$, we have that
\begin{equation*}
\dim(\Cc)=\dim(\mathcal{A})+\dim(\pi^{-1}(0)\cap\Cc)=\dim(\tilde \Cc).
\end{equation*}
By the induction hypothesis there is $(C_1,\ldots,C_{\ell-1})\in \mathcal{A}$ such that 
\begin{equation*}
\sum_{i=1}^{\ell-1} m_i \rk(C_i)\geq \dim(\A)=\dim(\Cc)-\dim(\mathcal{B}).
\end{equation*}
Let $C_{\ell}\in \pi_{\ell}(\Cc)$ such that $(C_1,\dots,C_{\ell})\in \Cc$. By Theorem \ref{affinespace} there is a $B\in \mathcal{B}$ such that
$$\rk(C_{\ell}+ B)\geq \left\lceil\frac{\dim(\mathcal{B})}{m_\ell}\right\rceil.$$
Therefore
\begin{equation*}
\begin{split}
\sum_{i=1}^{\ell-1}m_i\rk(C_i)+m_\ell\rk(C_\ell+{B}) & \geq \dim(\Cc)-\dim(\mathcal{B})
+m_\ell\left\lceil\frac{\dim(\mathcal{B})}{m_\ell}\right\rceil
\\
& \geq\dim(\Cc).
\end{split}
\end{equation*}
The element $(C_1,\ldots,C_{\ell-1},C_\ell+B)\in\Cc$, since $(C_1,\ldots,C_{\ell-1},C_\ell)\in\Cc$ and $B\in\mathcal{B}$. This concludes the proof.
\end{proof}

Optimal sum-rank metric anticodes may now be defined as the codes which meet the Anticode Bound. 

\begin{definition}\label{defn:OAC}
A sum-rank metric code $\Cc\subseteq\MM$ is an optimal anticode if 
\begin{equation*}
\dim(\Cc)=\max_{C\in\Cc}\left\{ \sum_{i=1}^\ell m_i \rk(C_i)\right\}.
\end{equation*}
\end{definition}

\begin{remark}
In \cite{BGROAC}, the authors give a definition of $r$-anticode for $r$ a non-negative integer. In~\cite[Theorem~2.2]{BGROAC} they establish an upper bound for the dimension of an $r$-anticode. For a given $\Cc\subseteq \MM$ and $r=\maxsrk(\Cc)$, \cite[Theorem~2.2]{BGROAC} yields
\begin{equation}\label{ABBGR}
\dim(\Cc)\leq \max\left\{ \sum_{i=1}^\ell m_iu_i : \sum_{i=1}^\ell u_i=\maxsrk(\Cc), u_i\leq n_i \mbox{ for all } i\right\}.
\end{equation}
Notice that our Anticode Bound is tighter than (\ref{ABBGR}), since for all $C=(C_1,\ldots,C_\ell)\in\Cc$ there exist $u_1,\ldots,u_\ell\in\mathbb{Z}$ such that $\sum_{i=1}^\ell u_i=\maxsrk(\Cc)$ and $\rk(C_i)\leq u_i\leq n_i$ for all $i$. In particular, all codes that meet bound (\ref{ABBGR}) also meet our Anticode Bound. Moreover the bounds are different, as one can easily check by comparing Theorem \ref{thm:OAC} in this paper and \cite[Corollary 3.8]{BGROAC}. In \cite[Definition 2.3]{BGROAC}, the authors define optimal anticodes as those that meet the bound (\ref{ABBGR}). In particular, an optimal anticode according to \cite{BGROAC} is an optimal anticode according to Definition \ref{defn:OAC}, but the converse is not true in general. For example, the code $0\times\F_2\subseteq\F_2^{2\times 2}\times\F_2$ is an optimal anticode according to Definition \ref{defn:OAC}, but it does not meet (\ref{ABBGR}).
\end{remark}

A simple computation allows one to show that if $\Cc_i\subseteq\F_q^{m_i\times n_i}$ is an optimal anticode with respect to the rank metric for $i\in[\ell]$, then $\Cc_1\times\dots\times\Cc_\ell\subseteq\MM$ is an optimal anticode with respect to the sum-rank metric. Moreover, one has the following.
	
\begin{proposition}\label{propoac}
Let $\Cc\subseteq\MM$ be an optimal anticode and assume that $m_1=\ldots=m_\ell=m$. For $i\in[\ell]$ let $\pi_i:\MM\rightarrow\F_q^{m_i\times n_i}$ be the canonical projection. The following are equivalent:
\begin{enumerate}[$(1)$]
\item $\Cc=\Cc_1\times\dots\times\Cc_\ell$ and $\Cc_i$ is an optimal rank-metric anticode for $i\in[\ell]$.
\item $\maxsrk(\Cc)=\sum_{i=1}^{\ell}\maxrk(\pi_i(\Cc))$.
\end{enumerate}
\end{proposition}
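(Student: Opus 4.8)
The plan is to run both implications off two elementary facts, relying on the optimality of $\Cc$ only in the harder direction. Write $\Cc_i := \pi_i(\Cc) \subseteq \F_q^{m \times n_i}$ throughout. First I would record the two inequalities I will repeatedly use. Since $\Cc \subseteq \Cc_1 \times \dots \times \Cc_\ell$ and this product has dimension $\sum_{i=1}^\ell \dim(\Cc_i)$, we have $\dim(\Cc) \leq \sum_{i=1}^{\ell} \dim(\Cc_i)$, with equality precisely when $\Cc = \Cc_1 \times \dots \times \Cc_\ell$ (a subspace equal in dimension to the ambient product must be the whole product). Second, applying the rank-metric Anticode Bound (Theorem~\ref{rankmetric}) to each $\Cc_i$ gives $\dim(\Cc_i) \leq m\maxrk(\Cc_i)$, with equality precisely when $\Cc_i$ is an optimal rank-metric anticode. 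I would also note the always-valid bound $\maxsrk(\Cc) \leq \sum_{i=1}^\ell \maxrk(\Cc_i)$, obtained by bounding each coordinate $\rk(C_i) \leq \maxrk(\Cc_i)$ for $C = (C_1,\dots,C_\ell) \in \Cc$; condition $(2)$ is exactly the assertion that this bound is attained.

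For $(1) \Rightarrow (2)$ no optimality is needed: if $\Cc = \Cc_1 \times \dots \times \Cc_\ell$, then $\pi_i(\Cc) = \Cc_i$, and since the coordinates of a product may be chosen independently, the maximum of $\sum_i \rk(C_i)$ splits as a sum of coordinatewise maxima, giving $\maxsrk(\Cc) = \sum_{i=1}^\ell \maxrk(\Cc_i) = \sum_{i=1}^\ell \maxrk(\pi_i(\Cc))$, which is $(2)$.

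For the converse $(2) \Rightarrow (1)$ I would use that $\Cc$ is an optimal anticode, so Definition~\ref{defn:OAC} with $m_1 = \ldots = m_\ell = m$ gives $\dim(\Cc) = m\maxsrk(\Cc)$. Chaining the two inequalities above,
\[
m\maxsrk(\Cc) = \dim(\Cc) \leq \sum_{i=1}^\ell \dim(\Cc_i) \leq m\sum_{i=1}^\ell \maxrk(\Cc_i),
\]
and hypothesis $(2)$ makes the right-hand side equal to $m\maxsrk(\Cc)$, so the whole chain is a string of equalities. Equality in the first inequality yields $\Cc = \Cc_1 \times \dots \times \Cc_\ell$, and equality in the second forces each $\Cc_i$ to be an optimal rank-metric anticode, which together are exactly $(1)$.

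The step needing the most care, and the only genuine obstacle, is extracting per-coordinate optimality from the second equality, since a priori it holds only in aggregate as $\sum_i \dim(\Cc_i) = m\sum_i \maxrk(\Cc_i)$. This propagates to each coordinate because every summand individually satisfies $\dim(\Cc_i) \leq m\maxrk(\Cc_i)$, so two equal sums with term-by-term dominated entries must agree term by term. I should double-check that the equality conditions of both inequalities are stated sharply enough to be invoked as above, but beyond that no computation is required.
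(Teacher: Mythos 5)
Your proof is correct and follows essentially the same route as the paper: for $(2)\Rightarrow(1)$ the paper uses exactly the chain $m\maxsrk(\Cc)=\dim(\Cc)\leq\sum_i\dim(\pi_i(\Cc))\leq m\sum_i\maxrk(\pi_i(\Cc))$ and forces equality throughout, extracting the product structure from the first equality and per-coordinate optimality from the second. Your term-by-term propagation argument and your direct (optimality-free) treatment of $(1)\Rightarrow(2)$ are both sound and match what the paper dismisses as ``a simple computation.''
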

	
\begin{proof}
$(1)\implies (2)$ follows from a simple computation.
\begin{comment}
Since $\Cc=\Cc_1\times\dots\times\Cc_\ell$ and $\Cc$ is an optimal anticode we have that 
\begin{equation}\label{propoaceq1}
m\maxsrk(\Cc)=\dim(\Cc)=\sum_{i=0}^{\ell}\dim(\Cc_i).
\end{equation}
Moreover, by hypothesis every $\Cc_i$ is an optimal rank-metric anticode, so
\begin{equation}\label{propoaceq2}
\dim(\Cc_i)=m\maxrk(\Cc_i)
\end{equation}
Combining equations \eqref{propoaceq1} and \eqref{propoaceq2}, we obtain
\begin{equation*}
m\maxsrk(\Cc)=m\sum_{i=1}^{\ell}\maxrk(\Cc_i).
\end{equation*}
\end{comment}
		
\noindent $(2)\implies (1)$ Clearly, $\Cc\subseteq \prod_{i=1}^\ell\pi_i(\Cc)$, so
\begin{equation*}
m\maxsrk(\Cc)=\dim(\Cc)\leq\sum_{i=1}^\ell\dim(\pi_i(\Cc))\leq\sum_{i=1}^{\ell}m\maxrk(\pi_i(\Cc)).
\end{equation*}
Since $\maxsrk(\Cc)=\sum_{i=1}^{\ell}\maxrk(\pi_i(\Cc))$, we have that
\begin{equation*}
\dim(\Cc)= \dim\left(\prod_{i=1}^\ell\pi_i(\Cc)\right)\text{ and } \dim(\pi_i(C))=m\maxrk(\pi_i(\Cc)).
\end{equation*}
Therefore $\Cc= \prod_{i=1}^\ell\pi_i(\Cc)$ and $\Cc_i$ is an optimal rank-metric anticode for all $i\in[\ell]$.
\end{proof}

We will prove that optimal anticodes in the sum-rank metric are generated by their elements of maximum sum rank. We start by proving the result in the special case of rank-metric anticodes.

\begin{lemma}\label{lemmageneratingset}
Let $\Cc\subseteq\F_q^{m\times n}$ be an optimal anticode. Then $\Cc$ is generated by its elements of maximum rank.
\end{lemma}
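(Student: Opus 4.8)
The plan is to write $t=\maxrk(\Cc)$, so that optimality of the anticode gives $\dim(\Cc)=mt$ (this is equality in Theorem~\ref{rankmetric}), and to let $\mathcal{W}\subseteq\Cc$ denote the $\F_q$-span of the maximum-rank elements $\{C\in\Cc:\rk(C)=t\}$. The goal is to prove $\mathcal{W}=\Cc$. The case $t=0$ is trivial, since then $\Cc=0$, so I may assume $t\geq 1$; moreover $t\leq n$ because the rank of an $m\times n$ matrix is at most $n$, whence $0\leq t-1<n$ and Theorem~\ref{propq2} is applicable with parameter $t-1$. The key step will be the following claim: for every $\F_q$-subspace $\mathcal{V}\subseteq\Cc$ with $\dim(\mathcal{V})=m(t-1)$, one has $\mathcal{W}+\mathcal{V}=\Cc$.

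To establish the claim I would take an arbitrary $A\in\Cc$. If $A\in\mathcal{V}$ there is nothing to prove, so suppose $A\notin\mathcal{V}$. Applying Theorem~\ref{propq2} to the subspace $\mathcal{V}$ (of dimension $m(t-1)$) and the element $A$, I obtain $B\in\mathcal{V}$ with $\rk(A+B)\geq (t-1)+1=t$. Since $A+B\in\Cc$ and $\maxsrk(\Cc)=\maxrk(\Cc)=t$, this forces $\rk(A+B)=t$, so that $A+B$ is a maximum-rank element of $\Cc$ and hence lies in $\mathcal{W}$ by definition. Therefore $A=(A+B)-B\in\mathcal{W}+\mathcal{V}$, which proves $\Cc\subseteq\mathcal{W}+\mathcal{V}$; the reverse inclusion is immediate.

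It then remains to turn this claim into the equality $\mathcal{W}=\Cc$ by a dimension count. First I would observe that $\dim(\mathcal{W})\geq m(t-1)$: if not, I could enlarge $\mathcal{W}$ to a subspace $\mathcal{V}\subseteq\Cc$ of dimension exactly $m(t-1)$ (possible since $m(t-1)\leq mt$), and the claim would give $\Cc=\mathcal{W}+\mathcal{V}=\mathcal{V}$, forcing $mt=m(t-1)$, a contradiction. Knowing $\dim(\mathcal{W})\geq m(t-1)$, I can instead choose a subspace $\mathcal{V}\subseteq\mathcal{W}$ with $\dim(\mathcal{V})=m(t-1)$, and the claim yields $\Cc=\mathcal{W}+\mathcal{V}=\mathcal{W}$, as desired. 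The main obstacle that this argument circumvents is that $\dim(\mathcal{W})$ need not be a multiple of $m$, so Theorem~\ref{propq2} cannot be applied to $\mathcal{W}$ directly; the device of testing $\mathcal{W}$ against auxiliary subspaces $\mathcal{V}$ of the admissible dimension $m(t-1)$ is exactly what makes the coset result usable, and the only further points needing care are the verification $0\leq t-1<n$ and the two-sided dimension argument pinning down $\dim(\mathcal{W})$.
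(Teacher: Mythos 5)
Your proof is correct, but it takes a genuinely different route from the paper's. The paper first invokes the classification of optimal rank-metric anticodes (up to code equivalence, $\Cc$ is the space of all matrices with row space in a fixed $t$-dimensional subspace, so one may assume $\Cc=\F_q^{m\times t}$) and then writes down explicit rank-$t$ generators: for each standard basis matrix $E_{i,j}$ it finds a permuted ``identity'' $S_{i,j}I$ of rank $t$ with $(S_{i,j}I)_{i,j}=0$, so that $E_{i,j}=(S_{i,j}I+E_{i,j})-S_{i,j}I$ is a difference of two rank-$t$ elements. You instead bypass the classification entirely: writing $t=\maxrk(\Cc)$ and $\dim(\Cc)=mt$, you use Theorem~\ref{propq2} (with parameter $t-1$) to show that every coset of any $m(t-1)$-dimensional subspace $\mathcal{V}\subseteq\Cc$ meets the set of rank-$t$ elements, whence $\mathcal{W}+\mathcal{V}=\Cc$ for $\mathcal{W}$ the span of the maximum-rank elements; your two-step dimension argument (first $\dim(\mathcal{W})\geq m(t-1)$, then take $\mathcal{V}\subseteq\mathcal{W}$) correctly converts this into $\mathcal{W}=\Cc$, and the hypothesis check $0\leq t-1<n$ is valid since $t\leq n\leq m$. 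The trade-off: the paper's argument is explicit and elementary but leans on an external structure theorem for optimal anticodes, while yours is classification-free and uses only the coset version of Meshulam's theorem already proved in Section~\ref{sec:rank}, at the cost of not exhibiting concrete generators. Both are complete; there is no circularity in your use of Theorem~\ref{propq2}, which does not depend on this lemma.
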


\begin{proof}
Let $t=\maxrk(\Cc)$, then $\dim(\Cc)=mt$.
Up to code equivalence we may assume that $\Cc$ consists of all matrices whose rowspace is contained in $\langle e_1,\ldots,e_t\rangle$, where $e_1,\ldots,e_t\in\F_q^n$ are the first $t$ elements of the standard basis.
Therefore it suffices to prove the statement for $\Cc=\F_q^{m\times t}$. 
Let $\{E_{i,j}\}_{1\leq i\leq m,1\leq j\leq t}$ be the standard basis of $\F_q^{m\times t}$. Let $I=\sum_{i=1}^t E_{i,i}\in\F_q^{m\times t}$. For each $(i,j)\in[m]\times[t]$ there exists a permutation matrix $S_{i,j}\in\F_q^{m\times m}$ such that $(S_{i,j}I)_{i,j}=0$. 
Therefore one can write $E_{i,j}=(S_{i,j}I+E_{i,j})-S_{i,j}I$, with 
$\rk(S_{i,j}I)=\rk(S_{i,j}I+E_{i,j})=t$. This implies that  $\{S_{i,j}I+E_{i,j},S_{i,j}I\}_{1\leq i\leq m,1\leq j\leq t}$ is a set of matrices of rank $t$ which generates $\F_q^{m\times t}$. 
\end{proof}

The next observations will be useful in order to extend the result of Lemma \ref{lemmageneratingset} to optimal anticodes in the sum-rank metric.

\begin{lemma}\label{maxrank_gen}
Let $m\geq 2$ and let $\Cc\subseteq\F_2^{m\times n}$ be an optimal rank-metric anticode of $\maxrk(\Cc)=t$. Then every element of $\Cc$ of rank $t$ can be written as the sum of two elements of $\Cc$ of rank $t$.
\end{lemma}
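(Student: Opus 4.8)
The plan is to reduce to a concrete normal form and then exhibit an explicit splitting, the point being that over $\F_2$ one cannot simply rescale a matrix to split it. First I would invoke the reduction already used in the proof of Lemma~\ref{lemmageneratingset}: up to code equivalence (left multiplication by $\GL_m(\F_2)$, right multiplication by an invertible matrix, and transposition when $m=n$) every optimal rank-metric anticode of maximum rank $t$ may be assumed to be $\Cc=\F_2^{m\times t}$, i.e.\ all matrices supported on the first $t$ columns. Since code equivalences are $\F_2$-linear and rank-preserving, the statement is invariant under them, so it suffices to treat this $\Cc$. An element of $\Cc$ of rank $t$ is then a full-column-rank matrix $M'\in\F_2^{m\times t}$, and I want $A,B\in\F_2^{m\times t}$ of rank $t$ with $M'=A+B$; equivalently, since the characteristic is $2$, I must find $A$ of rank $t$ with $M'+A$ of rank $t$.

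Next I would further normalize $M'$. Left multiplication by $P\in\GL_m(\F_2)$ and right multiplication by $Q\in\GL_t(\F_2)$ (acting on the first $t$ columns) both preserve $\Cc$ and ranks and act compatibly on any candidate decomposition, so I may assume $M'=\begin{pmatrix} I_t\\ 0\end{pmatrix}$. The crux is that the decomposition $M'=\alpha M'+(1-\alpha)M'$, which works over any field with $q>2$ by choosing $\alpha\notin\{0,1\}$, is unavailable over $\F_2$; this is exactly why the hypothesis is stated over $\F_2$ and why a genuine construction is needed.

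For $t\ge 2$ I would take $R\in\GL_t(\F_2)$ to be the companion matrix of an irreducible polynomial of degree $t$ over $\F_2$, which exists for every $t$. Since the polynomial is irreducible of degree at least $2$, it has no root in $\F_2$, so neither $0$ nor $1$ is an eigenvalue of $R$; hence both $R$ and $R+I_t$ are invertible. Setting $A=\begin{pmatrix} R\\ 0\end{pmatrix}$ and $B=M'+A=\begin{pmatrix} I_t+R\\ 0\end{pmatrix}$ gives two elements of $\Cc$ of rank $t$ summing to $M'$. The remaining case $t=1$ is where the hypothesis $m\ge 2$ is essential: here $M'=e_1$, and I would use a coordinate outside the support of $M'$, which is available precisely because $m>t$, taking $A=e_2$ and $B=e_1+e_2$, both of rank $1$.

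The main obstacle is the simultaneous full-rank requirement on $A$ and $M'+A$ over $\F_2$, where rescaling is unavailable. For $t=1$ the only nonzero scalar forces $M'+A=0$ unless one uses an extra coordinate, which exists only because $m\ge 2>t$. For $t\ge 2$ one must produce $A$ with both $A$ and $A+I_t$ nonsingular, i.e.\ with neither $0$ nor $1$ as an eigenvalue; the companion matrix of an irreducible polynomial realizes this, and this is the step requiring the most care. The rank computations themselves are routine.
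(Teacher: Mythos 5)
Your proof is correct, and after the initial reduction (which is the same as the paper's: up to code equivalence reduce to $\Cc=\F_2^{m\times t}$) it takes a genuinely different route for the core splitting. The paper works directly with an arbitrary full-rank $C=(c_1,\ldots,c_t)$ and exhibits, by explicit column combinations depending on the parity of $t$, two matrices $C_1,C_2$ with $C=C_1+C_2$ and with the \emph{same column space} as $C$, hence of full rank; no further normalization and no external input is needed. You instead normalize to $M'=\left(\begin{smallmatrix} I_t\\ 0\end{smallmatrix}\right)$ via $\GL_m\times\GL_t$ and then produce $R\in\GL_t(\F_2)$ with $R+I_t$ also invertible by taking the companion matrix of an irreducible polynomial of degree $t$; this is conceptually clean (the whole problem becomes ``find $R$ with neither $0$ nor $1$ as an eigenvalue'') but imports the existence of irreducible polynomials of every degree over $\F_2$. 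In fact irreducibility is more than you need: the companion matrix of any $f$ with $f(0)\neq 0$ and $f(1)\neq 0$, e.g.\ $f=x^t+x+1$, already has $\det(R)=f(0)\neq0$ and $\det(R+I_t)=f(1)\neq 0$. Your treatment of $t=1$ using $m\geq 2$ matches the paper's. Both arguments are valid; the paper's buys self-containedness and the extra information that the two summands can be chosen with the same column space as $C$, while yours isolates the obstruction over $\F_2$ more transparently.
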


\begin{proof}
Up to code equivalence we may assume that $\Cc$ consists of all matrices whose rowspace is contained in $\langle e_1,\ldots,e_t\rangle$, where $e_1,\ldots,e_t\in\F_2^n$ are the first $t$ elements of the standard basis. Therefore, it suffices to show that every element of full rank in $\F_2^{m\times t}$ can be written as the sum of two elements of $\F_2^{m\times t}$ of full rank. Let $C=(c_1,\ldots,c_t)\in\F_2^{m\times t}$ be the matrix whose columns are $c_1,\ldots,c_t\in\F_2^m$. Assume that $\rk(C)=t$. If $t=1$, let $\tilde{C}\in\Cc\setminus\{C,0\}$. Notice that $\tilde{C}$ exists, since $m\geq 2$. Then $\tilde{C},C+\tilde{C}$ are elements of rank 1 and $C=\tilde{C}+(C+\tilde{C})$. 
If $t$ is even, then $C=C_1+C_2$ where \begin{eqnarray*}
C_1=(c_1+c_2,c_1,c_3+c_4,c_3,\ldots,c_{t-1}+c_t,c_{t-1}),\\ C_2=(c_2,c_1+c_2,c_4,c_3+c_4,\ldots,c_t,c_{t-1}+c_t).\end{eqnarray*}
If $t\neq 1$ is odd, then $C=C_1+C_2$ where \begin{eqnarray*}
C_1=(c_1+c_2,c_3,c_1,c_4+c_5,c_4,\ldots,c_{t-1}+c_t,c_{t-1}),\\ C_2=(c_2,c_3+c_2,c_1+c_3,c_5,c_4+c_5,\ldots,c_t,c_{t-1}+c_t).
\end{eqnarray*}
Since $C_1$ and $C_2$ have the same column space as $C$, they have full rank.
\end{proof}

\begin{theorem}\label{corollarygeneratingset}
Let $\Cc=\Cc_1\times\dots\times\Cc_\ell\subseteq\MM$, where $\Cc_i$ is an optimal rank-metric anticode for all $i\in[\ell]$. If either $m_{\ell-1}\geq 2$ or $q\neq 2$, then $\Cc$ is generated by its elements of maximum sum-rank. 
\end{theorem}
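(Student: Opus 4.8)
The plan is to show that $G$, the $\F_q$-span of the elements of $\Cc$ of maximum sum-rank, equals $\Cc$. Since $G\subseteq\Cc$ is clear, and since the product $\Cc=\Cc_1\times\dots\times\Cc_\ell$ decomposes as $\Cc=\sum_{i=1}^\ell\iota_i(\Cc_i)$, where $\iota_i\colon\F_q^{m_i\times n_i}\to\MM$ places its argument in position $i$ and $0$ elsewhere, it suffices to prove $\iota_i(\Cc_i)\subseteq G$ for every $i\in[\ell]$. First I would record that, as $\Cc$ is a product, $\maxsrk(\Cc)=\sum_{i=1}^\ell\maxrk(\Cc_i)$ and a maximum sum-rank element is exactly a tuple $(C_1,\dots,C_\ell)$ with $\rk(C_i)=t_i:=\maxrk(\Cc_i)$ for all $i$.

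The core device is a difference trick. Fix $i$, and for each $j\neq i$ fix a maximum-rank element $F_j\in\Cc_j$ (one exists by definition of $t_j$, possibly $F_j=0$). For any two maximum-rank elements $D,D'\in\Cc_i$, the tuples obtained by placing $D$, respectively $D'$, in position $i$ and the fixed $F_j$ elsewhere both have maximum sum-rank, hence lie in $G$; their difference is $\iota_i(D-D')$. Thus $\iota_i(W_i)\subseteq G$, where $W_i$ is the $\F_q$-span of all differences of maximum-rank elements of $\Cc_i$. It then remains to prove $W_i=\Cc_i$. By Lemma~\ref{lemmageneratingset} the maximum-rank elements span $\Cc_i$, so it is enough to show that each maximum-rank $D$ lies in $W_i$. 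If $q\neq 2$, pick $\alpha\in\F_q^*\setminus\{1\}$; then $\alpha D$ is again maximum-rank and $D=(1-\alpha)^{-1}(D-\alpha D)\in W_i$. If $q=2$ and $m_i\geq 2$, Lemma~\ref{maxrank_gen} writes $D=D_1+D_2$ with $D_1,D_2$ maximum-rank, whence $D_1=D-D_2\in W_i$ and $D_2=D-D_1\in W_i$, so $D\in W_i$. In either case $W_i=\Cc_i$ and hence $\iota_i(\Cc_i)\subseteq G$.

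Under the hypothesis, if $q\neq 2$ then the above applies to every coordinate and we are finished. The remaining case $q=2$ with $m_{\ell-1}\geq 2$ is where I expect the main obstacle. Since $m_1\geq\dots\geq m_\ell$, every coordinate $i\leq\ell-1$ has $m_i\geq 2$ and is covered above, but the last coordinate may have $m_\ell=1$, where the difference trick degenerates: an optimal anticode $\Cc_\ell\subseteq\F_2^{1\times n_\ell}$ is either $0$ or one-dimensional, say $\Cc_\ell=\langle v\rangle$, and its only maximum-rank element is $v$, so all differences vanish and $W_\ell=0$. I would repair this by bootstrapping from the other coordinates: choose a maximum sum-rank element $M=(F_1,\dots,F_{\ell-1},v)\in G$ and observe that $\iota_\ell(v)=M-\sum_{i=1}^{\ell-1}\iota_i(F_i)$. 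Each $\iota_i(F_i)$ lies in $G$ by the previous paragraph, already established for $i\leq\ell-1$, so $\iota_\ell(v)\in G$ and therefore $\iota_\ell(\Cc_\ell)\subseteq G$.

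Combining the cases yields $\iota_i(\Cc_i)\subseteq G$ for all $i$, so $\Cc=\sum_i\iota_i(\Cc_i)\subseteq G\subseteq\Cc$, that is $G=\Cc$. This also explains why the hypothesis is exactly the right one: the difference trick can only fail in a coordinate with $m_i=1$ over $\F_2$, the monotonicity of the $m_i$ confines any such bad coordinate to the last position, and that single bad position can always be recovered from the remaining ones.
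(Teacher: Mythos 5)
Your proposal is correct and follows essentially the same route as the paper's proof: both reduce to showing $\iota_i(\Cc_i)$ lies in the span $G$ of maximum sum-rank elements, using the scaling/difference trick with $\alpha\neq 0,1$ when $q\neq 2$, Lemma~\ref{maxrank_gen} to split a maximum-rank element as a sum of two such when $q=2$ and $m_i\geq 2$, recovery of the last coordinate by subtracting the others, and Lemma~\ref{lemmageneratingset} to pass from maximum-rank elements to all of $\Cc_i$. The only cosmetic difference is your intermediate space $W_i$ of differences, where the paper obtains $\iota_i(C_i)=C'+C''$ directly.
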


\begin{proof}
Let $C=(C_1,\ldots,C_\ell)\in\Cc$ be such that $$\sum_{i=1}^\ell m_i\rk(C_i)=\max_{D\in\Cc}\left\{ \sum_{i=1}^\ell m_i \rk(D_i)\right\}.$$ 
Since $\Cc$ is a product, then $C_i$ is an element of maximum rank in $\Cc_i$ for all $1\leq i\leq \ell$. If $q\neq 2$, let $\alpha\in\F_q\setminus\{0,1\}$. Then $$(0,\ldots,0,C_i,0,\ldots,0)\in\langle(C_1,\ldots,C_\ell),(C_1,\ldots,C_{i-1},\alpha C_i,C_{i+1},\ldots,C_\ell)\rangle.$$ 
Therefore $\Cc$ is generated by its element of maximum sum-rank, since each $\Cc_i$ is generated by its elements of maximum rank by Lemma \ref{lemmageneratingset}. 

If $q=2$ and $i\neq \ell$, then by Lemma~\ref{maxrank_gen} there exist $C_i^\prime,C_i^{\prime\prime}\in\Cc_i$ of maximum rank such that $C_i=C_i^\prime+C_i^{\prime\prime}$. Let $C^\prime=(C_1,\ldots,C_{i-1},C_i^\prime,C_{i+1},\ldots,C_\ell)$ and $C^{\prime\prime}=(C_1,\ldots,C_{i-1},C_i^{\prime\prime},C_{i+1},\ldots,C_\ell)$. Then $$(0,\ldots,0,C_i,0,\ldots,0)\in\langle C^\prime, C^{\prime\prime}\rangle.$$ 
Since $C$ and $(0,\ldots,0,C_i,0,\ldots,0)$, $i\in[\ell-1]$, belong to the subcode of $\Cc$ generated by its codewords of maximum sum-rank, then also $(0,\ldots,0,C_\ell)$ does. Therefore $\Cc$ is generated by its element of maximum sum-rank.
\end{proof}

\begin{example}
For $\ell\geq 2$ and $m_{\ell-1}=1$, the code $\Cc=0\oplus\ldots\oplus 0 \oplus\F_2\oplus\F_2$ is an optimal anticode, which is not generated by its unique element $(0,\ldots,0,1,1)$ of maximum sum-rank.
\end{example}

The next result on generating sets of optimal binary anticodes in the Hamming metric will also be useful. 

\begin{lemma}\label{genshighwt}
Let $\Cc\subseteq\F_2^\ell$ be an optimal anticode of $\dim(\Cc)=t\geq 1$. 
Then $\Cc$ is generated by its elements of weight $t$ and $t-1$.
\end{lemma}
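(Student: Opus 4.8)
The plan is to argue by induction on the length $\ell$, reducing via puncturing at a carefully chosen coordinate, and using the Anticode Bound to guarantee that puncturing preserves the optimal-anticode property. Throughout, recall that $\Cc\subseteq\F_2^\ell$ being an optimal anticode means exactly that $\dim(\Cc)=\max_{c\in\Cc}\mathrm{wt}(c)=t$, where $\mathrm{wt}$ is the Hamming weight, i.e.\ the sum-rank weight specialized to $m_1=\cdots=m_\ell=1$. In the base case $\ell=t$ one has $\Cc=\F_2^t$; here the all-ones vector $\mathbf 1$ has weight $t$, each $\mathbf 1+e_i$ has weight $t-1$, and since $e_i=\mathbf 1+(\mathbf 1+e_i)$ over $\F_2$, the vectors of weight $t$ and $t-1$ already generate $\F_2^t$. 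This settles the base case.

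For the inductive step assume $\ell>t$. First I fix a codeword $c_0\in\Cc$ of weight $t$ and set $T=\mathrm{supp}(c_0)$, so that $|T|=t<\ell$ and there is a coordinate $k_0\in[\ell]\setminus T$. The key observation is that $e_{k_0}\notin\Cc$: otherwise $c_0+e_{k_0}\in\Cc$ would have weight $t+1$ (as $k_0\notin T$), contradicting $\max_{c\in\Cc}\mathrm{wt}(c)=t$. Consequently the puncturing map $P\colon\Cc\to\F_2^{\ell-1}$ deleting coordinate $k_0$ is injective, so $\tilde\Cc=P(\Cc)$ has dimension $t$. Puncturing cannot increase weights, so $\max_{c}\mathrm{wt}(c)\le t$ on $\tilde\Cc$, while the Anticode Bound (Theorem~\ref{theoremab} with all $m_i=1$) gives $\max_{c}\mathrm{wt}(c)\ge\dim(\tilde\Cc)=t$. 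Hence $\tilde\Cc\subseteq\F_2^{\ell-1}$ is again an optimal anticode of dimension $t$, and the induction hypothesis applies to it.

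To finish, I transport a generating set back through the isomorphism $P$. By induction $\tilde\Cc$ is generated by codewords of weight $t$ and $t-1$; let $c$ be the $P$-preimage of such a generator $\tilde c$, so that $c$ is $\tilde c$ with a $0$ or $1$ reinserted in position $k_0$, whence $\mathrm{wt}(c)=\mathrm{wt}(\tilde c)+c_{k_0}$. If $\mathrm{wt}(\tilde c)=t$ then $\mathrm{wt}(c)\in\{t,t+1\}$, and weight $t+1$ is impossible, so $\mathrm{wt}(c)=t$; if $\mathrm{wt}(\tilde c)=t-1$ then $\mathrm{wt}(c)\in\{t-1,t\}$. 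In either case $c$ has weight $t$ or $t-1$, and since $P$ is a linear isomorphism these preimages generate $\Cc$, completing the induction.

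The only real obstacle is ensuring that the reduction preserves the hypotheses: one must puncture at a coordinate that keeps the map injective (otherwise the dimension drops), and one must know that the punctured code is still an optimal anticode. Both are handled by the weight-$t$ codeword $c_0$ together with the Anticode Bound — the former rules out $e_{k_0}\in\Cc$, and the latter forces the maximum weight back up to the dimension. The weight bookkeeping in the lifting step is then automatic, precisely because a weight-$t$ codeword of $\tilde\Cc$ cannot lift to weight $t+1$.
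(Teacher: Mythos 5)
Your proof is correct, but it takes a genuinely different route from the paper's. The paper's argument is a direct, one-step construction: take a generator matrix $G$ of $\Cc$ in reduced row echelon form with rows $g_1,\ldots,g_t$ and set $v=g_1+\cdots+g_t$; since $v$ has a $1$ in each of the $t$ pivot columns and the maximum weight is $t$, the vector $v$ is supported exactly on the pivots, whence each $v+g_i$ has weight $t-1$ or $t$, and $\{v,v+g_1,\ldots,v+g_t\}$ generates $\Cc$ because $g_i=v+(v+g_i)$. You instead induct on the length $\ell$, puncturing at a coordinate $k_0$ outside the support of a maximum-weight codeword; the key points — that $e_{k_0}\notin\Cc$ (so puncturing is injective on $\Cc$), that the Anticode Bound forces the punctured code to remain an optimal anticode of the same dimension, and that weight-$t$ generators of the punctured code cannot lift to weight $t+1$ — are all handled correctly, including the base case $\ell=t$ where $\Cc=\F_2^t$. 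The paper's proof is shorter and exhibits an explicit generating set in one stroke; your induction is longer but makes transparent exactly which structural features (existence of a full-weight codeword, the bound $\dim(\Cc)\leq\max_{c\in\Cc}\mathrm{wt}(c)$) are being used, and the puncturing-preserves-optimality observation is a reusable fact in its own right. Both arguments are complete and elementary.
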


\begin{proof}
%If $q\neq 2$, then up to code equivalence $\Cc=\langle e_1,\ldots,e_t\rangle$. By Theorem \ref{corollarygeneratingset} $\Cc$ is generated by its elements of weight $t$. Assume now that $q=2$ and

Let $G$ be a generator matrix of $\Cc$ and assume that $G$ is in reduced row echelon form. Denote by $g_1,\ldots,g_t$ the rows of $G$. Let $v=g_1+\ldots+g_t$. Then the vectors $v,v+g_1,\ldots,v+g_t$ have weight $t-1$ or $t$ and are a system of generators of $\Cc$, since $g_i=v+(v+g_i)$ for all $i$.
\end{proof}

The following technical lemma will be used in the proof of Theorem \ref{thm:OAC}.

\begin{lemma}\label{lemma:case2}
Let $q=2$, $\ell\geq 2$, $m_1 = n_1 = 2$, and let $k=\max\{i\in[\ell]\mid m_i>1\}$. Let $\Cc \subseteq \MM,\A = \pi(\Cc), \mathcal{B} = \pi_1(\pi^{-1}(0) \cap \Cc)$, where $\pi:\MM\rightarrow\F_2^{m_2\times n_2}\times\dots\times \F_2^{m_{\ell}\times n_{\ell}}$ and $\pi_1:\MM\rightarrow\F_2^{m_1\times n_1}$ are the canonical projections.
If $\dim(\mathcal{B}) = 2$ and $\A = \prod_{i=2}^k \Cc_i \times \Cc'$  for optimal anticodes $\Cc' \subseteq \F_2^{\ell - k}$ and $\Cc_i \subseteq \F_2^{m_i \times n_i}$ for all $i \in [k] \setminus \{ 1\}$, then one of the following holds:
\begin{itemize}
\item[(i)] $\mathcal{B}$ is an optimal anticode,
\item[(ii)] There is $B \in \mathcal{B}$ and $C=(C_1,\ldots,C_\ell)\in \Cc$ with $\sum_{i=2}^k m_i\rk(C_i)+\mathrm{wt}(C_{k+1},\dots,C_{\ell})\geq\sum_{i=2}^{k} m_i\maxrk(\Cc_i)+\maxwt(\Cc')-1$, such that
$$\rk(B+C_1)=2.$$
\end{itemize}
\end{lemma}

\begin{proof}
If $\maxrk(\mathcal{B}) = 1$, then $\mathcal{B}$ is an optimal anticode. Assume therefore that $\maxrk(\mathcal{B}) = 2$. 
Let $G$ be a generator matrix of $\Cc'$ and assume that $G$ is in reduced row echelon form and that $\dim(\Cc')=t$. If $t=0$, then 
let $D=(D_2, \dots , D_k), E=(E_2, \dots , E_k) \in \prod_{i=2}^k \Cc_i$ be codewords such that each component of $D,E,D+E$ has maximal rank. Such matrices exist, since each $\Cc_i$ is an optimal anticode. Let $D_1, E_1 \in \Cc_1$ be such that $(D_1, \dots , D_k,0)$, $(E_1, \dots , E_k,0) \in \Cc$. If one of $D_1, E_1 $, and $D_1 + E_1$ is zero, then we conclude by taking $B$ of rank 2.  If there is a rank 2 element among $D_1,E_1$, and $D_1 + E_1$, then we conclude by taking $B = 0$. If $D_1, E_1, D_1 + E_1$ all have rank 1, then again we easily conclude. In fact, either $\langle D_1,E_1\rangle\cap\mathcal{B}\neq 0$, or $$|(D_1+\mathcal{B})\cup(E_1+\mathcal{B})\cup(D_1+E_1+\mathcal{B})|=12,$$ but in $\F_2^{2\times 2}$ we have only 9 elements of rank 1. 

Suppose now that $t\geq 1$ and let $g_1,\dots,g_t,v$ as in the proof of Lemma \ref{genshighwt}. For every $i\in[t]$, there exists $G^i_1\in \F^{2\times2}_2$ such that $G^i=(G^i_1,0,\dots,0,g_i)\in\Cc$. If for every $i\in[t]$, $G^i_1\in\mathcal{B}$, then $\Cc=\mathcal{D}\times\Cc'$, where $\mathcal{D}\subseteq\F_2^{m_{1}\times n_{1}}\times\ldots\times\F_2^{m_{k}\times n_{k}}$. Therefore we reduce to the situation $t=0$, which we treated above.
Hence we assume without loss of generality that $G^1_1\notin\mathcal{B}$. Let $C=\sum_{i=1}^tG^i=(C_1,0,\dots,0,v)$ and $D=(D_1,D_2,\dots,D_k,0,\dots,0)$ such that $D_i$ has max rank in $\Cc_i$ for $i\in[k]\setminus\{1\}$. If either $D_1+C_1$ or $D_1+C_1+G_1^1$ belongs to $\mathcal{B}$, then we conclude.
If $D_1+C_1$,$D_1+C_1+G_1^1\notin\mathcal{B}$, then since $G_1^1\notin\mathcal{B}$, we have that
$$(D_1+C_1+G_1^1+\mathcal{B})\cap(D_1+C_1+\mathcal{B})=\emptyset,$$
and 
$$((D_1+C_1+G_1^1+\mathcal{B})\cup(D_1+C_1+\mathcal{B}))\cap(\langle G_1^1\rangle+\mathcal{B})=\emptyset.$$
Notice that $\F_2^{2\times 2}$ consists of the zero matrix, 9 elements of rank 1, and 6 elements of rank 2. 
In $\langle G_1^1\rangle+\mathcal{B}$  there are at least two elements of rank $1$, since $\dim(\langle G_1^1\rangle+\mathcal{B})=3$. Therefore, in $(D_1+C_1+G_1^1+\mathcal{B})\cup(D_1+C_1+\mathcal{B})$ there must be at least an element of rank 2. We conclude, since the elements $D+C$ and $D+C+G^1$ satisfy the condition from (ii).
\end{proof}

In the next theorem we show that the optimal anticodes in the sum-rank metric are products of optimal anticodes in the rank metric and an optimal anticode in the Hamming metric.

\begin{theorem}\label{thm:OAC}
Let $k=0$ if $m_1=1$ and $k=\max\{i\in[\ell]\mid m_i>1\}$ otherwise. 
A code $\Cc\subseteq\MM$ is an optimal anticode if and only if there is an optimal anticode $\Cc^\prime\subseteq\F_q^{\ell-k}$ and optimal anticodes $\Cc_i\subseteq\F_q^{m_i\times n_i}$ for all $i\in[k]$ such that $\Cc=\prod_{i=1}^k\Cc_i\times\Cc^\prime$. 
\end{theorem}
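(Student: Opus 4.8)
The plan is to prove the two implications separately: the ``if'' direction by a dimension count, and the ``only if'' direction by induction on $\ell$, peeling off the last component.

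For the ``if'' direction, suppose $\Cc=\prod_{i=1}^k\Cc_i\times\Cc'$ with each $\Cc_i$ an optimal rank-metric anticode and $\Cc'\subseteq\F_q^{\ell-k}$ an optimal Hamming anticode. Then $\dim(\Cc)=\sum_{i=1}^k m_i\maxrk(\Cc_i)+\dim(\Cc')$. Since the sum-rank weight of an element of a product is the sum of the weights of its components and the last $\ell-k$ components carry the Hamming metric, the right-hand side of~\eqref{anticodebound} equals $\sum_{i=1}^k m_i\maxrk(\Cc_i)+\maxsrk(\Cc')$; the optimality of $\Cc'$ gives $\dim(\Cc')=\maxsrk(\Cc')$, and hence equality in the Anticode Bound. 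One cannot simply merge $\Cc'$ into the rank factors, since an optimal Hamming anticode need not be supported on a fixed set of coordinates (e.g.\ $\langle(1,1,0),(0,1,1)\rangle\subseteq\F_2^3$).

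For the ``only if'' direction I argue by induction on $\ell$, the case $\ell=1$ being trivial. Let $\pi$ and $\pi_\ell$ be the projections of $\MM$ onto its first $\ell-1$ components, respectively its last component, and set $\A=\pi(\Cc)$ and $\B=\pi_\ell(\pi^{-1}(0)\cap\Cc)$, so that $\dim(\Cc)=\dim(\A)+\dim(\B)$. Arguing as in the proof of Theorem~\ref{theoremab} with these projections — choose $D\in\A$ attaining $M_\A:=\max_{D'\in\A}\sum_{i<\ell}m_i\rk(D_i')$, extend it to $\Cc$ with a last component of maximal rank, and combine Theorem~\ref{affinespace} and Theorem~\ref{propq2} with the optimality of $\Cc$ — one shows that $\A$ and $\B$ are optimal anticodes and, crucially, obtains the lifting statement: for every $D\in\A$ with $\sum_{i<\ell}m_i\rk(D_i)=M_\A$, the coset $\{c:(D,c)\in\Cc\}$ meets $\B$, i.e.\ $(D,0)\in\Cc$. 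Since $0\times\B\subseteq\Cc$ by definition, to conclude $\Cc=\A\times\B$ it then suffices to prove $\A\times0\subseteq\Cc$, after which the inductive hypothesis applies to the optimal anticode $\A$.

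If $m_\ell\ge 2$, then $k=\ell$ and $\A$ has only components with $m_i\ge 2$, so by induction $\A=\prod_{i=1}^{\ell-1}\Cc_i$ is a product of optimal rank anticodes. By Theorem~\ref{corollarygeneratingset}, $\A$ is generated by its elements of maximal sum-rank, each of which lifts into $\Cc$ by the previous paragraph; hence $\A\times0\subseteq\Cc$ and $\Cc=\prod_{i=1}^{\ell}\Cc_i$. The delicate case is $m_\ell=1$, where the last component is a Hamming coordinate and the lifting generators may fail to generate $\A$ over $\F_2$. Here $\B\in\{0,\F_q\}$. If $\B=\F_q$, then $0\times\F_q\subseteq\Cc$ and subtracting shows $(D,0)\in\Cc$ for all $D\in\A$, so $\Cc=\A\times\F_q$ and the Hamming factor of $\A$ simply enlarges by one coordinate. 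If $\B=0$, then $\Cc$ is the graph of a linear form $f\colon\A\to\F_q$ which, by the optimality count, vanishes on every element of $\A$ of maximal sum-rank; writing $\A=\prod_{i=1}^k\Cc_i\times\Cc''$ by induction, the main obstacle is to prove that $f$ is independent of the rank components. Fixing a maximal-weight $c^\ast\in\Cc''$ reduces this to showing that, for each $i\le k$, the linear form $g_i(x)=f(0,\dots,x,\dots,0)$ on $\Cc_i$, which is constant on the rank-maximal elements of $\Cc_i$, vanishes identically; this follows by scaling when $q\ne 2$ and from Lemma~\ref{maxrank_gen} together with Lemma~\ref{lemmageneratingset} when $q=2$. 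Once $f$ factors through $\Cc''$, setting $\Cc'=\{(c,f(c)):c\in\Cc''\}$ gives $\Cc=\prod_{i=1}^k\Cc_i\times\Cc'$, and $\Cc'$ is automatically an optimal Hamming anticode by comparison of dimensions. The binary case of this last coupling is the step I expect to be the hardest.
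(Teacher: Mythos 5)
Your proposal is correct, but it takes a genuinely different route from the paper's proof in the two places where the argument is delicate, so a comparison is worthwhile. Both proofs share the same skeleton: induction on $\ell$, the splitting $\dim(\Cc)=\dim(\A)+\dim(\B)$ from the proof of Theorem~\ref{theoremab}, Theorem~\ref{propq2} to force the remaining component of a weighted-sum-rank maximizer into $\B$, and the generating-set lemmas to upgrade ``maximizers of $\A$ lift to $\Cc$'' to $\Cc=\A\times\B$. The first difference is the direction of peeling: the paper splits off the \emph{first} block (after reducing to $m_1>1$), so that $\B$ always lives in a block with $m_1>1$ and the residual difficulty is concentrated in the binary Hamming tail of $\A$; you split off the \emph{last} block, so your residual difficulty is the case $m_\ell=1$, $\B=0$. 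The second difference is how that hard case is handled. The paper invokes Lemma~\ref{genshighwt} (an optimal binary Hamming anticode is generated by its words of weight $t$ and $t-1$) and disposes of the weight-$(t-1)$ generators by a sum-rank contradiction that explicitly uses $m_1>1$. You instead observe that when $\B=0$ the code is the graph of a linear form $f$ on $\A$, that optimality forces $f$ to vanish on all weighted-sum-rank maximizers of $\A=\prod_{i=1}^{k}\Cc_i\times\Cc''$, and that the restriction of $f$ to each rank block is therefore constant on the rank-maximal elements, hence zero by scaling when $q\neq 2$ and by Lemma~\ref{maxrank_gen} combined with Lemma~\ref{lemmageneratingset} when $q=2$ (using $m_i\geq 2$ for $i\leq k$). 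This makes Lemma~\ref{genshighwt} and the preliminary reduction to $m_1>1$ unnecessary and treats the two values of $q$ more uniformly, at the cost of a separate (but easy) discussion of $\B=\F_q$ versus $\B=0$. The only part you should expand when writing this up is the compressed middle step: to apply Theorem~\ref{propq2} with $t=\dim(\B)/m_\ell$ you need $m_\ell$ to divide $\dim(\B)$, which follows from the equality $m_\ell\lceil\dim(\B)/m_\ell\rceil=\dim(\B)$ forced by the optimality of $\Cc$, exactly as in the paper; this is a matter of exposition, not a gap.
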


\begin{proof}
Assume that $\Cc^\prime\subseteq\F_q^{\ell-k}$ is an optimal Hamming-metric anticode and $\Cc_i\subseteq\F_q^{m_i\times n_i}$ are optimal rank-metric anticodes for $i\in[k]$. It is straightforward to prove that $\Cc=\prod_{i=1}^k\Cc_i\times\Cc^\prime\subseteq \MM$ is an optimal anticode. Further, the statement of the theorem holds for $m_1=\ldots=m_\ell=1$. Therefore, we may assume that $m_1>1$, hence also $k\geq 1$. We proceed by induction on $\ell$. For $\ell=1$, the theorem holds trivially.

We suppose that the theorem holds for $\ell-1$ and we prove it for $\ell>1$. Let $\mathcal{A}=\pi(\Cc)$, $\mathcal{B}=\pi_{1}(\pi^{-1}(0)\cap\Cc)$, and $\tilde \Cc= \mathcal{B}\times\mathcal{A}$.
As in the proof of Theorem \ref{theoremab}, we have 
\begin{equation}\label{eqn:A&B}
\dim(\Cc)=\dim(\mathcal{B})+\dim(\A)\leq m_1 \rk(C_1+B)+\sum_{i=2}^{\ell} m_i\rk(C_i),
\end{equation} 
where $(C_1,\ldots,C_\ell)\in\Cc$, $(C_2,\ldots,C_\ell)$ maximizes $\sum_{i=2}^{\ell}m_i\rk(C_i)$ on $\mathcal{A}$, and $m_1 \rk(C_1)\geq\dim(\mathcal{B})$. Since $(C_1,C_2,\ldots,C_\ell)\in\Cc$ and $\Cc$ is an optimal anticode, then (\ref{eqn:A&B}) is an equality. In particular, 
\begin{equation}\label{eq:dims}
m_1 \rk(C_1)=\dim(\mathcal{B}) \;\;\mbox{ and }\;\; \sum_{i=2}^{\ell}m_i\rk(C_i)=\dim(\A).
\end{equation} 
This proves that $\A$ is an optimal anticode. Therefore, by the induction hypothesis, there is an optimal anticode $\Cc^\prime\subseteq\F_q^{\ell-k}$ and optimal anticodes $\Cc_i\subseteq\F_q^{m_i\times n_i}$ for $2\leq i\leq k$ such that $\A=\prod_{i=2}^k\Cc_i\times\Cc^\prime$.

We claim that $C_1\in\mathcal{B}$. In fact, if $C_1\not\in\mathcal{B}$ and either $\dim(\mathcal{B})\neq2$ or $m\neq2$ or $q\neq 2$ or $n\neq2$ or $\rk(C_1)\neq1$, then by Theorem \ref{propq2} there exists $B\in\mathcal{B}$ such that $$\rk(C_1+B)>\dim(\mathcal{B})/m_1=\rk(C_1).$$ Since $B\in\mathcal{B}$, then $(C_1+B,C_2,\ldots,C_\ell)\in\Cc$. However, this contradicts the optimality of $\Cc$, since $m_1\rk(C_1+B)+\sum_{i=2}^{\ell}m_i\rk(C_i)>\sum_{i=1}^{\ell}m_i\rk(C_i)=\dim(\Cc)$. This proves that $C_1\in\mathcal{B}$, so $C_1+B\in\mathcal{B}$, hence $\mathcal{B}$ is an optimal anticode by (\ref{eq:dims}).
If $\dim(\mathcal{B})=2$, $m=n=2$, $q= 2$, and $\rk(C_1)=1$, then by Lemma \ref{lemma:case2} either $\mathcal{B}$ is an optimal anticode, or there exists $B\in\mathcal{B}$ and $\bar C=(\bar C_1,\dots,\bar C_{\ell})$ such that $(\bar C_1+B,\dots,\bar C_{\ell})\in\Cc$ and
$$m_1\rk(\bar C_1+B)+\sum_{i=2}^{\ell}m_i\rk(\bar C_i)\geq 4+\dim(\mathcal{A})-1=\dim(\Cc)+1.$$
This is a contradiction, since $\Cc$ is an optimal anticode. We conclude that also in this case $\mathcal{B}$ is an optimal anticode.
In addition, our arguments show that, if $(C_1,\ldots,C_\ell)\in\Cc$ is such that $(C_2,\ldots,C_\ell)$ maximizes $\sum_{i=2}^{\ell}m_i\rk(C_i)$, then $C_1\in\mathcal{B}$.  Hence $(0,C_2,\ldots,C_\ell)\in\Cc$.

In order to conclude the proof, it suffices to show that $\Cc=\mathcal{B}\times\A$. Since $\Cc\supseteq\mathcal{B}\times 0$, it suffices to show that $\Cc\supseteq 0\times\A$. If either $k\geq\ell-1$ or $q\neq 2$, then $0\times \A$ is generated by its element of maximum sum-rank by Theorem~\ref{corollarygeneratingset}. Since these belong to $\Cc$, we have that $0\times \A\subseteq\Cc$.
Therefore, assume that $k\leq\ell-2$ and $q=2$.
Let $2\leq i\leq k$. By Lemma \ref{maxrank_gen}, if $C_i\in\Cc_i$ is an element of maximum rank, then $C_i=D_i+D_i^\prime$ for some $D_i,D_i^\prime\in\Cc_i$ of maximum rank. 
Hence $$(0,\ldots,0,C_i,0,\ldots,0)=(0,D_2,\ldots,D_k,D)+(0,D_2^\prime,\ldots,D_k^\prime,D)$$ where $D_j=D_j^\prime\in\Cc_j$ is an element of maximum rank for any $j\in\{2,\ldots,k\}\setminus\{i\}$ and $D$ is an element of maximum rank of $\Cc^\prime$. Since $(0,D_2,\ldots,D_k,D)$, $(0,D_2^\prime,\ldots,D_k^\prime,D)$ are elements of maximum sum-rank in $0\times\A$, they belong to $\Cc$.
This proves that, for any $2\leq i\leq k$, if $C_i$ has maximum rank among the elements of $\Cc_i$, then \begin{equation}\label{eqn:Ci}
(0,\ldots,0,C_i,0,\ldots,0)\in\Cc.
\end{equation} 
Since $\Cc_i$ is generated by its elements of maximum rank by Lemma \ref{lemmageneratingset}, then $$0\times\ldots\times 0\times\Cc_i\times 0\times\ldots\times 0\subseteq\Cc$$ for all $2\leq i\leq k$.

In addition, it follows from (\ref{eqn:Ci}) that $(0,\ldots,0,D)\in\Cc$ for any $D\in\Cc^\prime$ of maximum Hamming weight.
We claim that $0\times\ldots\times 0\times\Cc^\prime\subseteq\Cc$. 
Let $t$ be the maximum weight of a codeword in $\Cc^\prime$ and let $D'\in\Cc^\prime$ be an element of weight $t-1$. By Lemma \ref{genshighwt} it suffices to show that $(0,\ldots,0,D')\in\Cc$. 
Let $(D_2,\ldots,D_k,D')\in\A$ with $\rk(D_i)=\maxrk(\Cc_i)$ for $2\leq i\leq k$. Let $D_1$ be such that $(D_1,D_2,\ldots,D_k,D')\in\Cc$. Since $0\times\Cc_2\times\ldots\times\Cc_k\times 0\subseteq\Cc$, then $(D_1,0,\ldots,0,D')\in\Cc$. If $D_1\in\mathcal{B}$ the claim follows, since $(D_1,0,\ldots,0)\in\mathcal{B}\times 0\subseteq\Cc$. If $D_1\not\in\mathcal{B}$, then since $\mathcal{B}$ is an optimal anticode, there exists $B\in\mathcal{B}$ such that $\rk(B+D_1)\geq\maxrk(\mathcal{B})+1$. Then the element $(B+D_1,D_2,\ldots,D_k,D')\in\Cc$ has sum rank
\begin{equation*}
\begin{split}
m_1\rk(B+D_1)+\sum_{j=2}^k m_j\rk(D_j)+\mathrm{wt}(D')\geq \\ m_1(\maxrk(\mathcal{B})+1)+\sum_{j=2}^k m_j\maxrk(\Cc_j)+t-1= \\ \dim(\mathcal{C})+m_1-1>\dim(\Cc),
\end{split}
\end{equation*}
where $\mathrm{wt}(D')$ denotes the Hamming weight of $D'$, and the inequality follows from the assumption that $m_1>1$. This contradicts the assumption that $\Cc$ is an optimal anticode, completing the proof of the claim and of the theorem.
\end{proof}

\begin{example}
Denote by $\mathrm{rowsp}(M)$ the row-space of a matrix $M$. The optimal anticodes in $\F_q^{5\times 2}\times\F_q^{4\times 3}$ are exactly the codes of the form $$\{(A,B)\mid \mathrm{rowsp}(A)\subseteq U, \mathrm{rowsp}(B)\subseteq V\}$$ for some $U\subseteq\F_q^2$, $V\subseteq\F_q^3$ vector subspaces.
\end{example}

The next result is an easy consequence of Theorem \ref{thm:OAC}.

\begin{corollary}\label{cor:OAC}
Assume that either $q\neq2$ or $m_{\ell-2} \geq 2$. An $\F_q$-linear space $\Cc\subseteq\MM$ is an optimal anticode if and only if for all $i\in[\ell]$ there is $\Cc_i\subseteq\mathbb{F}_q^{m_i\times n_i}$ optimal anticode such that $\Cc=\prod_{i=1}^\ell\Cc_i$. 
\end{corollary}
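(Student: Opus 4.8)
The plan is to read the decomposition off from Theorem~\ref{thm:OAC} and then to settle, under the hypothesis, the only remaining point: a classification of optimal anticodes for the Hamming metric. The reverse implication needs no hypothesis and is immediate. If $\Cc=\prod_{i=1}^\ell\Cc_i$ with each $\Cc_i$ an optimal anticode, then $\dim(\Cc)=\sum_{i=1}^\ell\dim(\Cc_i)=\sum_{i=1}^\ell m_i\maxrk(\Cc_i)$, and since the maximum over a product is attained coordinatewise, $\max_{C\in\Cc}\sum_{i=1}^\ell m_i\rk(C_i)=\sum_{i=1}^\ell m_i\maxrk(\Cc_i)$ as well; hence $\Cc$ is optimal.

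For the forward implication, assume $\Cc$ is optimal and put $k=\max\{i\mid m_i>1\}$. The constraint $n_i\le m_i$ forces $m_i=n_i=1$ for all $i>k$, so the last $\ell-k$ blocks are copies of $\F_q$ and the metric on them is the Hamming metric on $\F_q^{\ell-k}$. Theorem~\ref{thm:OAC} then yields $\Cc=\prod_{i=1}^k\Cc_i\times\Cc'$ with $\Cc_i\subseteq\F_q^{m_i\times n_i}$ optimal rank-metric anticodes and $\Cc'\subseteq\F_q^{\ell-k}$ an optimal Hamming-metric anticode. Because a coordinate subspace $\F_q^S$ is itself a product of optimal anticodes in $\F_q^{1\times1}$ (a factor $\F_q$ for each $i\in S$ and $0$ otherwise), the whole statement reduces to the claim that every optimal Hamming-metric anticode $\Cc'\subseteq\F_q^n$ is a coordinate subspace whenever $q\ne2$ or $n\le2$. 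The hypothesis supplies exactly this dichotomy: if $m_{\ell-2}\ge2$ then $k\ge\ell-2$, so $n=\ell-k\le2$; otherwise $q\ne2$.

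The case $n\le2$ is a finite check against $\dim(\Cc')=\max_{v\in\Cc'}\mathrm{wt}(v)$: dimensions $0$ and $n$ give $0$ and $\F_q^n$, while a one-dimensional optimal anticode is spanned by a weight-one vector, hence equals some $\langle e_i\rangle$. For $q\ge3$ I would first restrict $\Cc'$ to $S=\mathrm{supp}(\Cc')$, which leaves an optimal anticode of full support, and then prove by induction on $n$ that a full-support optimal anticode is all of $\F_q^n$. In the inductive step, writing $\pi_n$ for the projection that forgets the last coordinate, I split on whether $e_n\in\Cc'$. If $e_n\in\Cc'$, then $\Cc'=\pi_n(\Cc')\times\F_q$ and $\pi_n(\Cc')$ is a full-support optimal anticode of length $n-1$, so the induction hypothesis applies. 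If $e_n\notin\Cc'$, then $\pi_n$ is injective on $\Cc'$ and its image has the same dimension $t$; the Anticode Bound of Theorem~\ref{theoremab} (with all $m_i=1$) gives $t\le\max_{w\in\pi_n(\Cc')}\mathrm{wt}(w)\le t$, so $\pi_n(\Cc')$ is optimal and equals $\F_q^{n-1}$ by induction. Hence $t=n-1$ and $\Cc'$ is the graph of a nonzero linear form $g$ on $\F_q^{n-1}$.

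The crux, and the sole place where $q\ne2$ is genuinely used, is to rule out this graph case by producing a codeword of weight $n$, which would contradict $\max_{v\in\Cc'}\mathrm{wt}(v)=t=n-1$. Fixing a coordinate $i$ on which $g$ does not vanish and choosing all other coordinates to be nonzero, the value of $g$ on the resulting vector takes $q-1\ge2$ distinct values as the $i$-th coordinate runs over $\F_q\setminus\{0\}$, so some choice keeps every coordinate nonzero, including the one appended by $g$; this is the desired full-support codeword. This is precisely where $q=2$ breaks down, as the length-three even-weight code is a full-support optimal Hamming anticode that is not a coordinate subspace, which is why the hypothesis cannot be dropped. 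Combining the cases, $\Cc'=\F_q^S$ is a coordinate subspace, and therefore $\Cc=\prod_{i=1}^k\Cc_i\times\Cc'=\prod_{i=1}^\ell\Cc_i$ with every factor an optimal anticode.
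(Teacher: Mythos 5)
Your proof is correct and follows essentially the same route as the paper: both deduce from Theorem~\ref{thm:OAC} the decomposition $\Cc=\prod_{i=1}^k\Cc_i\times\Cc'$ with $\Cc'$ an optimal Hamming-metric anticode in $\F_q^{\ell-k}$, and then observe that the hypothesis forces either $q\neq 2$ or $\ell-k\leq 2$, in which case $\Cc'$ is a coordinate subspace. The only difference is that where the paper cites \cite[Proposition~9]{Rav16} for the classification of optimal Hamming-metric anticodes when $q\neq 2$, you prove it directly by induction on the length (via the graph-of-a-linear-form dichotomy), and your argument there is sound.
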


\begin{proof}
By Theorem \ref{thm:OAC} $\Cc=\prod_{i=1}^k\Cc_i\times\Cc^\prime$, where $\Cc^\prime\subseteq\F_q^{\ell-k}$ is an optimal anticode, $k=\max\{i\in[\ell]\mid m_i>1\}$, and $\Cc_i\subseteq\F_q^{m_i\times n_i}$ are optimal anticodes for all $i\in[k]$.
If $q\neq 2$, then $\Cc^\prime$ is a product of zeroes and copies of $\F_q$ by \cite[Proposition 9]{Rav16}. If $q=2$ and $\ell-k\leq 2$, the same is true by direct inspection.  
\end{proof}

We conclude this section with a proof that the dual of an optimal anticode in the sum-rank metric is an optimal anticode, if $q \neq 2$ or $m_{\ell-2} >1$. 

\begin{proposition}\label{prop:dualOAC}
Let $q\neq 2$ or $ m_{\ell-2} > 1$. Then $\A \subseteq \matsumi$  is an optimal anticode if and only if $\A^\perp \subseteq \matsumi$ is an optimal anticode.
\end{proposition}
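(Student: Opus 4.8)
The plan is to reduce the statement to the single-factor (rank-metric and Hamming-metric) situations via the product decomposition of Corollary \ref{cor:OAC}, and then to dualize factor by factor. First I would observe that the bilinear form $\mathrm{Tr}$ is the orthogonal direct sum of the trace forms $\mathrm{Tr}_i(D_i,C_i)=\mathrm{tr}(D_iC_i^t)$ on the factors $\F_q^{m_i\times n_i}$: since $\mathrm{Tr}(D,C)=\sum_{i=1}^\ell \mathrm{tr}(D_iC_i^t)$ pairs only the $i$-th component of $D$ with the $i$-th component of $C$, distinct factors are mutually orthogonal. A short computation, restricting $C$ to have a single nonzero component ranging over one factor, then shows that the dual of a product splits as the product of the duals: if $\A=\prod_{i=1}^\ell\A_i$ with $\A_i\subseteq\F_q^{m_i\times n_i}$, then $\A^\perp=\prod_{i=1}^\ell\A_i^{\perp_i}$, where $\perp_i$ denotes the dual with respect to $\mathrm{Tr}_i$.

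Next, under the hypothesis $q\neq 2$ or $m_{\ell-2}>1$, Corollary \ref{cor:OAC} tells me that $\A\subseteq\MM$ is an optimal anticode if and only if $\A=\prod_{i=1}^\ell\A_i$ with each $\A_i\subseteq\F_q^{m_i\times n_i}$ an optimal anticode. Combining this with the previous step, the proposition reduces to the single-factor statement that, for each $i$, $\A_i$ is an optimal anticode if and only if $\A_i^{\perp_i}$ is an optimal anticode. For the factors with $m_i=1$ this is the Hamming-metric case, and under our hypothesis such factors are coordinate subspaces (by \cite[Proposition 9]{Rav16}, or by direct inspection when $q=2$); the dual of a coordinate subspace is the complementary coordinate subspace, which is again optimal. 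For the factors with $m_i>1$ this is the rank-metric statement, which holds by \cite{Rav16}. For completeness I would also sketch the rank-metric case directly: by the classification underlying Lemma \ref{lemmageneratingset}, up to rank-metric equivalence $\A_i$ is the space of matrices supported on a fixed set of $t_i=\maxrk(\A_i)$ columns, and since $\mathrm{tr}(D_iC_i^t)$ is the standard coordinate pairing on matrices, $\A_i^{\perp_i}$ is the space of matrices supported on the complementary $n_i-t_i$ columns, of dimension $m_i(n_i-t_i)$ and maximum rank $n_i-t_i$, hence optimal. Here one checks that equivalence commutes with duality, since the equivalence $C_i\mapsto PC_iQ$ dualizes to $D_i\mapsto P^{-t}D_iQ^{-t}$, so that passing to the standard form is harmless.

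Finally I would close the biconditional: since $\mathrm{Tr}$ is nondegenerate, $\A^{\perp\perp}=\A$, so the ``if'' direction follows from the ``only if'' direction applied to $\A^\perp$.

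The hard part is the single-factor rank-metric claim. Its nontrivial content is not the dimension count, which via Theorem \ref{rankmetric} only yields $\maxrk(\A_i^{\perp_i})\geq n_i-t_i$, but rather the matching upper bound $\maxrk(\A_i^{\perp_i})\leq n_i-t_i$, which forces equality and hence optimality; this requires the structural description of optimal rank-metric anticodes together with the compatibility of the trace form with code equivalence. I would also flag why the hypothesis is essential: for $q=2$ with a Hamming block of length at least $3$, an optimal anticode such as the even-weight $[3,2]$ code has dual the repetition code, which is not optimal, so the product decomposition of Corollary \ref{cor:OAC}, which fails precisely in that regime, is exactly what legitimizes the factorwise argument.
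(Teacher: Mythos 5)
Your proof is correct and follows essentially the same route as the paper's: decompose $\A$ via Corollary \ref{cor:OAC}, note that the dual of a product is the product of the duals (since $\mathrm{Tr}$ splits orthogonally over the factors), and invoke the rank-metric duality result of Ravagnani (\cite[Theorem~54]{Rav16a} in the paper's citation). The extra material you supply --- the explicit splitting of $\mathrm{Tr}$, the sketch of the single-factor rank-metric case, the $\A^{\perp\perp}=\A$ closure of the biconditional, and the counterexample explaining the hypothesis --- is all sound and simply fills in details the paper leaves implicit.
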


\begin{proof}
The dual of an optimal anticode in the rank-metric is an optimal anticode by~\cite[Theorem~54]{Rav16a}. 
The result now follows from Corollary~\ref{cor:OAC}, after observing that the dual of a product is the product of the duals.
\end{proof}

Notice that Corollary~\ref{cor:OAC} and Proposition~\ref{prop:dualOAC} cannot be extended to the case $q=2$ and $m_{\ell-2}=1$, since for $n\geq 3$ there exist optimal anticodes in $\F_2^n$ which are not products of zeroes and copies of $\F_2$, and whose dual is not an optimal anticode.

\begin{example}
Let $n\geq 3$ be odd and let $\Cc\subseteq\F_2^n$ be the even-weight code. Then $\Cc$ is an optimal anticode since $\dim(\Cc)=n-1=\maxwt(\Cc)$. Its dual $\Cc^\perp$ is the repetition code, which is not an optimal anticode since $\dim(\Cc^\perp)=1<n=\maxwt(\Cc^\perp)$.
\end{example}

\section{Isometries}\label{sec:isometries}

In this section we characterize the linear isometries of $\MM$ and use them to define a notion of equivalence between sum-rank metric codes. In the next section we define and study generalized weights and show that they are equivalence invariants. With our notion of equivalence, we also obtain that an optimal anticode is equivalent to a product of standard optimal anticodes in the rank metric.

\begin{definition}
An \textbf{$\F_q$-linear isometry} $\varphi$ in the sum-rank metric is an $\F_q$-linear homomorphism of $\MM$ such that $\srk(\varphi(C)) = \srk(C)$ for all $C\in \MM$. 
Two sum-rank metric codes $\Cc, \mathcal{D} \subseteq \matsumi$ are \textbf{equivalent} if there is an $\F_q$-linear isometry
 $\varphi : \MM \rightarrow \MM$ such that $\varphi(\Cc) = \mathcal{D}$. %We denote this equivalence by $\C \sim \mathcal{D}$.
\end{definition}

Recall that every $\F_q$-linear isometry in the rank metric $\psi : \F_q^{m \times n} \rightarrow \F_q^{m \times n}$ has the form $\psi(A) = M A N$, or $\psi(A) = M A^t N$ if $m = n$, for some $M \in \mbox{GL}_m(\F_q)$ and $N \in \mbox{GL}_n(\F_q)$. We refer the interested reader to \cite{Hua,Wan} for a proof of this result. This allows us to characterize the $\F_q$-linear isometries in the sum-rank metric as follows.

\begin{theorem}\label{thm:isom}
Let $\varphi : \matsumi \longrightarrow \matsumi$ be an $\F_q$-linear isometry. 
Then there is a permutation
$$\sigma : [\ell] \longrightarrow [\ell]$$
with the property that $\sigma(i) = j$ implies $m_i=m_j$ and $n_i = n_j$ and there are rank-metric $\F_q$-linear isometries $\psi_i : \F_q^{m_i \times n_i} \longrightarrow \F_q^{m_i \times n_i}$  for $i\in[\ell]$ such that 
$$\varphi(C_1 , \dots , C_\ell) = (\psi_1(C_{\sigma(1)}), \dots , \psi_\ell(C_{\sigma(\ell)}))$$ for all  $(C_1, \dots , C_\ell) \in \matsumi$.
\end{theorem}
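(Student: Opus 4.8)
The plan is to recover the coordinate decomposition $\MM=V_1\oplus\dots\oplus V_\ell$, where $V_i=0\times\dots\times\F_q^{m_i\times n_i}\times\dots\times 0$, purely from the sum-rank metric, and then to read off the block structure of $\varphi$. First I would observe that $\varphi$ is bijective: since $\srk(\varphi(C))=\srk(C)$, its kernel is trivial and $\MM$ is finite-dimensional. Hence $\varphi$ maps the set $W$ of weight-$1$ elements bijectively onto itself. An element has sum-rank weight $1$ exactly when it is a rank-$1$ matrix supported on a single coordinate, so $W=\bigsqcup_{i=1}^\ell\{x\in V_i:\rk(x)=1\}$, and $\varphi$ sends $1$-dimensional subspaces spanned by weight-$1$ elements (call them weight-$1$ \emph{lines}) to weight-$1$ lines.

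The heart of the argument, and the step I expect to be the main obstacle, is to show that $\varphi$ permutes the coordinate subspaces. I would encode the coordinate structure in a graph $\Gamma$ whose vertices are the weight-$1$ lines, declaring two distinct lines $\langle x\rangle,\langle y\rangle$ adjacent when every element of the plane $\langle x,y\rangle$ has weight at most $1$. Since $\varphi$ is a weight-preserving linear bijection and $\varphi(\langle x,y\rangle)=\langle\varphi(x),\varphi(y)\rangle$, it induces an automorphism of $\Gamma$. The key point is that the connected components of $\Gamma$ are precisely the coordinates. If $x\in V_i$ and $y\in V_j$ with $i\neq j$, then $x+y$ has weight $\srk(x)+\srk(y)=2$, so $\langle x\rangle$ and $\langle y\rangle$ are never adjacent and there are no edges between distinct coordinates. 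Within a single $V_i\cong\F_q^{m_i\times n_i}$, adjacency of $\langle uv^t\rangle$ and $\langle u'v'^t\rangle$ holds precisely when they share a column space ($u\parallel u'$) or a row space ($v\parallel v'$), and any two rank-$1$ lines are then joined by a path of length at most two through $\langle uv'^t\rangle$; thus each $V_i$ contributes a single component. Therefore $\varphi$ induces a permutation $\tau$ of $[\ell]$ with $\varphi(\text{weight-}1\text{ lines of }V_i)=\text{weight-}1\text{ lines of }V_{\tau(i)}$, and since the rank-$1$ matrices span $\F_q^{m_i\times n_i}$ this upgrades to $\varphi(V_i)=V_{\tau(i)}$.

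It remains to pin down the shapes and assemble the map. The restriction $\varphi|_{V_i}$ is an $\F_q$-linear rank isometry from $\F_q^{m_i\times n_i}$ onto $\F_q^{m_{\tau(i)}\times n_{\tau(i)}}$; being a rank-preserving bijection it preserves both the dimension, giving $m_in_i=m_{\tau(i)}n_{\tau(i)}$, and the maximal rank $\min(m_i,n_i)$. Using $n_i\leq m_i$, the maximal rank equals $n_i$, whence $n_i=n_{\tau(i)}$ and then $m_i=m_{\tau(i)}$, which is the compatibility needed for $\sigma:=\tau^{-1}$. Finally I would write any $C=(C_1,\dots,C_\ell)$ as $\sum_i C_i$ with $C_i\in V_i$ and use linearity: $\varphi(C)=\sum_i\varphi|_{V_i}(C_i)$, where the summands $\varphi|_{V_i}(C_i)\in V_{\tau(i)}$ lie in distinct coordinates. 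Reading off the $j$-th coordinate of the output, which receives exactly the contribution from $i=\tau^{-1}(j)=\sigma(j)$, yields $\varphi(C_1,\dots,C_\ell)=(\psi_1(C_{\sigma(1)}),\dots,\psi_\ell(C_{\sigma(\ell)}))$, where $\psi_j:\F_q^{m_j\times n_j}\to\F_q^{m_j\times n_j}$ is the rank isometry induced by $\varphi|_{V_{\sigma(j)}}$. By the known form of rank-metric isometries, each $\psi_j$ is then $X\mapsto MXN$ with $M\in\GL_{m_j}(\F_q)$, $N\in\GL_{n_j}(\F_q)$, or $X\mapsto MX^tN$ when $m_j=n_j$, completing the proof.
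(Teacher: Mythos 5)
Your proof is correct, and its overall skeleton matches the paper's: show that $\varphi$ permutes the coordinate subspaces $V_i=0\times\dots\times\F_q^{m_i\times n_i}\times\dots\times 0$, match the shapes, and restrict to get rank-metric isometries. The one step you implement differently is the permutation of coordinates. The paper works with the standard basis: each $\varphi(0,\dots,E_{k,l},\dots,0)$ has sum-rank $1$, hence a single nonzero block, and the fact that row sums $\sum_l E_{k,l}$ and column sums $\sum_k E_{k,l}$ still have rank $1$ forces all basis elements of a given coordinate to land in the same block; bijectivity then yields the permutation. You instead package this as connectivity of the graph on weight-one lines, where two lines are adjacent when their span has maximum weight $1$: there are no edges between distinct coordinates, while within a coordinate any two rank-one lines $\langle uv^t\rangle$, $\langle u'v'^t\rangle$ are joined through $\langle uv'^t\rangle$, so the components are exactly the coordinates and $\varphi$ permutes them. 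The two devices are close cousins (the paper's row sum is precisely a clique of mutually adjacent lines sharing a column space), but your version is coordinate-free and works with all rank-one elements rather than a chosen basis. Your shape-matching step is also slightly more explicit than the paper's appeal to weight distributions: you extract $n_i=n_{\tau(i)}$ from preservation of the maximal rank (using $n_i\leq m_i$) and then $m_i=m_{\tau(i)}$ from equality of dimensions, which is a clean and complete justification.
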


\begin{proof}
For $i\in[\ell]$, let $M_i=0\times\ldots\times 0\times\F_q^{m_i\times n_i}\times 0\times\ldots\times 0\subseteq\matsumi$ where the $i$th component is the only nonzero one. Let 
$\{(0,\ldots,0,E_{k,l}, 0,\ldots,0)\}_{(k,l)\in[m_i]\times[n_i]}$ 
be the standard basis of $M_i$. Then
$$\srk(\varphi(0,\dots , 0, E_{k,l}, 0 , \dots , 0))=1$$ 
for all $(k,l)\in [m_i]\times[n_i]$,
implying that $\varphi(0,\dots , 0, E_{k,l}, 0 , \dots , 0)$ has only one nonzero component for each choice of $k$ and $l$, say $i_{k,l}$. Further, we notice that for a given $k \in [m_i]$ 
\begin{equation}\label{row}
\srk\left(\varphi\left(0,\dots , 0, \sum_{l = 1}^{n_i}E_{k,l}, 0 , \dots , 0\right)\right) = 1, \end{equation}
and similarly for a given $l \in[n_i]$ we have that
\begin{equation} \label{column}
\srk\left(\varphi\left(0,\dots , 0, \sum_{k = 1}^{m_i}E_{k,l}, 0 , \dots , 0\right)\right) = 1. \end{equation}
By (\ref{row}) we have that
$$\srk\left(\sum_{l = 1}^{n_i} \varphi(0,\dots , 0, E_{k,l}, 0 , \dots , 0)\right) =1,$$
implying that $i_{k,l}$ does not depend on $k$. 
The same argument together with equation (\ref{column}) shows that $i_{k,l}$ does not depend on $l$ either.
It follows that for all $i$ there is a $j$ such that $\varphi(M_i)\subseteq M_j$. Since $\varphi^{-1}$ is a linear isometry, it follows from the same argument that that $\varphi^{-1}(M_j)\subseteq M_i$. Hence $\varphi(M_i)=M_j$. In particular, the map that sends $i$ to $j$ is a permutation of $[\ell]$, which we denote by $\sigma^{-1}$. Since $M_i$ and $M_j$ have the same weight distribution if and only if $n_i=\maxrk(M_i)=\maxrk(M_j)=n_j$ and
$m_i=\dim(M_i)/n_i=\dim(M_j)/n_j=m_j$. Therefore  
$$\begin{array}{ccccc}
 \varphi\mid_{M_i} & : &  M_i & \longrightarrow &M_j \\
 & & (0, \dots, 0, C_i , 0 , \dots , 0) &\longmapsto  &(0, \dots, 0, \psi_j(C_{i}) , 0 , \dots , 0)
\end{array}$$
for $j=\sigma^{-1}(i)$ and for some linear rank-metric isometry $\psi_j : \F_q^{m_j \times n_j} \rightarrow \F_q^{m_j \times n_j}$. 
Hence by linearity
$$\begin{array}{ccccc}
 \varphi& : &  \matsumi & \longrightarrow & \matsumi \\
 & & (C_1, \dots , C_\ell) &\longmapsto  &(\psi_1(C_{\sigma(1)}), \dots, \psi_\ell(C_{\sigma(\ell)})).
\end{array}$$
\end{proof}

The next corollary is immediate, after observing that every optimal anticode in the rank metric is equivalent to a standard optimal anticode, see e.g. \cite[Section 3]{G21}.

\begin{corollary}\label{stOAC}
For $i\in[\ell]$ let $\A_i\subseteq\F_q^{m_i\times n_i}$ be an optimal anticode and let $\A=\A_1\times\dots\times \A_\ell\subseteq\MM$. Then $\A$ is equivalent to $$\prod_{i=1}^\ell \langle E_{k,l}\mid k\in[m_i], l\in[u_i]\rangle,$$ where $u_i=\maxrk(\A_i)$.
\end{corollary}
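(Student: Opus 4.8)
The plan is to combine two ingredients: the structural description of sum-rank linear isometries from Theorem~\ref{thm:isom}, and the classical fact, recalled just before the statement, that every optimal rank-metric anticode is equivalent to a standard one (see \cite[Section~3]{G21}). Since $\A$ is already presented as a product $\A_1\times\dots\times\A_\ell$, I expect the proof to require no permutation of the blocks: the identity permutation suffices, and I only need to act block by block with a suitable rank-metric isometry on each factor.

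First I would fix $i\in[\ell]$ and set $u_i=\maxrk(\A_i)$. Since $\A_i$ is an optimal rank-metric anticode, its dimension equals $m_iu_i$ by Theorem~\ref{rankmetric}, and by the cited result there is a rank-metric $\F_q$-linear isometry $\psi_i:\F_q^{m_i\times n_i}\to\F_q^{m_i\times n_i}$ with
$$\psi_i(\A_i)=\langle E_{k,l}\mid k\in[m_i],\,l\in[u_i]\rangle,$$
the standard optimal anticode supported on the first $u_i$ columns, whose dimension is $m_iu_i$ and whose maximum rank is $u_i$ (here I use $u_i\leq n_i\leq m_i$ to guarantee a full-rank element exists). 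The only point worth stating carefully is this identification $u_i=\maxrk(\A_i)$, so that the standard anticode on the right has the correct dimension $m_iu_i=\dim(\A_i)$ and hence matches $\A_i$ under $\psi_i$.

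Next I would assemble the $\psi_i$ into a single map $\varphi:\MM\to\MM$ defined by $\varphi(C_1,\dots,C_\ell)=(\psi_1(C_1),\dots,\psi_\ell(C_\ell))$. By Theorem~\ref{thm:isom}, taking $\sigma=\mathrm{id}$, this $\varphi$ is an $\F_q$-linear isometry of $\MM$; alternatively one checks directly that $\srk(\varphi(C))=\sum_{i=1}^\ell\rk(\psi_i(C_i))=\sum_{i=1}^\ell\rk(C_i)=\srk(C)$ for every $C\in\MM$, since each $\psi_i$ preserves rank. Finally, because $\varphi$ acts diagonally on the product,
$$\varphi(\A)=\psi_1(\A_1)\times\dots\times\psi_\ell(\A_\ell)=\prod_{i=1}^\ell\langle E_{k,l}\mid k\in[m_i],\,l\in[u_i]\rangle,$$
which exhibits the desired equivalence.

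There is essentially no serious obstacle here: the content is entirely carried by the two inputs, namely the block-diagonal structure of sum-rank isometries and the rank-metric classification of optimal anticodes, and the only thing to verify is that their combination respects the product structure. This last point is immediate from the diagonal form of $\varphi$, which is exactly why the corollary is flagged as an easy consequence of Theorem~\ref{thm:isom}.
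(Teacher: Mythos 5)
Your proposal is correct and matches the paper's intended argument: the paper gives no written proof beyond remarking that the corollary is immediate once one knows every optimal rank-metric anticode is equivalent to a standard one, and your block-by-block assembly of the $\psi_i$ into a diagonal isometry of $\MM$ is exactly that argument spelled out. The only (harmless) extra care you take is verifying directly that the diagonal map is a sum-rank isometry rather than invoking Theorem~\ref{thm:isom}, which is in fact the cleaner route since that theorem characterizes isometries rather than asserting the converse.
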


It is natural to ask whether a result along the lines of the MacWilliams Extension Theorem holds in the sum-rank metric. It is clear that, since we do not have a MacWilliams Extension Theorem for rank-metric codes, we also cannot have a MacWilliams Extension Theorem for sum-rank metric codes. Moreover, in the sum-rank metric we have more pathologies than just those coming from the rank metric, as the next examples shows.

\begin{example}
Let $\ell=3$, $m_1=n_1=3,m_2=m_3=n_2=n_3=1$. Let $$\mathcal{C}=\left\{\left(\begin{pmatrix}
a & 0 & 0 \\ 0 & 0 & 0 \\ 0 & 0 & 0
\end{pmatrix},b,c\right) : a,b,c\in\F_q\right\}$$ and $$\mathcal{D}=\left\{\left(\begin{pmatrix}
a & 0 & 0 \\ 0 & b & 0 \\ 0 & 0 & c
\end{pmatrix},0,0\right) : a,b,c\in\F_q\right\}.$$ Then $\varphi:\Cc\rightarrow\mathcal{D}$ defined as $$\varphi\left(\begin{pmatrix}
a & 0 & 0 \\ 0 & 0 & 0 \\ 0 & 0 & 0
\end{pmatrix},b,c\right)=\left(\begin{pmatrix}
a & 0 & 0 \\ 0 & b & 0 \\ 0 & 0 & c
\end{pmatrix},0,0\right)$$ is an $\F_q$-linear isometry between $\Cc$ and $\mathcal{D}$, which does not extend to an $\F_q$-linear isometry of $\MM$ by Theorem~\ref{thm:isom}.
\end{example}

\begin{example}
Let $\ell=2$, $m_1=n_1=m_2=n_2=2$. Let $$\mathcal{C}=\left\{\left(\begin{pmatrix}
a & 0  \\ 0 & b 
\end{pmatrix},\begin{pmatrix}
c & 0  \\ 0 & d 
\end{pmatrix}\right) : a,b,c,d\in\F_q\right\}.$$ 
Then $\varphi:\Cc\rightarrow\mathcal{C}$ defined as $$\varphi\left(\begin{pmatrix}
a & 0  \\ 0 & b 
\end{pmatrix}, \begin{pmatrix}
c & 0  \\ 0 & d 
\end{pmatrix}\right)=\left(\begin{pmatrix}
a & 0  \\ 0 & c 
\end{pmatrix}, \begin{pmatrix}
b & 0  \\ 0 & d 
\end{pmatrix}\right)$$ is an $\F_q$-linear isometry between $\Cc$ and itself, which does not extend to an $\F_q$-linear isometry of $\MM$ by Theorem~\ref{thm:isom}.
\end{example}

\section{Generalized weights}\label{sec:generalizedweights}

In this section we define generalized weights in the sum-rank metric and establish some of their basic properties, including a weak monotonicity along the lines of the corresponding result for rank-metric codes. In addition, we prove that they satisfy Wei's Duality if $m_1=\ldots=m_\ell$. For general $m_i$'s, we show by means of an example that the generalized weights of a code do not determine those of its dual, hence Wei's Duality cannot hold. 

% We also give a refined definition of generalized weights which satisfies Wei's Duality for all values of $m_1,\ldots,m_\ell$ for codes which are products.

\begin{definition} 
Let $\Cc\subseteq \matsumi$ be a sum-rank metric code. For each $r\in[\dim(\Cc)]$, we define the  $r$-th generalized sum-rank weight of $\Cc$ as
\begin{equation*}
\begin{split}
d_r(\Cc) = \min\{\maxsrk(\A):\,&\A=\A_1\times\dots\times \A_\ell \mbox{ where }\A_i\subseteq\F_q^{m_i\times n_i} \\  & \mbox{are optimal anticodes and} \dim(\Cc \cap \A)\geq r\}.
\end{split}
\end{equation*}
\end{definition}

Notice that if $m_1=\dots=m_\ell=m$, then 
\begin{equation}
\begin{split}
d_r(\Cc) =\frac{1}{m} \min\{\dim(\A):\,&\A= \A_1\times\dots\times \A_\ell \text{ where } \A_i\subseteq\F_q^{m\times n_i} \\  & \mbox{are optimal anticodes and} \dim(\Cc\cap \A)\geq r\}.
\end{split}
\end{equation}

\begin{remark}
We could have defined $d_r(\Cc)$ to be
$$d'_r(\Cc)=\min\{\maxsrk(\A) : \, \A\text{ an optimal anticode and } \dim(\Cc\cap \A)\geq r \}.$$
    
For either $q \neq 2$ or $m_{\ell-2} >1$ we have that $d_r(\Cc)=d'_r(\Cc)$ as, by Corollary~\ref{cor:OAC}, $\A$ is an optimal anticode if and only if $\A=\A_1\times\cdots\times \A_\ell$ for $\A_i$ optimal anticode in $\F_q^{m_i\times n_i}$. In the case $q=2$ and $m_{\ell-2}=1$ one has
$$d'_r(\Cc)\leq d_r(\Cc).$$

Notice moreover that $d_r(\Cc)$ recovers the Hamming weights, since the cardinality of a support of a code is the minimum dimension of a code which contains it and is a product of copies of $\F_q$ and zeros. If $q=2$, then $d'_r(\Cc)$ does not recover the Hamming weights, as there are optimal binary anticodes which are not a product of copies of $\F_2$ and zeros. See also the example following Theorem~10 in~\cite{Rav16}.
\end{remark}

\begin{remark}
It follows from the definition that the generalized weights are invariant under code equivalence.
\end{remark}

As an example, we compute the generalized weights of optimal anticodes.

\begin{example}\label{genwtsOAC}
For $i\in[\ell]$ let $\A_i\subseteq\F_q^{m_i\times n_i}$ be an optimal anticode and let $\A= \A_1\times\dots\times \A_\ell\subseteq\MM$ with $\dim\A_i=m_iu_i$. By Corollary \ref{stOAC} and the previous remark, $d_r(\A)=d_r(\A')$ for $r\in[\dim(\Cc)]$, where $\A'=\prod_{i=1}^\ell \langle E_{k,l}\mid (k,l)\in[m_i]\times[u_i]\rangle$. Let $j\in[\ell]$, $0\leq \delta\leq u_j-1$, $r=\sum_{i=1}^{j-1}m_iu_i+m_j\delta$. Then
$$d_{r+1}(\A)=\ldots=d_{r+m_j}(\A)=u_1+\ldots+u_{j-1}+\delta+1.$$
\end{example}

\begin{comment}
\begin{lemma}
Let $k\leq\ell$, $m_1,\ldots,m_\ell$, $u_1,\ldots,u_\ell$, $u'_1,\ldots,u'_k\in\mathbb{N}$, be such that $m_1\geq\cdots\geq m_\ell$, $u_i\leq u'_i$ for $i\in[k-1]$. If $u_1m_1+\ldots+u_\ell m_\ell>u'_1m_1+\ldots+u'_km_k$ then $$u_1+\ldots+u_\ell>u'_1+\ldots+u'_k.$$
\end{lemma}
\end{comment}

\begin{lemma}\label{lemma:nos}
Let $m_1\geq ...\geq m_\ell\in \mathbb{N}$, $u_1,\ldots,u_\ell$, $u'_1,\ldots,u'_\ell\in\mathbb{R}_{\geq 0}$ such that $\sum_{i=1}^\ell u_i=\sum_{i=1}^\ell u'_i$ and such that there exists $k$ with $u_i\geq u'_i$ for all $1\leq i\leq k$ and $u_i\leq u'_i$ for all $k<i\leq\ell$, then $\sum_{i=1}^\ell m_iu_i\geq\sum_{i=1}^\ell m_i u'_i$.
\end{lemma}

\begin{proof}
%Proving that $$\sum_{i=1}^\ell m_iu_i\geq \sum_{i=1}^\ell m_i u'_i$$ is equivalent to proving that $$\sum_{i=1}^k m_i(u_i-u'_i)\geq\sum_{i=k+1}^\ell m_i (u'_i-u_i).$$ 
Since $\sum_{i=1}^k( u_i-u'_i)=\sum_{i=k+1}^\ell (u'_i-u_i)$ and $m_1\geq ...\geq m_\ell$, then $$\sum_{i=1}^k m_i(u_i-u'_i)\geq m_k\sum_{i=1}^k(u_i-u'_i)\geq m_{k+1}\sum_{i=k+1}^\ell(u'_i-u_i)\geq\sum_{i=k+1}^\ell m_i (u'_i-u_i),$$ which proves the thesis.
\end{proof}

In the next proposition we establish some basic properties of generalized weights.
Notice that in the case $m_1=\ldots=m_\ell$ one gets inequalities of the same form as those in~\cite[Theorem 30]{Rav16}.

\begin{proposition}\label{properties} 
Let $0\neq\Cc \subseteq\mathcal{D} \subseteq \MM$, then:
\begin{enumerate}
\item $d_1(\Cc)= d(\Cc)$,
\item $d_r(\Cc)\leq d_s(\Cc)$ for $1\leq r\leq s\leq\dim(\Cc)$,
\item $d_r(\Cc)\geq d_r(\mathcal{D})$ for $r\in[\dim(\Cc)]$,
\item $d_{\dim(\Cc)}(\Cc)\leq n_1 + \dots + n_\ell$,
\item $d_{r+n_1m_1+\cdots+n_{j-1}m_{j-1}+\delta m_{j}}(\Cc)\geq d_r(\Cc)+n_1+\cdots+n_{j-1}+\delta$\\ 
for $j\in[\ell]$, $r\in[ \dim(\Cc)-(n_1m_1+\cdots+n_{j-1}m_{j-1}+\delta m_j)]$, and $0\leq\delta\leq n_j-1$.
\end{enumerate}
\end{proposition}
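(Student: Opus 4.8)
The plan is to dispatch items (1)--(4) quickly from the definition and to concentrate the real work on the spacing inequality (5).

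For (1) I would prove both inequalities directly. If $\A=\A_1\times\cdots\times\A_\ell$ is any product of optimal anticodes with $\dim(\Cc\cap\A)\ge 1$, it contains a nonzero codeword $C$, and $d(\Cc)\le\srk(C)\le\maxsrk(\A)$; taking the minimum gives $d(\Cc)\le d_1(\Cc)$. Conversely, pick $C=(C_1,\dots,C_\ell)\in\Cc$ with $\srk(C)=d(\Cc)$ and, for each $i$, choose an optimal anticode $\A_i\subseteq\F_q^{m_i\times n_i}$ containing $C_i$ with $\maxrk(\A_i)=\rk(C_i)$ (e.g. all matrices whose row space lies in a fixed $\rk(C_i)$-dimensional space containing that of $C_i$, or $\A_i=0$ if $C_i=0$). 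Then $\A=\prod_i\A_i$ contains $C$, so $d_1(\Cc)\le\maxsrk(\A)=\srk(C)=d(\Cc)$. Items (2) and (3) are immediate monotonicity facts: enlarging the index from $r$ to $s$, or enlarging the code from $\Cc$ to $\mathcal{D}$, only shrinks the family of admissible anticodes in the defining minimum, so the minimum can only grow, yielding $d_r(\Cc)\le d_s(\Cc)$ and $d_r(\Cc)\ge d_r(\mathcal{D})$. For (4) I would take $\A=\MM=\prod_i\F_q^{m_i\times n_i}$, which is a product of optimal anticodes because $n_i\le m_i$ forces $\maxrk(\F_q^{m_i\times n_i})=n_i$ and hence $\dim(\F_q^{m_i\times n_i})=m_i n_i$; since $\Cc\cap\MM=\Cc$ and $\maxsrk(\MM)=\sum_i n_i$, the bound follows.

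The heart of the proposition is (5), where the strategy is a shrinking argument. Write $s=n_1+\cdots+n_{j-1}+\delta$ and $D=n_1m_1+\cdots+n_{j-1}m_{j-1}+\delta m_j$, so the claim is $d_{r+D}(\Cc)\ge d_r(\Cc)+s$. Let $\A=\A_1\times\cdots\times\A_\ell$ attain $d_{r+D}(\Cc)$, put $u_i=\maxrk(\A_i)$ so $\dim(\A_i)=m_iu_i$ and $\maxsrk(\A)=\sum_i u_i$, and recall $\dim(\Cc\cap\A)\ge r+D$. I aim to produce a sub-anticode $\A'=\A_1'\times\cdots\times\A_\ell'\subseteq\A$, again a product of optimal anticodes, with $\maxsrk(\A')=\maxsrk(\A)-s$ and $\dim(\Cc\cap\A')\ge r$; then $d_r(\Cc)\le\maxsrk(\A')=d_{r+D}(\Cc)-s$, which is exactly (5). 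To build $\A'$ I shrink each factor: from the classification of optimal rank-metric anticodes (Corollary \ref{stOAC}) and Theorem \ref{thm:isom}, inside each $\A_i$ there is, for every integer $0\le\delta_i\le u_i$, a nested optimal anticode $\A_i'\subseteq\A_i$ with $\maxrk(\A_i')=u_i-\delta_i$ and $\dim(\A_i')=m_i(u_i-\delta_i)$. Two numerical facts then close the argument. First, $\sum_i u_i\ge s$: otherwise $\sum_i m_iu_i$, which is maximized over $u_i\le n_i$ with a prescribed smaller total by front-loading onto the larger $m_i$, would be strictly below $D$, forcing $\dim(\Cc\cap\A)\le\dim(\A)=\sum_i m_iu_i<D\le r+D$ (using $r\ge 1$), a contradiction. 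Hence I may choose integers $0\le\delta_i\le u_i$ with $\sum_i\delta_i=s$. Second, since $m_1\ge\cdots\ge m_\ell$ and $0\le\delta_i\le u_i\le n_i$ with $\sum_i\delta_i=s$, the cost $\sum_i m_i\delta_i$ is maximized by loading onto the smallest indices, giving $\sum_i m_i\delta_i\le n_1m_1+\cdots+n_{j-1}m_{j-1}+\delta m_j=D$. Because $\A'\subseteq\A$ has codimension $\sum_i m_i\delta_i$ in $\A$, we obtain $\dim(\Cc\cap\A')\ge\dim(\Cc\cap\A)-D\ge r$, and $\maxsrk(\A')=\sum_i u_i-s$, as required.

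The main obstacle, and the step I would verify most carefully, is the pair of extremal estimates $\sum_i u_i\ge s$ and $\sum_i m_i\delta_i\le D$. Both rest on the monotonicity $m_1\ge\cdots\ge m_\ell$ and on recognizing $D$ as the \emph{maximum} dimension-cost of lowering $\maxsrk$ by $s$. The feasibility bound $\sum_i u_i\ge s$ is precisely what rules out a degenerate minimizer $\A$ whose total maximal rank is too small to absorb the required decrease, and it is the one place where the hypothesis $r\ge 1$ from $r\in[\dim(\Cc)-D]$ is used. The construction of nested optimal sub-anticodes inside each $\A_i$ is routine given the classification, but it must be invoked explicitly so that $\A'$ is genuinely a product of optimal anticodes contained in $\A$, which is what makes the codimension estimate valid.
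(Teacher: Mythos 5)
Your proof is correct and follows essentially the same route as the paper's: items (1)--(4) straight from the definition, and for (5) the same shrinking argument — take a product anticode attaining $d_{r+D}(\Cc)$, verify $\sum_i u_i\geq s$ via the front-loading bound forced by $m_1\geq\cdots\geq m_\ell$, choose $\delta_i\leq u_i$ summing to $s$, pass to nested optimal sub-anticodes, and bound the dimension drop by $\sum_i m_i\delta_i\leq D$ (the paper's $v_i$ play the role of your $\delta_i$). The only blemish is a wording slip in (3): enlarging $\Cc$ to $\mathcal{D}$ \emph{enlarges} the family of admissible anticodes, so the minimum can only decrease, which is exactly the stated conclusion $d_r(\Cc)\geq d_r(\mathcal{D})$.
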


\begin{proof} 
1. Let $C=(C_1,\ldots,C_\ell)\in \Cc$ be an element of minimum sum-rank. Let $\A_i$ be an optimal anticode of $\dim(\A_i)=m_i\rk(C_i)$ containing $C_i$ and let $\A=\A_1\times\cdots\times \A_\ell$. Then $\Cc \cap \A \neq 0$, hence $d_1(\Cc)\leq d(\Cc)$. To prove that they are equal, observe that if $\A'$ is an optimal anticode with $\maxsrk(\A')<d(\Cc)$, then $\A'\cap\Cc=0$. 

2., 3., and 4. follow directly from the definition.

5. Let $s=r+n_1m_1+\cdots+n_{j-1}m_{j-1}+\delta m_j$.
Let $\A = \A_1\times\cdots\times \A_\ell$ be an optimal anticode such that $\dim(\Cc \cap \A)\geq s$ and $d_s(\Cc)=\maxsrk(\A)$. For $i\in[\ell]$, write $\dim(\A_i)=m_i u_i$. Since $$\sum_{i=1}^{\ell} m_iu_i=\dim(\A)\geq \dim(\Cc \cap \A)\geq s> n_1m_1+\cdots+n_{j-1}m_{j-1}+\delta m_j$$ and $m_1\geq\cdots\geq m_\ell$, then $d_s(\Cc)=u_1+\ldots+u_\ell> n_1+\cdots+n_{j-1}+\delta$ by Lemma~\ref{lemma:nos}.
Let $v_1,\ldots,v_{\ell}$ be such that $n_1+\cdots+n_{j-1}+\delta= v_1+\cdots+v_\ell$ and $v_i\leq u_i$ for $i\in[\ell]$.
We have that $n_1m_1+\cdots+n_{j-1}m_{j-1}+\delta m_j\geq v_1m_1+\cdots+v_\ell m_\ell$, since $m_1\geq\cdots\geq m_\ell$. For all $i\in[\ell]$ there exist optimal anticodes $\A'_i\subseteq \A_i$ of $\dim(\A'_i)=m_i(u_i-v_i)$. Let $\A'=\A'_1\times\cdots\times \A'_\ell$, then
\begin{align*}
\dim(\Cc \cap \A')&\geq s-(v_1m_1+\cdots+v_\ell m_ \ell)\\
&\geq s-(n_1m_1+\cdots+n_{j-1}m_{j-1}+\delta m_j)\\
&=r
\end{align*}
hence
\begin{equation*}
d_r(\Cc)\leq \sum_{i=1}^\ell (u_i-v_i) = d_s(\Cc)-(n_1+\cdots+n_{j-1}+\delta).\qedhere
\end{equation*}
\end{proof}

From parts 4. and 5. of Proposition \ref{properties}, we easily obtain the following Singleton-type bound. This bound will be improved in Theorem \ref{singletonbound}.

\begin{corollary}\label{cor:singletontypebound}
Let $j\in[\ell]$, $0\leq\delta\leq n_j-1$, $0\leq s\leq m_j-1$, and let $\Cc\subseteq \MM$ be a non-trivial code of $$\dim(\Cc)=\sum_{i=1}^{j-1}m_in_i+\delta m_j+s.$$ Then $$d(\Cc)\leq \sum_{i=j}^{\ell} n_i-\delta+\left\{\begin{array}{ll}
1 & \mbox{ if } s=0 \\
0 & \mbox{ else.}
\end{array}\right.$$
\end{corollary}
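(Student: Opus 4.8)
The plan is to deduce the corollary directly from parts 4 and 5 of Proposition~\ref{properties}, applied to a cleverly chosen value of the shift parameter in part~5. The idea is that the dimension hypothesis is written precisely in the form $\sum_{i=1}^{j-1}m_in_i+\delta m_j+s$, which matches the additive shift $n_1m_1+\cdots+n_{j-1}m_{j-1}+\delta m_j$ appearing in part~5, up to the leftover term $s$ with $0\le s\le m_j-1$. So the first step is to set $r=\dim(\Cc)-(n_1m_1+\cdots+n_{j-1}m_{j-1}+\delta m_j)=s$, and to distinguish the two cases $s=0$ and $s\ge 1$.

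First I would treat the case $s\geq 1$. Here $r=s\geq 1$, so by Proposition~\ref{properties}(5) with this $r$, $j$, and $\delta$ we obtain
\begin{equation*}
d_{\dim(\Cc)}(\Cc)=d_{s+n_1m_1+\cdots+n_{j-1}m_{j-1}+\delta m_j}(\Cc)\geq d_s(\Cc)+n_1+\cdots+n_{j-1}+\delta.
\end{equation*}
Combining this with part~4, namely $d_{\dim(\Cc)}(\Cc)\leq n_1+\cdots+n_\ell$, and using $d_s(\Cc)\geq d_1(\Cc)=d(\Cc)$ from parts~1 and~2, I would rearrange to get
\begin{equation*}
d(\Cc)\leq n_1+\cdots+n_\ell-(n_1+\cdots+n_{j-1}+\delta)=\sum_{i=j}^\ell n_i-\delta,
\end{equation*}
which is exactly the bound in the case $s\neq 0$.

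For the case $s=0$ the shift in part~5 must be reduced by one full block of size $m_j$ so that the leftover $r$ is positive: I would instead take $\delta'=\delta-1$ (legitimate when $\delta\geq 1$) and $r=m_j\geq 1$, giving via part~5 the inequality $d_{\dim(\Cc)}(\Cc)\geq d_{m_j}(\Cc)+n_1+\cdots+n_{j-1}+(\delta-1)$, and then bounding $d_{m_j}(\Cc)\geq d(\Cc)$ and $d_{\dim(\Cc)}(\Cc)\leq\sum_i n_i$ yields $d(\Cc)\leq\sum_{i=j}^\ell n_i-\delta+1$, the $s=0$ bound with its $+1$. The boundary subcase $\delta=0,s=0$ (where $\dim(\Cc)=\sum_{i=1}^{j-1}m_in_i$) should be handled by reindexing $j$ downward, since then the dimension is really associated with block $j-1$; alternatively one checks it directly against part~4.

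The main obstacle I expect is bookkeeping rather than conceptual difficulty: one must verify that the chosen $r$ lies in the admissible range $[\dim(\Cc)-(n_1m_1+\cdots+n_{j-1}m_{j-1}+\delta m_j)]$ required by part~5, confirm that $\delta$ (or $\delta-1$) stays within $0\le\delta\le n_j-1$, and carefully track the off-by-one arising from the two cases $s=0$ versus $s\ge 1$. The clean matching of the $+1$ term to the case $s=0$ is the one place where a sign or index slip could creep in, so that is where I would be most careful.
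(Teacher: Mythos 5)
Your proof is correct and is exactly the argument the paper intends: the corollary is stated there as an immediate consequence of parts 4 and 5 of Proposition \ref{properties}, and your case analysis on $s$ (taking $r=s$ when $s\geq 1$, and $r=m_j$ with $\delta-1$ when $s=0$, reindexing to block $j-1$ when $\delta=s=0$) fills in those details faithfully and with the ranges of $r$ and $\delta$ correctly verified. The only caveat is your throwaway alternative for the $\delta=s=0$ subcase: checking it ``directly against part 4'' only yields $d(\Cc)\leq n_1+\cdots+n_\ell$, which is weaker than the claimed $\sum_{i=j}^{\ell}n_i+1$ whenever $\sum_{i=1}^{j-1}n_i>1$, so the reindexing route is the one to keep.
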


The next lemma will be useful in Section \ref{sec:MRD} for computing the generalized weights of an MSRD code.

\begin{lemma}\label{lemma}
Let $\Cc\subseteq\MM$ be a code and let $k\in[\ell]$, $r+m_k\in[\dim(\Cc)]$. If
$$d_{r+m_k}(\Cc)>\sum_{i=1}^{k-1}n_i$$ then
$$d_{r+m_k}(\Cc)\geq d_r(\Cc)+1.$$
\end{lemma}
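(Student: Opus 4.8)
The plan is to start from an optimal anticode realizing $d_{r+m_k}(\Cc)$ and then shrink exactly one of its factors, producing a slightly smaller optimal anticode whose intersection with $\Cc$ still has dimension at least $r$. Concretely, I would fix an optimal anticode $\A=\A_1\times\cdots\times\A_\ell$, with each $\A_i\subseteq\F_q^{m_i\times n_i}$ an optimal rank-metric anticode, such that $\dim(\Cc\cap\A)\geq r+m_k$ and $\maxsrk(\A)=d_{r+m_k}(\Cc)$ (the minimum in the definition is attained). Writing $u_i=\maxrk(\A_i)$, so that $\dim(\A_i)=m_iu_i$ and $\maxsrk(\A)=\sum_{i=1}^\ell u_i$, the goal becomes to exhibit an optimal anticode $\A'$ with $\dim(\Cc\cap\A')\geq r$ and $\maxsrk(\A')=\maxsrk(\A)-1$; this would give $d_r(\Cc)\leq\maxsrk(\A')=d_{r+m_k}(\Cc)-1$, which is the claim.

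The crux, and the step I expect to be the main obstacle, is to locate an index $j\geq k$ with $u_j\geq 1$. This is exactly where the hypothesis $d_{r+m_k}(\Cc)>\sum_{i=1}^{k-1}n_i$ enters. Indeed, if $u_j=0$ for every $j\geq k$, then $\A_j=0$ for all such $j$, and using $u_i=\maxrk(\A_i)\leq n_i$ we would obtain $d_{r+m_k}(\Cc)=\maxsrk(\A)=\sum_{i=1}^{k-1}u_i\leq\sum_{i=1}^{k-1}n_i$, contradicting the hypothesis. Hence such a $j\geq k$ exists.

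Having fixed such a $j$, I would replace $\A_j$ by an optimal rank-metric anticode $\A'_j\subseteq\A_j$ with $\dim(\A'_j)=m_j(u_j-1)$; the existence of such a sub-anticode is guaranteed by the structure of optimal rank-metric anticodes and is exactly the fact already invoked in the proof of Proposition~\ref{properties}. Setting $\A'=\A_1\times\cdots\times\A'_j\times\cdots\times\A_\ell$, this is again a product of optimal rank-metric anticodes, hence an optimal anticode, and $\maxsrk(\A')=\sum_{i=1}^\ell u_i-1=d_{r+m_k}(\Cc)-1$.

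It then remains to control the intersection dimension. Since $\A'\subseteq\A$ with $\dim(\A)-\dim(\A')=m_j$, applying the modular inequality to the subspaces $\Cc\cap\A$ and $\A'$ of $\A$ yields $\dim(\Cc\cap\A')=\dim\bigl((\Cc\cap\A)\cap\A'\bigr)\geq\dim(\Cc\cap\A)-m_j\geq(r+m_k)-m_j$. Because the $m_i$ are non-increasing and $j\geq k$, we have $m_j\leq m_k$, so $\dim(\Cc\cap\A')\geq r$. Thus $\A'$ is admissible in the definition of $d_r(\Cc)$, giving $d_r(\Cc)\leq\maxsrk(\A')=d_{r+m_k}(\Cc)-1$, as desired. (If $r=0$ one uses the convention $d_0(\Cc)=0$, and the inequality is immediate from $d_{m_k}(\Cc)\geq d(\Cc)\geq 1$.)
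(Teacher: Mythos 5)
Your proof is correct and follows essentially the same route as the paper's: take an optimal anticode $\A$ attaining $d_{r+m_k}(\Cc)$, use the hypothesis $d_{r+m_k}(\Cc)>\sum_{i=1}^{k-1}n_i$ to find a nonzero factor $\A_j$ with $j\geq k$, shrink it to a sub-anticode of codimension $m_j$, and use $m_j\leq m_k$ to keep the intersection dimension at least $r$. The only addition is your explicit handling of the $r=0$ edge case, which the paper leaves implicit.
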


\begin{proof}
Let $\mathcal{A}=\mathcal{A}_1\times\cdots\times\mathcal{A}_\ell$ be an optimal anticode such that $\maxsrk(\mathcal{A})=d_{r+m_k}(\Cc)$ and $\dim(\Cc\cap\mathcal{A})\geq r+m_k$. We claim that there exists $k\leq j\leq \ell$ such that $\mathcal{A}_j\neq 0$. In fact, if this were not the case, then 
$$\sum_{i=1}^{k-1}n_i\geq\maxsrk(\mathcal{A})=d_{r+m_k}(\Cc).$$
Let $\mathcal{A}'\subseteq\mathcal{A}$ be an optimal anticode such that
$$\dim(\mathcal{A}')=\dim(\mathcal{A})-m_j \;\mbox{ and }\; \maxsrk(\A')=\maxsrk(\A)-1.$$
One has
$$\dim(\Cc\cap\mathcal{A}')\geq\dim(\Cc\cap\mathcal{A})-m_j\geq r+m_k-m_j\geq r,$$
hence
$$d_r(\Cc)\leq\maxsrk(\A')=d_{r+m_k}(\Cc)-1.$$
\end{proof}

The next theorem extends Wei's Duality Theorem~\cite[Theorem~3]{Wei} and~\cite[Corollary 38]{Rav16}.
Let $m_1=\ldots=m_\ell=m$ and let $\Cc \subseteq \MM$ be a sum-rank metric code. For any $r\in \Z$ define 
$$W_r(\Cc) = 
\{ d_{r+sm}(\Cc) : s \in \Z , r+sm\in[\dim(\Cc)]\},$$
$$\overline{W}_r(\Cc) = \bigg\{ n+ 1 - d_{r+sm}(\Cc) : s \in \Z , r+sm\in[\dim(\Cc)]\bigg \}.$$

The same arguments as in \cite[Corollary 38]{Rav16} together with Proposition~\ref{properties} prove the next theorem.

\begin{theorem} \label{weiduality}
Let $m_1=\ldots=m_\ell=m$, $r\in[m]$, and let $\Cc \subseteq \MM$ be a sum-rank metric code. Then 
$$W_r(\Cd) = [n] \backslash \overline{W}_{r + \dim(\Cc)}(\Cc).$$
In particular the generalized weights of a sum rank metric code $\Cc$ determine the generalized weights of $\Cd$.
\end{theorem}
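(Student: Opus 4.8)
The plan is to recast both sides of the equality through the \emph{intersection profile} of $\Cc$ with product optimal anticodes, and to reduce the statement to an elementary counting identity modulo $m$. For a product optimal anticode $\A=\A_1\times\dots\times\A_\ell$ with $\A_i\subseteq\F_q^{m\times n_i}$, I set $K(\Cc,u)=\max\{\dim(\Cc\cap\A):\maxsrk(\A)=u\}$ for $0\le u\le n$. Since enlarging $\A$ can only increase the intersection, $K(\Cc,\cdot)$ is nondecreasing with $K(\Cc,0)=0$ and $K(\Cc,n)=\dim(\Cc)$, and by definition $d_r(\Cc)=\min\{u:K(\Cc,u)\ge r\}$. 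Hence $u\in[n]$ is a generalized weight of $\Cc$ exactly when $K(\Cc,u)>K(\Cc,u-1)$, and the set of indices realizing it is the block $\{K(\Cc,u-1)+1,\dots,K(\Cc,u)\}$. A one-column reduction of $\A$ lowers $\dim(\A)$ by $m$, so each jump $K(\Cc,u)-K(\Cc,u-1)$ is at most $m$; this is precisely the weak monotonicity recorded in Proposition~\ref{properties}.

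The key structural input is a duality identity for $K$. First I would observe that, when $m_1=\dots=m_\ell=m$, the dual of a product optimal anticode is again one: $\A^\perp=\A_1^\perp\times\dots\times\A_\ell^\perp$, where each $\A_i^\perp$ is an optimal rank-metric anticode (by~\cite[Theorem~54]{Rav16a} when $m\ge 2$, and trivially when $m=1$, since $n_i\le m_i$ then forces $n_i=1$ and $\A_i\in\{0,\F_q\}$); moreover $\maxsrk(\A^\perp)=n-\maxsrk(\A)$ and $\dim(\A^\perp)=mn-\dim(\A)$. This restriction to \emph{products} sidesteps the exceptional case of Proposition~\ref{prop:dualOAC}, which only concerns non-product anticodes. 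Writing $k=\dim(\Cc)$ and using $\dim(\Cc^\perp\cap\A^\perp)=mn-\dim(\Cc+\A)=\dim(\Cc\cap\A)+mn-k-m\,\maxsrk(\A)$ for an $\A$ attaining $K(\Cc,u)$, one gets $K(\Cc^\perp,n-u)\ge K(\Cc,u)+mn-k-mu$; the symmetric inequality applied to $(\Cc^\perp)^\perp=\Cc$ forces equality, yielding
$$K(\Cc^\perp,w)=K(\Cc,n-w)+mw-k\qquad(0\le w\le n).$$

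The heart of the proof, and the step I expect to be the main obstacle, is the passage from this identity to the residue-indexed sets $W_r$ and $\overline W_{r+k}$. Fix $w\in[n]$ and set $a=K(\Cc,n-w)$, $b=K(\Cc,n-w+1)$, so $0\le b-a\le m$ by the jump bound. The indices $j$ with $d_j(\Cc)=n+1-w$ form the block $\{a+1,\dots,b\}$, while the identity shows that the indices $i$ with $d_i(\Cc^\perp)=w$ form $\{b+mw-m-k+1,\dots,a+mw-k\}$. Translating the latter block by $mw-m-k\equiv -k\pmod m$ turns it into $\{b+1,\dots,a+m\}$ and shifts its residues by $+k$. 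Thus $w\in W_r(\Cc^\perp)$ iff $\{b+1,\dots,a+m\}$ meets the class $\bar r\equiv r+k\pmod m$, whereas $(n+1-w)\in W_{\bar r}(\Cc)$ iff $\{a+1,\dots,b\}$ meets it. Since these two blocks partition the window $\{a+1,\dots,a+m\}$ of $m$ consecutive integers, which contains exactly one representative of the class $\bar r$, precisely one of the two conditions holds.

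Finally I would verify the bookkeeping that makes this conclusive: every index appearing above automatically lies in the admissible range ($[mn-k]$ for $\Cc^\perp$ and $[k]$ for $\Cc$), since $K(\Cc^\perp,\cdot)$ and $K(\Cc,\cdot)$ take values there, and $\overline W_{r+k}(\Cc)=\{n+1-v:v\in W_{\bar r}(\Cc)\}$ because the indices $r+k+sm$ run exactly through the residue class $\bar r$ within $[k]$. Therefore, for each $w\in[n]$, exactly one of $w\in W_r(\Cc^\perp)$ and $w\in\overline W_{r+k}(\Cc)$ holds, which is the claimed equality $W_r(\Cd)=[n]\setminus\overline W_{r+\dim(\Cc)}(\Cc)$.
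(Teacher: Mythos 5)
Your proof is correct, and it is worth noting that the paper does not actually write one out: it simply asserts that the arguments of Ravagnani's \cite[Corollary 38]{Rav16}, combined with Proposition~\ref{properties}, carry over. Those arguments rest on the same two pillars you use --- the duality of product optimal anticodes together with the dimension formula $\dim(\Cc^\perp\cap\A^\perp)=\dim(\Cc\cap\A)+\dim(\A^\perp)-\dim(\Cc)$, and the weak monotonicity of the weights --- but they are organized differently: one first proves a ``non-crossing'' lemma (no value of the form $n+1-d_s(\Cc)$ in the relevant residue class can equal a $d_j(\Cd)$ in the complementary class), and then forces the partition of $[n]$ by a cardinality count of $W_r(\Cd)$ and $\overline{W}_{r+\dim(\Cc)}(\Cc)$. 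Your route instead packages everything into the intersection profile $K(\Cc,u)$ and derives the exact functional equation $K(\Cc^\perp,w)=K(\Cc,n-w)+mw-\dim(\Cc)$, from which both the disjointness and the covering of $[n]$ fall out simultaneously from the observation that the two index blocks tile a window of $m$ consecutive integers. What this buys is a genuinely self-contained proof in which the congruence bookkeeping and the degenerate cases (empty blocks, $\Cc=0$ or $\Cc=\MM$ at a given column count) are handled uniformly, at the mild cost of having to justify the two containment facts about product optimal anticodes (every one of $\maxsrk$ $u$ sits inside one of $\maxsrk$ $u+1$ and contains one of codimension $m$), which the paper also uses implicitly in Proposition~\ref{properties}. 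Your observation that restricting to \emph{products} of optimal anticodes avoids the exceptional case $q=2$, $m_{\ell-2}=1$ of Proposition~\ref{prop:dualOAC} is exactly the right point to make, since the generalized weights here are defined via product anticodes only.
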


The next example shows that the generalized weights of a code do not determine those of its dual for arbitrary $m_i$'s.

\begin{example}
Let $\Cc_1,\Cc_2\subseteq\F_2^{3\times 1}\times \F_2^{2\times 2}$ be given by
\begin{equation*}
    \begin{split}
        \Cc_1&=0\times \F_2^{2\times 2}\\ 
        \Cc_2&=\left\{\left( \begin{pmatrix} a \\ b \\ 0\end{pmatrix}, \begin{pmatrix} c & d \\0 &0\end{pmatrix}\right) : (a,b,c,d) \in \mathbb{F}_2^4\right\}.
    \end{split}
\end{equation*}
One can check that $d_1(\Cc_i) = d_2(\Cc_i) = 1$ and $d_3(\Cc_i) = d_4(\Cc_i) = 2 $ for $i = 1,2$. The corresponding duals 
\begin{equation*}
    \begin{split}
        \Cc_1^\perp&=\F_2^{3 \times 1}\times 0 \\ 
        \Cc_2^\perp&=\left\{\left( \begin{pmatrix}0\\ 0 \\ a\end{pmatrix}, \begin{pmatrix}0 &0\\ b & c \end{pmatrix}\right) : (a,b,c) \in \mathbb{F}_2^3\right\}
    \end{split}
\end{equation*}
have different generalized weights, as $d_3(\Cc_1^\perp) = 1$ and $d_3(\Cc_2^\perp) = 2$. 
\end{example}

\begin{remark}
Notice that the first code in the previous example is an optimal anticode, while the second one is not, as its first component is not an optimal rank-metric anticode. Therefore, the example also shows that in the sum-rank metric there exist codes which have the same dimension and generalized weights as an optimal anticode, without being one. This is in contrast with codes endowed with the rank metric or the Hamming metric, where a code which has the same dimension and generalized weights as an optimal anticode is an optimal anticode.
\end{remark}

\begin{remark}
There is another simple situation in which the generalized weights of the dual code are determined by numerical data on the original code. 
Let $\Cc=\Cc_1\times\ldots\times\Cc_\ell$, then the generalized weights of $\Cc$ satisfy 
\begin{equation*}
d_r(\Cc)=\min\left\{\sum_{i=1}^{\ell}d_{r_i}(\Cc_i) : \sum_{i=1}^{\ell} r_i=r, r_i\in[\dim(\Cc_i)]\right\}.
\end{equation*}
The generalized weights of the rank-metric codes $\Cc_1,\ldots,\Cc_\ell$ determine those of $\Cc_1^\perp,\ldots,\Cc_\ell^\perp$, hence they determine the generalized weights of $\Cc^\perp$.
\end{remark}

We conclude this section with a result on the weights of a code which is $\F_{q^m}$-linear or, more generally, $\F_{q^k}$-linear. 
Let $k=\gcd\{m_1,\ldots,m_\ell\}$. As $k\mid m_i$ for all $i\in[\ell]$, then $\F_{q^{m_1}}^{n_1}\times\dots\times\F_{q^{m_{\ell}}}^{n_{\ell}}$ is a vector space over $\F_{q^{k}}$.
For $i\in[\ell]$, let $\Gamma_{i}=\{\gamma_{1,i},\dots,\gamma_{m_i,i}\}$ be a basis of $\F_{q^{m_i}}$ over $\F_q$. For every $w\in\F_{q^{m_i}}^{n_i}$ define $\Gamma_i(w)\in\F_{q}^{m_i\times n_i}$ via the identity
\begin{equation*}
   \begin{pmatrix} \gamma_{1,i} & \dots & \gamma_{m_i,i}\end{pmatrix} \Gamma_i(w)=w.
\end{equation*}
For every $v=(v_1,\dots,v_{\ell})\in\F_{q^{m_1}}^{n_1}\times\dots\times\F_{q^{m_{\ell}}}^{n_{\ell}}$, define $\Gamma(v)\in\MM$ as
\begin{equation*}
    (\Gamma(v))_i=\Gamma_i(v_i).
\end{equation*}
Let $\mathcal{V}\subseteq\F_{q^{m_1}}^{n_1}\times\dots\times\F_{q^{m_{\ell}}}^{n_{\ell}}$ be a vector space over $\F_{q^{k}}$. The set $\Gamma(\mathcal{V})=\{\Gamma(v):v\in \mathcal{V}\}$ is the sum-rank metric code associated to $\mathcal{V}$ with respect to $\{\Gamma_1,\dots,\Gamma_{\ell}\}$. We say that $\Gamma(\mathcal{V})$ is $\F_{q^{k}}$-linear, see also \cite[Definition 11.1.3]{G21}. 
In the next theorem we extend the result in~\cite[Theorem~28]{Rav16} to the sum-rank metric case. 
The statement in particular applies to $\F_{q^m}$-linear codes in the case when $m_1=\ldots=m_\ell=m$.

\begin{theorem}\label{thm:klinear}
Let $k=\gcd\{m_1,\ldots,m_\ell\}$, let $\mathcal{V}\subseteq\F_{q^{m_1}}^{n_1}\times\dots\times\F_{q^{m_{\ell}}}^{n_{\ell}}$ be an $\F_{q^{k}}$-linear vector space with $\dim_{\F_{q^{k}}}(\mathcal{V})=t$. If $m_i>n_i$ for $i\in[\ell]$, then $$d_{kr+1}(\Gamma(\mathcal{V}))=\ldots=d_{k(r+1)}(\Gamma(\mathcal{V}))$$ for $0\leq r<t$.
\end{theorem}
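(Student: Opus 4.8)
The plan is to mimic the rank-metric argument of \cite[Theorem~28]{Rav16}: I want to show that the function $r\mapsto d_r(\Gamma(\mathcal{V}))$ can only increase at multiples of $k$, because all the relevant intersections have $\F_q$-dimension divisible by $k$. Write $\Cc=\Gamma(\mathcal{V})$. Since $\Gamma=(\Gamma_1,\ldots,\Gamma_\ell)$ is an $\F_q$-linear isomorphism, $\dim_{\F_q}(\Cc)=\dim_{\F_q}(\mathcal{V})=kt$, so the indices $kr+\epsilon$ with $0\leq r<t$ and $1\leq\epsilon\leq k$ range exactly over $[\dim(\Cc)]$. By Proposition~\ref{properties}.2 we already have
$$d_{kr+1}(\Cc)\leq d_{kr+2}(\Cc)\leq\cdots\leq d_{k(r+1)}(\Cc),$$
so it remains to establish the reverse inequality $d_{k(r+1)}(\Cc)\leq d_{kr+1}(\Cc)$.

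The key step is to equip $\MM$ with an $\F_{q^k}$-vector space structure under which both $\Cc$ and every optimal anticode are $\F_{q^k}$-subspaces. For $\beta\in\F_{q^k}$ and $i\in[\ell]$, let $M_\beta^{(i)}\in\F_q^{m_i\times m_i}$ be the matrix of multiplication by $\beta$ on $\F_{q^{m_i}}$ with respect to the basis $\Gamma_i$; this is well defined since $k\mid m_i$, so $\F_{q^k}\subseteq\F_{q^{m_i}}$. Letting $\beta$ act on $\MM$ by the block-diagonal left multiplication $(C_1,\ldots,C_\ell)\mapsto(M_\beta^{(1)}C_1,\ldots,M_\beta^{(\ell)}C_\ell)$ turns $\MM$ into an $\F_{q^k}$-vector space, and by construction of $\Gamma$ this action restricts to $\F_{q^k}$-scalar multiplication on $\Cc$, so $\Cc$ is an $\F_{q^k}$-subspace.

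Next I would show that every optimal anticode $\mathcal{A}\subseteq\MM$ is $\F_{q^k}$-invariant. Since $m_i>n_i\geq 1$ forces $m_i\geq 2$ for all $i$, in particular $m_{\ell-2}\geq 2$, Corollary~\ref{cor:OAC} applies and yields $\mathcal{A}=\mathcal{A}_1\times\cdots\times\mathcal{A}_\ell$ with each $\mathcal{A}_i\subseteq\F_q^{m_i\times n_i}$ an optimal rank-metric anticode. Here the hypothesis $m_i>n_i$ is essential: when the number of rows strictly exceeds the number of columns, the only optimal rank-metric anticodes are the row-type spaces $\{X:\mathrm{rowsp}(X)\subseteq V_i\}$ for some $V_i\subseteq\F_q^{n_i}$ \cite{Rav16a}, and such a space is invariant under left multiplication by every matrix of $\F_q^{m_i\times m_i}$, in particular by $M_\beta^{(i)}$. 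Hence $\mathcal{A}$ is an $\F_{q^k}$-subspace of $\MM$, and therefore so is $\Cc\cap\mathcal{A}$; consequently $k\mid\dim_{\F_q}(\Cc\cap\mathcal{A})$.

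The conclusion is then immediate. Let $\mathcal{A}$ be an optimal anticode realizing $d_{kr+1}(\Cc)$, so that $\dim(\Cc\cap\mathcal{A})\geq kr+1$ and $\maxsrk(\mathcal{A})=d_{kr+1}(\Cc)$. Since $\dim(\Cc\cap\mathcal{A})$ is a multiple of $k$ and at least $kr+1$, it is at least $k(r+1)$. Thus $\mathcal{A}$ is also admissible in the definition of $d_{k(r+1)}(\Cc)$, giving $d_{k(r+1)}(\Cc)\leq\maxsrk(\mathcal{A})=d_{kr+1}(\Cc)$, which together with the monotonicity chain above forces all of $d_{kr+1}(\Cc),\ldots,d_{k(r+1)}(\Cc)$ to coincide. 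I expect the only delicate point to be the $\F_{q^k}$-invariance of optimal anticodes, that is, the appeal to the classification of optimal rank-metric anticodes when $m_i>n_i$; once that is in place, the divisibility argument and the squeeze are routine.
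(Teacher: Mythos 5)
Your proof is correct and follows essentially the same route as the paper's: both arguments rest on the observation that, because $m_i>n_i$, every optimal anticode occurring in the definition of $d_r$ is invariant under the block-diagonal left-multiplication action of $\F_{q^k}$, so $\Cc\cap\A$ is $\F_{q^k}$-linear and its $\F_q$-dimension is divisible by $k$, after which the squeeze with Proposition~\ref{properties} is immediate. You simply make explicit the details (the matrices $M_\beta^{(i)}$, the classification of optimal rank-metric anticodes as row-support spaces) that the paper's terser proof leaves implicit; note only that the appeal to Corollary~\ref{cor:OAC} is not needed, since the definition of $d_r(\Cc)$ already ranges only over product anticodes $\A_1\times\cdots\times\A_\ell$.
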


\begin{proof}
Write $\Cc$ for $\Gamma(\mathcal{V})$. By Proposition~\ref{properties}, $d_{kr+1}(\Cc)\leq\ldots\leq d_{k(r+1)}(\Cc)$. Therefore it suffices to show that $d_{kr+1}(\Cc)=d_{k(r+1)}(\Cc)$. Since $m_i>n_i$ for $i\in[\ell]$, $\mathcal{A}$ is an $\F_{q^{k}}$-linear code and so $\Cc\cap\mathcal{A}$ is $\F_{q^{k}}$-linear too. Since the dimension over $\F_q$ of an $\F_{q^{k}}$-linear vector space is divisible by $k$, if $\dim(\Cc\cap\mathcal{A})\geq kr+1$, then $\dim(\Cc\cap\mathcal{A})\geq k(r+1)$. Therefore we conclude that $d_{kr+1}(\Cc)\geq d_{k(r+1)}(\Cc)$.
\end{proof}

\begin{remark}
Although the condition that $m>n$ is missing in the statement of \cite[Theorem~28]{Rav16}, it is necessary for the result to hold. In fact, \cite[Example 6.15]{gluesing2021q} is a counterexample to the statement of \cite[Theorem~28]{Rav16} for square matrices.
\end{remark}

\section{MSRD codes}\label{sec:MRD}

In this section we define MSRD and $r$-MSRD codes, and compute their generalized weights. 

\begin{notation}
Let $\mu\in[n]$. We denote by $\mathbb{A}(\mu)$ the set of optimal anticodes of the form $\A=\A_1\times\ldots\times\A_\ell\subseteq\MM$, with $\A_i\subseteq\F_q^{m_i\times n_i}$ optimal rank-metric anticode for all $i\in[\ell]$ and $\maxsrk(\A)=\sum_{i=1}^\ell \maxrk(A_i)=\mu$.
\end{notation}

The next result follows from Lemma~\ref{lemma:nos}.

\begin{lemma}\label{dim oac}
Let $\mu\in[n]$ and write $\mu=\sum_{i=1}^{j-1} n_i+\delta=\sum_{i=l+1}^\ell n_i+\delta'$ for some $j,l\in[\ell]$, $\delta\in[n_j]$, and $\delta^\prime\in[n_l]$. Then
$$\min_{\A\in\mathbb{A}(\mu)}\dim(\A)=\sum_{i=l+1}^\ell m_in_i+\delta' m_l$$ 
and
$$\max_{\A\in\mathbb{A}(\mu)}\dim(\A)=\sum_{i=1}^{j-1} m_in_i+\delta m_j.$$
Moreover, if $$\min_{\A\in\mathbb{A}(\mu)}\dim(\A)=\max_{\A\in\mathbb{A}(\mu)}\dim(\A),$$ then either $\mu=n$ or $m_1=\ldots=m_\ell$.
\end{lemma}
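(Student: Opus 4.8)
The plan is to reduce the entire statement to a clean optimization over partitions of $\mu$. Given $\mu\in[n]$, choosing an optimal anticode $\A\in\mathbb{A}(\mu)$ amounts to choosing nonnegative integers $u_1,\ldots,u_\ell$ with $u_i=\maxrk(\A_i)\leq n_i$ and $\sum_{i=1}^\ell u_i=\mu$; the corresponding dimension is $\dim(\A)=\sum_{i=1}^\ell m_i u_i$. So the first step is to record that
\begin{equation*}
\min_{\A\in\mathbb{A}(\mu)}\dim(\A)=\min\Big\{\sum_{i=1}^\ell m_iu_i : 0\leq u_i\leq n_i,\ \sum_{i=1}^\ell u_i=\mu\Big\},
\end{equation*}
and similarly for the maximum. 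Since $m_1\geq\cdots\geq m_\ell$, minimizing $\sum m_iu_i$ subject to a fixed total $\mu$ means loading the weight $\mu$ onto the indices with the smallest $m_i$, i.e. filling $u_\ell=n_\ell,\,u_{\ell-1}=n_{\ell-1},\ldots$ greedily from the right. This is a standard rearrangement/greedy exchange argument.

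\textbf{Computing the two extrema.} For the minimum I would fill the rightmost coordinates to capacity. Writing $\mu=\sum_{i=l+1}^\ell n_i+\delta'$ with $\delta'\in[n_l]$ means exactly that coordinates $l+1,\ldots,\ell$ are completely filled (contributing $\sum_{i=l+1}^\ell n_i$) and coordinate $l$ absorbs the remaining $\delta'$, while coordinates $1,\ldots,l-1$ stay at $0$. The greedy/exchange argument shows this is optimal: any feasible allocation can be modified, without increasing $\sum m_iu_i$, by moving mass from a smaller index to a larger one (which has $m$ no larger), until it agrees with the greedy one. This yields
\begin{equation*}
\min_{\A\in\mathbb{A}(\mu)}\dim(\A)=\sum_{i=l+1}^\ell m_in_i+\delta' m_l.
\end{equation*}
Symmetrically, the maximum fills the leftmost coordinates first: with $\mu=\sum_{i=1}^{j-1}n_i+\delta$, $\delta\in[n_j]$, the same exchange argument (now moving mass toward smaller indices increases the sum) gives $\max_{\A}\dim(\A)=\sum_{i=1}^{j-1}m_in_i+\delta m_j$.

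\textbf{The equality case.} For the last claim, suppose $\min=\max$, so the objective $\sum m_iu_i$ is constant over all feasible $(u_i)$. The idea is that if two distinct indices $a<b$ both have $m_a>m_b$ and there is "room to shift" mass between them, the objective cannot be constant. Concretely, if not all $m_i$ are equal, let $a$ be the largest index with $m_a>m_{a+1}$, so $m_a>m_{a+1}=\cdots=m_\ell$ and the distinct value $m_a$ really occurs. If $\mu<n$ I expect to be able to exhibit two feasible allocations with different objective values: since $\mu<n$ there is slack $\sum n_i-\mu\geq 1$, and one can arrange one allocation that puts a unit of mass at an index with weight $m_a$ and another that puts it at an index with a strictly smaller weight, contradicting constancy. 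This forces $\mu=n$ when the $m_i$ are not all equal, which is the contrapositive of the desired "$\mu=n$ or $m_1=\cdots=m_\ell$."

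\textbf{Main obstacle.} The routine part is the greedy formulas; the delicate part is the equality case, because one must verify that whenever $\mu<n$ and the $m_i$ are not all equal there genuinely exist two feasible allocations achieving different objective values. The boundary behavior — when $\mu$ exactly saturates a block of equal $m_i$'s, or when the only available slack sits inside a constant block — needs care, since moving mass within a block of equal $m_i$ does not change the objective. I would handle this by checking that the slack can always be used to move one unit across a genuine "jump" $m_a>m_{a+1}$: either the filled part extends into a region of strictly larger $m$ that one can relieve, or the slack lies in a region of strictly smaller $m$ into which one can push mass, and in the remaining degenerate configuration one is forced to conclude $\mu=n$. Making this case split exhaustive is where the real bookkeeping lies.
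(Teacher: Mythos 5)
The paper offers no argument here (it calls the lemma ``immediate''), so there is nothing to compare against; your reduction to the integer program --- choose $u_i=\maxrk(\A_i)$ with $0\leq u_i\leq n_i$ and $\sum_i u_i=\mu$, so that $\dim(\A)=\sum_i m_iu_i$ --- together with the left-to-right and right-to-left greedy fillings is surely the intended route, and your derivation of the two displayed formulas is correct. The one soft spot is the equality case, where you yourself admit the case split is not exhaustive. It closes in one line if you rephrase the optimization as choosing a size-$\mu$ sub-multiset of the multiset in which $m_i$ occurs with multiplicity $n_i$: write $w_1\geq\cdots\geq w_n$ for this multiset in decreasing order, so that the maximum is $w_1+\cdots+w_\mu$ and the minimum is $w_{n-\mu+1}+\cdots+w_n$. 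Since $w_k\geq w_{k+(n-\mu)}$ for every $k\in[\mu]$, equality of the two sums forces $w_k=w_{k+(n-\mu)}$ for all $k\in[\mu]$; if $\mu<n$ the shift $n-\mu$ is at least $1$, so monotonicity gives $w_k=w_{k+1}=\cdots=w_{k+(n-\mu)}$ for each $k\in[\mu]$, and these intervals overlap consecutively and cover $[n]$, whence $w_1=\cdots=w_n$ and $m_1=\cdots=m_\ell$. This replaces the bookkeeping about where the slack sits relative to the jumps in the $m_i$, which is exactly the part of your argument that was left dangling.
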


\begin{notation}
Let $\mu\in[n]$ and write $\mu=\sum_{i=1}^{j-1}n_i+\delta+1$, $0\leq \delta\leq n_j-1$.
Throughout the section, we denote $$r_\mu=\max_{\mathcal{A}\in\mathbb{A}(\mu)}\dim(\mathcal{A})=\sum_{i=1}^{j-1} m_in_i+(\delta+1)m_j.$$ 
\end{notation}

\begin{comment}
\begin{remark}
Let $d\in[n]$ and write $d=\sum_{i=1}^{j-1}n_i+\delta+1$, $0\leq \delta\leq n_j-1$. Then
$$r_{d+1}-r_d=\begin{cases}
m_j& \mbox{ if } \delta<n_j-1,\\
m_{j+1}& \mbox{ if } \delta=n_{j}-1.
\end{cases}$$
\end{remark}
\end{comment}

The Singleton Bound for rank-metric codes was first proved in~\cite[Theorem~5.4]{Del}. A Singleton Bound for sum-rank metric codes was established in~\cite[Theorem 3.2]{BGRMSRD}, for codes which are not necessarily linear. Our next theorem generalizes the previous results in the case of linear sum-rank metric codes.

\begin{theorem}\label{singletonbound}
Let $\Cc\subseteq\MM$ be a code and let $r\in[\dim(\Cc)]$. Let $j\in[\ell]$ and $0\leq\delta\leq n_j-1$ be such that $$d_r(\Cc)-1\geq\sum_{i=1}^{j-1} n_i+\delta.$$
Then
\begin{equation}\label{sb}
\dim(\Cc)\leq\sum_{i=j}^\ell m_in_i-m_{j}\delta+r-1.
\end{equation}
\end{theorem}

\begin{proof}
Let $\mathcal{A}_i=\mathbb{F}_q^{m_i\times n_i}$ for $i\in[j-1]$, let $\mathcal{A}_j\subseteq\mathbb{F}_q^{m_j\times n_j}$ be an optimal anticode of dimension $\delta m_j$, and let $\A_i=0$ for $j+1\leq i\leq\ell$.
Let $\mathcal{A}=\mathcal{A}_1\times\cdots\times\mathcal{A}_\ell$, then
$$\dim(\Cc\cap\mathcal{A})\leq r-1.$$
Therefore
\begin{align*}
\dim(\Cc)+\sum_{i=1}^{j-1} m_in_i+m_j \delta-r+1&\leq\dim(\Cc)+\dim(\mathcal{A})-\dim(\Cc\cap\mathcal{A})\\
&=\dim(\Cc+\mathcal{A})\leq\sum_{i=1}^\ell m_in_i.
\end{align*}
\end{proof}

Theorem \ref{singletonbound} yields upper bounds on all the generalized weights of $\Cc$.

\begin{corollary}\label{cor.rsb}
Let $\Cc\subseteq\MM$ be a code and let $r\in[\dim(\Cc)]$, $j\in[\ell]$, and $0\leq\delta\leq n_j-1$ be such that $\dim(\Cc)\geq\sum_{i=j}^\ell m_in_i-m_{j}\delta+r$. Then
$$d_r(\Cc)\leq\sum_{i=1}^{j-1} n_i+\delta.$$
In particular, if $\dim(\Cc)=\sum_{i=j}^\ell m_in_i-m_{j}\delta$, then $$d_1(\Cc)\leq\ldots\leq d_{m_j}(\Cc)\leq \sum_{i=1}^{j-1} n_i+\delta+1.$$
\end{corollary}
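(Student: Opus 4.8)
The inequality $d_r(\Cc)\leq\sum_{i=1}^{j-1}n_i+\delta$ is nothing more than the contrapositive of Theorem~\ref{singletonbound}, so the plan is to phrase it as such: if one had $d_r(\Cc)\geq\sum_{i=1}^{j-1}n_i+\delta+1$, equivalently $d_r(\Cc)-1\geq\sum_{i=1}^{j-1}n_i+\delta$, then Theorem~\ref{singletonbound} would give $\dim(\Cc)\leq\sum_{i=j}^\ell m_in_i-m_j\delta+r-1$, contradicting the standing hypothesis $\dim(\Cc)\geq\sum_{i=j}^\ell m_in_i-m_j\delta+r$. The ranges of $r,j,\delta$ match exactly in the two statements, so no extra work is needed here.

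For the ``in particular'' part I would first use the monotonicity of Proposition~\ref{properties}.2 to reduce the chain $d_1(\Cc)\leq\dots\leq d_{m_j}(\Cc)$ to a single estimate on the largest term $d_{m_j}(\Cc)$. The strategy is then to apply the first part of the corollary at the index $r=m_j$, but with the anticode parameters advanced by one step: I replace $(j,\delta)$ by its ``successor'' in the natural enumeration of anticode sizes, namely $(j,\delta+1)$ when $\delta<n_j-1$. With this choice the required threshold becomes $\sum_{i=j}^\ell m_in_i-m_j(\delta+1)+m_j=\sum_{i=j}^\ell m_in_i-m_j\delta$, which is met with equality by the hypothesis $\dim(\Cc)=\sum_{i=j}^\ell m_in_i-m_j\delta$, and the conclusion reads precisely $d_{m_j}(\Cc)\leq\sum_{i=1}^{j-1}n_i+\delta+1$.

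The only real obstacle is the boundary of the parameter range, i.e. the case $\delta=n_j-1$, where $\delta+1=n_j$ is no longer admissible. Here the successor of $(j,n_j-1)$ is $(j+1,0)$: for $j<\ell$ I would apply the first part with $r=m_j$, $j'=j+1$ and $\delta'=0$, observing that $\dim(\Cc)=\sum_{i=j}^\ell m_in_i-m_j(n_j-1)=\sum_{i=j+1}^\ell m_in_i+m_j$ is exactly the threshold and that the resulting bound $\sum_{i=1}^{j}n_i$ equals $\sum_{i=1}^{j-1}n_i+\delta+1$. The extreme corner $j=\ell$, $\delta=n_\ell-1$ forces $\dim(\Cc)=m_\ell$ and has no available successor, so I would instead invoke Proposition~\ref{properties}.4 directly: $d_{m_\ell}(\Cc)=d_{\dim(\Cc)}(\Cc)\leq n=\sum_{i=1}^{\ell-1}n_i+\delta+1$. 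A routine side-check in every case is that $r=m_j$ is a legitimate index, i.e. $m_j\leq\dim(\Cc)$, which follows from $\dim(\Cc)\geq m_j(n_j-\delta)\geq m_j$.
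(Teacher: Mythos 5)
Your proof is correct and follows exactly the route the paper intends: the first inequality is the contrapositive of Theorem~\ref{singletonbound}, and the ``in particular'' part follows by applying it at $r=m_j$ with the successor parameters $(j,\delta+1)$ or $(j+1,0)$, together with the monotonicity from Proposition~\ref{properties}. Your careful treatment of the boundary cases (including $j=\ell$, $\delta=n_\ell-1$ via Proposition~\ref{properties}.4) fills in details the paper leaves implicit, and all the threshold computations check out.
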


Corollary \ref{cor.rsb} suggests the following definition of MSRD code. The same definition was given in \cite[Definition 3.3]{BGRMSRD} for codes which are not necessarily linear.

\begin{definition}
A code $\Cc$ is MSRD if there exist $j\in[\ell]$ and $0\leq \delta\leq n_j-1$ such that $$d(\Cc)=\sum_{i=1}^{j-1} n_i+\delta+1 \;\mbox{ and }\; \dim(\Cc)=\sum_{i=j}^{\ell} m_in_i-\delta m_{j}.$$
\end{definition}

Next we study some properties which are closely related to being MSRD.

\begin{enumerate}
\item[$(C0)$] For any $\mathcal{A}$ optimal anticode of $\maxsrk(\A)=d(\Cc)-1$ and $\dim(\A)=r_{d(\Cc)-1}$ one has $\Cc+\A=\MM$.
\item[$(C1)$] The code $\Cc$ has $\dim(\Cc)=\sum_{i=j}^\ell m_in_i-m_j\delta$ and for any $\mathcal{A}$ optimal anticode of $\maxsrk(\A)\leq \sum_{i=1}^{j-1}n_i+\delta$ one has $\Cc\cap\A=0$. 
\item[$(C2)$] For any $\mathcal{A}\in\mathbb{A}(d(\Cc))$, let  $k=\max\{i\in[\ell]\ |\ \A_i\neq 0\}$. Then
$$\dim(\Cc\cap\mathcal{A})\geq m_k.$$
\item[$(C3)$] The code $\Cc$ has $d(\Cc)+d(\Cc^\perp)=n+2$.
\end{enumerate}

It is clear that being MSRD is equivalent to satisfying $(C0)$. We now show that it is also equivalent to satisfying $(C1)$.

\begin{proposition}
Let $j\in[\ell]$ and $0\leq\delta\leq n_j-1$. Let $0\neq\Cc\subseteq\MM$ be a code. %of $\dim(\Cc)=\sum_{i=j}^\ell m_in_i-m_j\delta$. 
Then $\Cc$ is MSRD if and only if it satisfies $(C1)$.
\end{proposition}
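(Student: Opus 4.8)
The plan is to split the biconditional and to observe first that the dimension requirement $\dim(\Cc)=\sum_{i=j}^\ell m_in_i-\delta m_j$ appears verbatim in both the definition of MSRD and in $(C1)$; so, writing $D=\sum_{i=1}^{j-1}n_i+\delta$, it remains only to show that under this dimension hypothesis the minimum-distance equality $d(\Cc)=D+1$ is equivalent to the statement that $\Cc\cap\A=0$ for every optimal anticode $\A$ with $\maxsrk(\A)\le D$.

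For MSRD $\Rightarrow(C1)$ I would argue directly and without using the structure of $\A$. Given any optimal anticode $\A$ with $\maxsrk(\A)\le D$, every element $C\in\A$ has $\srk(C)\le\maxsrk(\A)\le D<d(\Cc)$; since $d(\Cc)$ is the minimum sum-rank weight of a nonzero codeword, no nonzero element of $\Cc$ can lie in $\A$, so $\Cc\cap\A=0$.

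For the converse $(C1)\Rightarrow$ MSRD the dimension is given, so only $d(\Cc)=D+1$ has to be proved. The upper bound $d(\Cc)\le D+1$ I would read off Corollary \ref{cor.rsb}, whose ``in particular'' clause gives, under the present dimension hypothesis, $d(\Cc)=d_1(\Cc)\le\sum_{i=1}^{j-1}n_i+\delta+1=D+1$ (using $d_1(\Cc)=d(\Cc)$ from Proposition \ref{properties}). For the lower bound $d(\Cc)\ge D+1$ I would suppose, for contradiction, that some $0\neq C=(C_1,\dots,C_\ell)\in\Cc$ has $\srk(C)\le D$, and construct an anticode contradicting $(C1)$: for each $i$ pick an optimal rank-metric anticode $\A_i\subseteq\F_q^{m_i\times n_i}$ with $\maxrk(\A_i)=\rk(C_i)$ containing $C_i$ (for instance the matrices whose row space lies in that of $C_i$, exactly as in the proof of Proposition \ref{properties}). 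Then $\A=\A_1\times\dots\times\A_\ell$ is an optimal anticode with $\maxsrk(\A)=\sum_{i=1}^\ell\rk(C_i)=\srk(C)\le D$ containing the nonzero $C$, contradicting $(C1)$. Hence $d(\Cc)=D+1$ and $\Cc$ is MSRD.

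I do not expect a real obstacle: both directions rest on the elementary facts that every element of an anticode has sum-rank at most its maximum sum-rank, and that there is a product of standard rank-metric optimal anticodes of the right size passing through a prescribed codeword. The only point meriting care is the quantifier mismatch, namely that $(C1)$ ranges over all optimal anticodes while the witness built in the converse is a product of rank-metric anticodes; this is harmless, since such products are optimal anticodes and hence are among those to which $(C1)$ applies.
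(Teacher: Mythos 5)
Your proposal is correct and follows essentially the same route as the paper: the forward direction uses that every element of an anticode has sum-rank at most $\maxsrk(\A)<d(\Cc)$, and the converse combines the upper bound from Corollary \ref{cor.rsb} with the product of rank-metric optimal anticodes of dimensions $m_i\rk(C_i)$ passing through a codeword of small sum-rank. The paper phrases the lower bound directly (any nonzero codeword lies in such a product anticode, forcing $\srk(C)\geq\sum_{i=1}^{j-1}n_i+\delta+1$) rather than by contradiction, but the content is identical.
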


\begin{proof}
Suppose that $\Cc$ is MSRD of $\dim(\Cc)=\sum_{i=j}^{\ell} m_in_i-\delta m_{j}$. Let $\mathcal{A}$ be an optimal anticode of $\maxsrk(\A)\leq d(\Cc)-1$. Then $\Cc\cap\A=0$ since, for every $0\neq C\in\Cc$, one has $\srk(C)\geq d(C)>\maxsrk(\A)$, so $C\not\in\A$. 
%Moreover $\dim(\Cc+\A)=\dim(\Cc)+\dim(\A)=\dim(\MM)$.

Suppose now that $\Cc$ satisfies $(C1)$. 
Then $d(\Cc)\leq \sum_{i=1}^{j-1} n_i+\delta+1$ by Corollary \ref{cor.rsb}. Let $C=(C_1,\ldots,C_\ell)\in\Cc$. For each $i\in[\ell]$, there is an optimal rank-metric anticode $\A_i\subseteq\F_q^{m_i\times n_i}$ of $\dim(\A_i)=m_i\rk(C_i)$ which contains $C_i$. Therefore $\A=\A_1\times\ldots\times\A_\ell$ is an optimal sum-rank metric anticode of $\maxsrk(\A)=\srk(C)$ which contains $C$. Since $\Cc\cap\A\neq 0$, it must be that $\maxsrk(\A)=\srk(C)\geq \sum_{i=1}^{j-1}n_i+\delta+1$, therefore $\Cc$ is MSRD. 
\end{proof}

\begin{proposition}\label{prop.moac}
Let $0\neq\Cc\subseteq\MM$ be a code and write its minimum distance as $d=d(\Cc)=\sum_{i=1}^{j-1}n_i+\delta+1$, where $j\in[\ell]$ and $0\leq\delta\leq n_j-1$. For $S\subseteq[n]$, 
denote by $\F_q[S]$ the set of elements of $\MM$ which are
zero outside of the columns indexed by $S$.
For any $d\leq h\leq n$, let $S_h:=[d-1]\cup\{h\}$. The following hold:
\begin{enumerate}
\item $\Cc$ is MSRD if and only if for any $d\leq h\leq n$ we have $$\dim(\Cc\cap\mathbb{F}_q[S_h])=m_{k}$$
where $k=\max\{\nu\ |\ \sum_{i=1}^{\nu-1} n_i<h\}$.
\item If $\Cc$ satisfies $(C2)$, then $\Cc$ is MSRD.
\end{enumerate}
\end{proposition}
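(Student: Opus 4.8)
The plan is to reformulate both ``$\Cc$ is MSRD'' and the displayed intersection conditions in terms of a single family of column-restriction maps, after which both parts fall out. Write $d=d(\Cc)=\sum_{i=1}^{j-1}n_i+\delta+1$ and set $T=[n]\setminus[d-1]=\{d,\ldots,n\}$; concretely $T$ consists of the last $n_j-\delta$ columns of block $j$ together with all columns of blocks $j+1,\ldots,\ell$. For a column set $S$ let $\rho_S:\MM\to\MM$ be the idempotent that zeroes out every column outside $S$, so that its image is $\F_q[S]$ and its kernel is $\F_q[[n]\setminus S]$. The first observation is that $\F_q[[d-1]]$ is an optimal anticode, being a product of optimal rank-metric anticodes (one per block), with $\maxsrk(\F_q[[d-1]])=d-1<d(\Cc)$; hence no nonzero codeword can lie in it and $\Cc\cap\F_q[[d-1]]=0$. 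Consequently $\rho_T$ is injective on $\Cc$, giving $\dim(\Cc)=\dim(\rho_T(\Cc))\le\dim(\F_q[T])=\sum_{i=j}^\ell m_in_i-\delta m_j$. Since this last quantity is exactly the MSRD dimension, I obtain the key reformulation
$$\Cc\text{ is MSRD}\iff \rho_T(\Cc)=\F_q[T].$$

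Next I would analyze the individual intersections. For $d\le h\le n$ the space $\F_q[S_h]$ with $S_h=[d-1]\cup\{h\}$ is again a product of optimal rank-metric anticodes, hence lies in $\mathbb{A}(d)$, and its largest nonzero block is precisely the block $k=\max\{\nu\mid\sum_{i=1}^{\nu-1}n_i<h\}$ containing column $h$. The restriction of $\rho_{\{h\}}$ to $\Cc\cap\F_q[S_h]$ has kernel contained in $\Cc\cap\F_q[[d-1]]=0$, so it injects $\Cc\cap\F_q[S_h]$ into the column-$h$ space, which has dimension $m_k$. This already yields $\dim(\Cc\cap\F_q[S_h])\le m_k$, with equality if and only if $\rho_{\{h\}}(\Cc\cap\F_q[S_h])$ is all of the column-$h$ space.

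The heart of part $1$ is then the equivalence between $\rho_T(\Cc)=\F_q[T]$ and these fullness conditions across all $h\in T$. Since $\F_q[T]$ is the direct sum of its column-$h$ spaces over $h\in T$, one has $\rho_T(\Cc)=\F_q[T]$ exactly when each column-$h$ space is contained in $\rho_T(\Cc)$. The point to make precise is that a vector $w$ supported on column $h$ lies in $\rho_T(\Cc)$ if and only if some $c\in\Cc$ has $\rho_T(c)=w$; as $w$ is supported only on column $h$, such a $c$ is automatically supported on $[d-1]\cup\{h\}=S_h$, so $c\in\Cc\cap\F_q[S_h]$ and $\rho_{\{h\}}(c)=w$. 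Thus the column-$h$ space lies in $\rho_T(\Cc)$ if and only if $\rho_{\{h\}}(\Cc\cap\F_q[S_h])$ is full, i.e.\ if and only if $\dim(\Cc\cap\F_q[S_h])=m_k$. Combining this with the reformulation of MSRD from the first paragraph proves part $1$. Part $2$ is then immediate: condition $(C2)$ applied to the anticode $\F_q[S_h]\in\mathbb{A}(d)$ gives $\dim(\Cc\cap\F_q[S_h])\ge m_k$, and together with the upper bound from the injectivity of $\rho_{\{h\}}$ this forces equality for every $h$, so $\Cc$ is MSRD by part $1$.

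I expect no single hard obstacle here; the work is careful bookkeeping, namely matching the index $k$ with the block containing column $h$, checking that $\dim(\F_q[T])$ really equals the MSRD dimension $\sum_{i=j}^\ell m_in_i-\delta m_j$, and verifying that both restriction maps are injective on the relevant subspaces, which rests throughout on the single fact $\Cc\cap\F_q[[d-1]]=0$.
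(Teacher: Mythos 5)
Your proof is correct, and it reaches the paper's conclusion by a somewhat different organization of the dimension counts. The paper proves the forward implication of part 1 in one line via the Grassmann formula, $\dim(\Cc\cap\F_q[S_h])\geq\dim(\Cc)+\dim(\F_q[S_h])-\dim(\MM)=m_k$, and proves the converse by observing that the subspaces $\Cc\cap\F_q[S_h]$ pairwise intersect in $\Cc\cap\F_q[[d-1]]=0$, summing their dimensions to obtain $\dim(\Cc)\geq\sum_{i=j}^\ell m_in_i-\delta m_j$, and then invoking the Singleton bound (Theorem \ref{singletonbound}) to force equality. You instead run both directions through the column projections: injectivity of $\rho_T$ on $\Cc$ re-derives exactly the instance of the Singleton bound that is needed, and the decomposition of $\F_q[T]$ into its column-$h$ spaces converts surjectivity of $\rho_T$ restricted to $\Cc$ into fullness of each $\rho_{\{h\}}(\Cc\cap\F_q[S_h])$. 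Both arguments ultimately rest on the single fact $\Cc\cap\F_q[[d-1]]=0$. What your packaging buys is the unconditional upper bound $\dim(\Cc\cap\F_q[S_h])\leq m_k$, obtained up front from injectivity of $\rho_{\{h\}}$ on $\Cc\cap\F_q[S_h]$; the paper only gets the equality a posteriori by combining the summed inequality with the Singleton bound, so your part 2 is correspondingly more immediate. What the paper's packaging buys is a two-line forward direction. Your bookkeeping is sound throughout: the representation $d=\sum_{i=1}^{j-1}n_i+\delta+1$ with $0\leq\delta\leq n_j-1$ determines $j$ and $\delta$ uniquely, so MSRD reduces to the dimension condition; $\dim(\F_q[T])=\sum_{i=j}^\ell m_in_i-\delta m_j$ is the MSRD dimension; and the largest nonzero block of $\F_q[S_h]$ is indeed the block $k$ containing column $h$, so condition $(C2)$ applies with the correct index.
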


\begin{proof} 
1. Assume that $\Cc$ is MSRD and let $d\leq h\leq n$. We have
\begin{align*}
\dim(\Cc\cap\F_q[S_h])&\geq \dim (\Cc)+\dim(\F_q[S_h])-\sum_{i=1}^\ell m_in_i\\
&=\sum_{i=j}^\ell m_in_i-\delta m_j+\sum_{i=1}^{j-1} m_in_i+\delta m_j+m_{k}-\sum_{i=1}^\ell m_in_i\\
&= m_{k}.
\end{align*}
    
Conversely, suppose that for $d\leq h\leq n$ one has $\dim (\Cc\cap\F_q[S_h])\geq m_k$. Let $d\leq h'\leq n$, $h\neq h'$. Then 
$$\dim (\Cc\cap \F_{q}[S_h]\cap\F_q[S_{h'}])=\dim (\Cc\cap\F_q[[d-1]])=0$$
hence
\begin{equation}\label{eqn:dimC}
\dim(\Cc)\geq\sum_{h=d}^n \dim (\Cc\cap\F_q[S_h])\geq \sum_{i=j}^{\ell} m_in_i-\delta m_j.
\end{equation}
Theorem \ref{singletonbound} gives the reverse inequality, hence $\Cc$ is MSRD.

This proves that $\Cc$ is MSRD if and only if $\dim (\Cc\cap\F_q[S_h])\geq m_k$ for all $d\leq h\leq n$.
Notice moreover that (\ref{eqn:dimC}) and Theorem \ref{singletonbound} imply that, if $\dim (\Cc\cap\F_q[S_h])\geq m_k$ for all $d\leq h\leq n$, then in fact $\dim (\Cc\cap\F_q[S_h])=m_k$ for all $d\leq h\leq n$. This concludes the proof of the first part of the statement.
    
2. Suppose that $\Cc$ satisfies (C2). For any $d\leq h\leq n$, letting $\A=\F_q[S_h]\in\mathbb{A}(d)$, one has that $\dim (\Cc\cap \F_{q}[S_h])\geq m_k$. As shown in 1., combining (\ref{eqn:dimC}) and Theorem \ref{singletonbound} one obtains that $\Cc$ is MSRD.
\end{proof}

The next examples show that there exist nontrivial codes which satisfy property $(C2)$ and that not every MSRD code satisfies $(C2)$.

\begin{example}
In $\mathbb{F}_2^{2\times 2}\times\mathbb{F}_2^{1\times 1}$, let
$$C=\left\langle\left(\begin{pmatrix} 1&0\\0&0\end{pmatrix},1\right),\left(\begin{pmatrix} 0&0\\0&1\end{pmatrix},1\right),\left(\begin{pmatrix} 0&1\\1&0\end{pmatrix},1\right)\right\rangle.$$
We have $d(C)=2$ and $C$ satisfies $(C2)$.
\end{example}

% \begin{example}
% In $\mathbb{F}_2^{3\times 3}\times\mathbb{F}_2^{1\times 1}$, let
% \begin{align*}	
% \Cc=&\left\langle \left(\begin{pmatrix}
%   1&0&0\\
%   0&1&0\\
%   0&0&0\end{pmatrix},0\right),\,\left(\begin{pmatrix}
%   1&0&0\\
%   0&0&0\\
%   0&0&1\end{pmatrix},0\right),\,\left(\begin{pmatrix}
%   0&0&0\\
%   1&0&0\\
%   0&1&0\end{pmatrix},0\right),\,\right.\\
%   &\left(\begin{pmatrix}
%   0&0&1\\
%   1&0&1\\
%   0&0&0\end{pmatrix},0\right),\,\left(\begin{pmatrix}
%   0&1&0\\
%   0&1&0\\
%   1&0&0\end{pmatrix},0\right),\,\left(\begin{pmatrix}
%   0&0&0\\
%   0&0&1\\
%   1&0&1\end{pmatrix},0\right),\,\\
%   &\left.\left(\begin{pmatrix}
%   0&1&0\\
%   0&0&0\\
%   0&0&0\end{pmatrix},
%   1\right)\right\rangle .
%   \end{align*}
% One can check that $\Cc$ is MSRD with $d(\Cc)=2$. Let $\mathcal{A}=\langle E_{1,1},E_{2,1},E_{3,1}\rangle \times\mathbb{F}_2$. Since $d(\Cc)=2$, for any codeword $(M,0)\in\Cc$ with $M\neq 0$ there exist $i,j$ such that $M(i,j)=1$ and $j\neq 1$. \color{red}{Hence $$(M',1):=\left(M+\begin{pmatrix} 1 & 0& 0\\0&0&0\\0&0&0\end{pmatrix},1\right)$$ has $M'(i,j)=1$. Hence $\mathcal{A}\cap \Cc=0$ and $\Cc$ does not satisfy $(C2)$.}
% \end{example}

\begin{example}
Let $\mathcal{C}\subseteq\mathbb{F}_2^{3\times 3}\times\mathbb{F}_2^{2\times 2}\times\mathbb{F}_2\times\mathbb{F}_2\times\mathbb{F}_2$ be
$$\Cc=\left\langle\left(\begin{pmatrix} 1&0&0\\ 0&1&0\\ 0&0&1\end{pmatrix},\begin{pmatrix} 1&0\\ 0&1\end{pmatrix},1,1,0\right),\left(\begin{pmatrix}0&0&1\\ 1&0&1\\ 0&1&0\end{pmatrix}, \begin{pmatrix} 0&1\\ 1&1\end{pmatrix}, 0,1,1\right)\right\rangle.$$
The code $\mathcal{C}$ has dimension 2 with $d(\mathcal{C})=7$, hence it is an MSRD code. Consider now the optimal anticode
 $$\mathcal{A}=\langle E_{i,1},E_{i,2}\ |\ i\in[3]\rangle\times\mathbb{F}_2^{2\times 2}\times\mathbb{F}_2\times\mathbb{F}_2\times\mathbb{F}_2.$$
We have $\maxsrk(\mathcal{A})=7$ and %$\pi_1(\mathcal{A})\cap\pi_1(\mathcal{C})=0$, then 
$\mathcal{A}\cap\mathcal{C}=0$. Hence $\mathcal{C}$ does not satisfy $(C2)$.
\end{example}

\begin{proposition}\label{prop:C3}
Let $\Cc\subseteq\MM$ be a non-trivial code.
Then $\Cc$ satisfies $(C3)$ if and only if both $\Cc$ and $\Cc^\perp$ are MSRD.
\end{proposition}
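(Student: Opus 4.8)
The plan is to recast both sides through the Singleton bound of Theorem~\ref{singletonbound} and the explicit anticode dimensions of Lemma~\ref{dim oac}. Write $N=\dim(\MM)=\sum_{i=1}^\ell m_in_i$, $d=d(\Cc)$ and $d^\perp=d(\Cc^\perp)$, and for $\mu\in\{0,\dots,n\}$ set $P(\mu)=\max_{\A\in\mathbb{A}(\mu)}\dim(\A)=r_\mu$ and $p(\mu)=\min_{\A\in\mathbb{A}(\mu)}\dim(\A)$, both computed in Lemma~\ref{dim oac}. From those formulas I would record three facts: $P$ and $p$ are strictly increasing, $p(\mu)\le P(\mu)$, and complementation exchanges them, namely $P(\mu)+p(n-\mu)=N$ (the partial block shared by the leftmost $\mu$ and the rightmost $n-\mu$ columns is counted with the same weight $m_j$ on both sides, so the two coverages add up to all of $\MM$). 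In this language the Singleton bound for $r=1$ reads $\dim(\Cc)\le N-P(d-1)$, and by Corollary~\ref{cor.rsb} the code $\Cc$ is MSRD precisely when equality holds; the same applies to $\Cc^\perp$.

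I would then prove the a priori estimate $d+d^\perp\le n+2$, which feeds both implications. The Singleton bound applied to $\Cc^\perp$ gives $\dim(\Cc)=N-\dim(\Cc^\perp)\ge P(d^\perp-1)$, while applied to $\Cc$ it gives $\dim(\Cc)\le N-P(d-1)=p(n-(d-1))$. Combining, $P(d^\perp-1)\le p(n-(d-1))\le P(n-(d-1))$, and strict monotonicity of $P$ forces $d^\perp-1\le n-(d-1)$.

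For the forward implication, assume $(C3)$, i.e. $d^\perp-1=n-(d-1)$. The two bounds then close into the chain $\dim(\Cc)\ge P(n-(d-1))\ge p(n-(d-1))=N-P(d-1)\ge\dim(\Cc)$, forcing equality throughout. The outer equality $\dim(\Cc)=N-P(d-1)$ says that $\Cc$ is MSRD, and taking complements gives $\dim(\Cc^\perp)=N-\dim(\Cc)=P(d-1)=N-P(n-(d-1))=N-P(d^\perp-1)$, so $\Cc^\perp$ is MSRD as well. Once the identity $P(\mu)+p(n-\mu)=N$ is in place, this direction is essentially a squeeze.

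The converse is where I expect the real difficulty. If both $\Cc$ and $\Cc^\perp$ are MSRD, equality in the two Singleton bounds gives $P(d-1)+P(d^\perp-1)=N$; subtracting the identity $P(d-1)+p(n-(d-1))=N$ leaves $P(d^\perp-1)=p(n-(d-1))$. Together with the a priori bound $d^\perp-1\le n-(d-1)$ from the previous step, the whole statement reduces to the reverse inequality $d^\perp-1\ge n-(d-1)$, and this is the main obstacle: $P(d^\perp-1)$ measures coverage by the \emph{heaviest} columns, whereas $p(n-(d-1))$ measures coverage by the \emph{lightest} ones, and the crux is to show that, for the indices arising from MSRD codes, these two quantities can match only when $d^\perp-1=n-(d-1)$. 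I would attack this by unfolding the block expressions of Lemma~\ref{dim oac}: writing $d-1=\sum_{i=1}^{j-1}n_i+\delta$ and $d^\perp-1=\sum_{i=1}^{j'-1}n_i+\delta'$ and comparing $\sum_{i=1}^{j'-1}m_in_i+\delta'm_{j'}$ with $\sum_{i=j+1}^{\ell}m_in_i+(n_j-\delta)m_j$, using $m_1\ge\dots\ge m_\ell$ together with the standing hypothesis $n_i\le m_i$ to pin down $(j',\delta')$. Controlling this matching between a heaviest-first and a lightest-first coverage is the technical heart of the argument, and I would expect it to require the full strength of the MSRD hypothesis on both codes rather than the Singleton equalities alone.
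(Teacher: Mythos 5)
Your forward implication and the a priori bound $d+d^\perp\le n+2$ are correct, and your identity $P(\mu)+p(n-\mu)=N$ is a clean repackaging of what the paper does with Corollaries~\ref{cor.rsb} and~\ref{cor:singletontypebound}. The problem is the converse, which you explicitly leave open: you reduce it to showing that $P(d^\perp-1)=p(n-(d-1))$ together with $d^\perp-1\le n-(d-1)$ forces $d^\perp-1=n-(d-1)$, and you flag this as the technical heart without carrying it out. That step cannot be carried out: the implication fails as soon as the multiset of column weights (each $m_i$ with multiplicity $n_i$) has a prefix sum of some length $a$ equal to a suffix sum of some strictly larger length $c$, since then $P(a)=p(c)$ with $a<c$. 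The smallest admissible instance is $\MM=\F_q^{2\times 1}\times\F_q^{1\times 1}\times\F_q^{1\times 1}$ (so $m_1=2$, $m_2=m_3=n_1=n_2=n_3=1$, $n=3$, $N=4$), where $P(1)=2=p(2)$.

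Worse, the full MSRD hypothesis does not rescue the reduction, so the converse direction of the proposition is itself false and no argument can close your gap. In the ambient space above take $\Cc=\{((x,y)^t,x,y):x,y\in\F_q\}$. Every nonzero codeword has sum-rank $1+[x\neq 0]+[y\neq 0]\in\{2,3\}$, so $d(\Cc)=2=n_1+0+1$ while $\dim(\Cc)=2=m_2n_2+m_3n_3$; hence $\Cc$ is MSRD with $j=2$, $\delta=0$. The trace form here is the standard inner product on $\F_q^4$, so $\Cc^\perp=\{((a,b)^t,-a,-b):a,b\in\F_q\}$ has the same weight distribution and is MSRD as well; yet $d(\Cc)+d(\Cc^\perp)=4<5=n+2$, so $(C3)$ fails. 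The paper's own proof of this direction hides the same hole: it asserts that MSRD-ness of $\Cc^\perp$ forces equality in~(\ref{eqn:dCdual}), but MSRD-ness of $\Cc^\perp$ pins $d(\Cc^\perp)$ through the right-to-left decomposition of $\dim(\Cc^\perp)$ (your $N-P(d^\perp-1)$), whereas~(\ref{eqn:dCdual}) comes from the left-to-right decomposition (your $p$), and these need not select the same value of $d^\perp$ unless all $m_i$ are equal --- the case of Proposition~\ref{meq}, where the statement does hold. So your instinct that this is where the difficulty lies was right, but the correct conclusion is that the step is not merely hard: it is false, and only the implication $(C3)\Rightarrow$ both MSRD survives in general.
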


\begin{proof}
Write $\dim(\Cc)=\sum_{i=j}^{\ell}m_in_i-\delta m_j-s$ for some $j\in[\ell]$, $0\leq\delta\leq n_j-1$, and $0\leq s\leq m_j-1$. By Corollary \ref{cor.rsb} \begin{equation}\label{eqn:dC}
d_1(\Cc)\leq\sum_{i=1}^{j-1} n_i+\delta+1.
\end{equation}
Moreover, $\dim(C^\perp)=\dim(\MM)-\dim(\Cc)=\sum_{i=1}^{j-1} m_in_i+\delta m_{j}+s$, which by Corollary \ref{cor:singletontypebound} implies that 
\begin{equation}\label{eqn:dCdual}
    d_1(\Cc^\perp)\leq\sum_{i=j}^{\ell} n_j-\delta+\left\{\begin{array}{ll}
    1 & \mbox{ if } s=0 \\
    0 & \mbox{else.}
\end{array}\right.
\end{equation}
Therefore 
$$d(\Cc)+d(\Cc^\perp)\leq \left\{\begin{array}{ll}
    n+2 & \mbox{ if } s=0 \\
    n+1 & \mbox{else.}
\end{array}\right.$$
If $\Cc$ satisfies $(C3)$, then $s=0$ and both $\Cc$ and $\Cc^\perp$ are MSRD. Conversely, if $\Cc$ and $\Cc^\perp$ are MSRD, then $s=0$ and both (\ref{eqn:dC}) and (\ref{eqn:dCdual}) are equalities. It follows that $\Cc$ satisfies $(C3)$.
\end{proof}

In the next proposition we prove that, if $m_1=\ldots=m_\ell$, then properties $(C2)$ and $(C3)$ are equivalent to being MSRD.

\begin{proposition}\label{meq}
Let $\Cc\subseteq\MM$ be a non-trivial code.
If $m_1=\ldots=m_\ell=m$, then both $(C2)$ and $(C3)$ are equivalent to being MSRD. In particular, the dual of an MSRD code is MSRD.
\end{proposition}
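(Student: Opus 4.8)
The plan is to show, under the assumption $m_1 = \ldots = m_\ell = m$, the chain of implications that $(C2) \Rightarrow \text{MSRD} \Rightarrow (C3) \Rightarrow (C2)$, together with the self-duality statement. The last of these, the equivalence of $(C3)$ with being MSRD, is essentially Proposition~\ref{prop:C3}, which already tells us that $(C3)$ holds if and only if both $\Cc$ and $\Cc^\perp$ are MSRD. So the crux is to upgrade the conditional statement of Proposition~\ref{prop:C3} to an unconditional one in the equal-$m$ case by proving that \emph{if $\Cc$ is MSRD then so is $\Cc^\perp$}. Once that is established, the equivalence of $(C3)$ with MSRD is immediate, and the final ``in particular'' clause follows for free.

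First I would prove that MSRD implies $(C2)$. Since $(C2)$ together with Proposition~\ref{prop.moac}.2 already gives MSRD, and the present goal is the reverse direction, I would invoke Proposition~\ref{prop.moac}.1: if $\Cc$ is MSRD with $d(\Cc) = \sum_{i=1}^{j-1} n_i + \delta + 1$, then $\dim(\Cc \cap \F_q[S_h]) = m_k$ for every $d \leq h \leq n$, where $k = \max\{\nu \mid \sum_{i=1}^{\nu-1} n_i < h\}$. The point of assuming $m_1 = \ldots = m_\ell = m$ is that, by Lemma~\ref{dim oac}, for a fixed value $\mu = d(\Cc)$ all anticodes $\A \in \mathbb{A}(\mu)$ have the \emph{same} dimension $md(\Cc)$; hence the maximal-index $k$ appearing in $(C2)$ is controlled uniformly, and the minimum and maximum of $\dim(\A)$ over $\mathbb{A}(\mu)$ coincide. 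I would then argue that an arbitrary $\A = \A_1 \times \cdots \times \A_\ell \in \mathbb{A}(d(\Cc))$ is equivalent, via a coordinate-wise rank isometry (Corollary~\ref{stOAC}), to a standard anticode of the form $\F_q[S]$ for a suitable column set $S$ with $S_h = [d-1] \cup \{h\}$ as a representative; since being MSRD and the intersection dimension $\dim(\Cc \cap \A)$ are isometry invariants, the equality $\dim(\Cc \cap \F_q[S_h]) = m_k$ transfers to $\A$, giving $(C2)$.

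\textbf{The main obstacle} will be the implication MSRD $\Rightarrow$ MSRD for the dual, i.e. self-duality of the MSRD property. Here I would use the dimension bookkeeping from Proposition~\ref{prop:C3}: writing $\dim(\Cc) = \sum_{i=j}^\ell m n_i - \delta m - s$ with $0 \leq s \leq m-1$, the two Singleton-type inequalities \eqref{eqn:dC} and \eqref{eqn:dCdual} combine to give $d(\Cc) + d(\Cc^\perp) \leq n + 2$, with equality forcing $s = 0$ and both codes MSRD. When $\Cc$ is MSRD we have $s = 0$ and \eqref{eqn:dC} is an equality by definition; the task is to force \eqref{eqn:dCdual} to be an equality as well. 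The equal-$m$ hypothesis is exactly what makes this work, because with all $m_i$ equal the Anticode Bound reads $\dim(\Cc) \leq m\maxsrk(\Cc)$ and the generalized weights scale cleanly as $\tfrac{1}{m}\dim(\A)$; this removes the asymmetry that produced the counterexample earlier in Section~\ref{sec:generalizedweights} for unequal $m_i$. I expect the argument to reduce to showing that the MSRD condition on $\Cc$ propagates to a dual intersection count via the nondegenerate form $\mathrm{Tr}$ and the fact that $\A^\perp$ is again an optimal anticode by Proposition~\ref{prop:dualOAC} (valid here since $m_{\ell-2} = m > 1$ unless $\ell \leq 2$, a case to be checked separately).

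Finally, I would close the loop by noting that $(C3) \Rightarrow (C2)$ follows by composing the established implications: $(C3)$ gives that both $\Cc$ and $\Cc^\perp$ are MSRD by Proposition~\ref{prop:C3}, and MSRD $\Rightarrow (C2)$ was shown in the second step. Thus $(C2)$, $(C3)$, and MSRD are mutually equivalent, and since MSRD is now self-dual, the dual of an MSRD code is MSRD, completing the proof. The only genuinely new content beyond reassembling earlier results is the self-duality of MSRD in the equal-$m$ case, so I would concentrate the written argument there and treat the remaining implications as short deductions from Propositions~\ref{prop.moac} and~\ref{prop:C3}.
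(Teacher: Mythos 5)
Your overall architecture matches the paper's, but two of the steps as you describe them do not go through. First, your route to MSRD $\Rightarrow(C2)$ via isometry reduction to Proposition~\ref{prop.moac}.1 fails: an arbitrary $\A=\A_1\times\cdots\times\A_\ell\in\mathbb{A}(d(\Cc))$ has rank profile $(u_1,\ldots,u_\ell)$ with $u_i=\maxrk(\A_i)$, and by Theorem~\ref{thm:isom} every linear isometry of $\MM$ acts blockwise (up to permuting blocks of equal format), so it preserves this profile; it therefore cannot carry $\A$ to an anticode of the special shape $\F_q[S_h]$ with $S_h=[d-1]\cup\{h\}$ unless the profile already agrees. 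Moreover, even when such an isometry $\varphi$ exists, it replaces $\Cc$ by $\varphi(\Cc)$, so the equality $\dim(\Cc\cap\F_q[S_h])=m_k$ for the original $\Cc$ says nothing about $\dim(\Cc\cap\A)$. The paper avoids all of this with a two-line dimension count: for any $\A\in\mathbb{A}(d(\Cc))$ one has $\dim(\Cc\cap\A)\geq\dim(\Cc)+\dim(\A)-mn$, and since $m_1=\ldots=m_\ell=m$ forces $\dim(\A)=m\,d(\Cc)$ by Lemma~\ref{dim oac} while $\dim(\Cc)=m(n-d(\Cc)+1)$ by the MSRD hypothesis, the right-hand side equals $m=m_k$. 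This is the same count as in the proof of Proposition~\ref{prop.moac}.1, just applied to a general member of $\mathbb{A}(d(\Cc))$ rather than to $\F_q[S_h]$.

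Second, you correctly identify self-duality of MSRD as the crux, but you never actually prove it: ``I expect the argument to reduce to\ldots'' is where the proof should be. The paper sidesteps the issue entirely by citing an external result for MSRD $\Rightarrow(C3)$ and then invoking Proposition~\ref{prop:C3} for the converse. If you want an internal argument, your ingredients do suffice, but you must assemble them: let $D\in\Cc^\perp$ with $\srk(D)\leq n-d(\Cc)+1$ and choose an optimal anticode $\B$ containing $D$ with $\maxsrk(\B)=n-d(\Cc)+1$ whose dual is again an optimal anticode (e.g.\ a product of row-support anticodes), so that $\maxsrk(\B^\perp)=d(\Cc)-1$. Then $\Cc\cap\B^\perp=0$ because $\Cc$ is MSRD, hence $\dim(\Cc+\B^\perp)=m(n-d(\Cc)+1)+m(d(\Cc)-1)=mn$, so $\Cc^\perp\cap\B=(\Cc+\B^\perp)^\perp=0$ and therefore $D=0$. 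Without some such completed argument (or the citation), the equivalence of $(C3)$ with MSRD and the ``in particular'' clause remain unproved.
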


\begin{proof}
Let $\Cc\subseteq\MM$ be a non-trivial code.
If $\Cc$ is MSRD, then it satisfies $(C3)$ by \cite[Theorem 6.1]{BGRMSRD}. If $\Cc$ satisfies property $(C3)$, then it is MSRD by Proposition \ref{prop:C3}.

If $\Cc$ satisfies $(C2)$, then it is MSRD by Proposition \ref{prop.moac}.
We now prove that if $\Cc$ is MSRD, then it satisfies $(C2)$.
Let $\mathcal{A}\in\mathbb{A}(d(\Cc))$, then 
$$\dim(\Cc)+\dim(\mathcal{A})\leq mn+\dim(\Cc\cap\mathcal{A}).$$
Hence by Lemma \ref{dim oac} we have
$$mn+m\leq mn+\dim(\Cc\cap\mathcal{A}),$$
so $\Cc$ satisfies $(C2)$. 
\end{proof}

Moreover, one can prove that $(C3)$ defines a trivial family of codes, unless $m_1=\cdots=m_\ell$. Notice that this shows in particular that the dual of a non-trivial MSRD code can never be MSRD, unless $m_1=\cdots= m_\ell$.

\begin{proposition}
If there exists a non-trivial code $\Cc\subseteq\MM$ that satisfies $(C3)$, then $m_1=\cdots= m_\ell$.
\end{proposition}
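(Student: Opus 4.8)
The plan is to combine Proposition~\ref{prop:C3} with the extremal dimension count of Lemma~\ref{dim oac}, the idea being that the two MSRD conditions hidden inside $(C3)$ pin $\dim(\Cc)$ to opposite ends of the range of anticode dimensions, forcing that range to collapse. By Proposition~\ref{prop:C3}, a non-trivial $\Cc$ satisfies $(C3)$ if and only if both $\Cc$ and $\Cc^\perp$ are MSRD; since $\Cc\neq 0,\MM$, also $\Cc^\perp\neq 0,\MM$. As in the proof of Proposition~\ref{prop:C3}, I would write $d(\Cc)=\sum_{i=1}^{j-1}n_i+\delta+1$ with $j\in[\ell]$ and $0\le\delta\le n_j-1$. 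Being MSRD then forces $\dim(\Cc)=\sum_{i=j}^{\ell} m_in_i-\delta m_j$, while $(C3)$ forces $d(\Cc^\perp)=n+2-d(\Cc)$, so that $d(\Cc^\perp)-1=\sum_{i=j}^{\ell} n_i-\delta$.

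First I would set $\mu=d(\Cc^\perp)-1=\sum_{i=j}^{\ell} n_i-\delta$ and check $\mu\in[n-1]$: indeed $\mu\ge 1$ because $\sum_{i\ge j}n_i>\delta$, and $\mu\le n-1$ because $\Cc^\perp\neq 0$ forces $d(\Cc^\perp)\le n$. Then I would read off the two extremal values of $\dim(\A)$ for $\A\in\mathbb{A}(\mu)$ from Lemma~\ref{dim oac}. Writing $\mu$ in suffix form as $\mu=\sum_{i=j+1}^{\ell} n_i+(n_j-\delta)$ with $n_j-\delta\in[n_j]$ gives $\min_{\A\in\mathbb{A}(\mu)}\dim(\A)=\sum_{i=j+1}^{\ell} m_in_i+(n_j-\delta)m_j=\sum_{i=j}^{\ell} m_in_i-\delta m_j=\dim(\Cc)$, so MSRD-ness of $\Cc$ itself already identifies $\dim(\Cc)$ with the minimal anticode dimension in $\mathbb{A}(\mu)$.

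Next I would obtain the matching maximal value from the dual. Since $\Cc^\perp$ is MSRD with minimum distance $\mu+1$, writing $\mu$ in prefix form as $\mu=\sum_{i=1}^{j'-1}n_i+\delta'$ forces $\dim(\Cc^\perp)=\sum_{i=j'}^{\ell} m_in_i-\delta' m_{j'}$ by the definition of MSRD, whence $\dim(\Cc)=\dim(\MM)-\dim(\Cc^\perp)=\sum_{i=1}^{j'-1}m_in_i+\delta' m_{j'}=\max_{\A\in\mathbb{A}(\mu)}\dim(\A)$, again by Lemma~\ref{dim oac}. Therefore $\min_{\A\in\mathbb{A}(\mu)}\dim(\A)=\dim(\Cc)=\max_{\A\in\mathbb{A}(\mu)}\dim(\A)$, and the ``moreover'' clause of Lemma~\ref{dim oac} gives $\mu=n$ or $m_1=\cdots=m_\ell$. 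Since $\mu\le n-1$, I conclude $m_1=\cdots=m_\ell$.

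The main conceptual step is this squeeze: MSRD-ness of $\Cc$ forces $\dim(\Cc)$ down to $\min_{\mathbb{A}(\mu)}\dim$, while MSRD-ness of $\Cc^\perp$ (via duality) forces it up to $\max_{\mathbb{A}(\mu)}\dim$. The only delicate points are bookkeeping rather than genuine obstacles: matching the convention $0\le\delta'\le n_{j'}-1$ in the MSRD definition with the convention $\delta'\in[n_{j'}]$ used in Lemma~\ref{dim oac} at block boundaries, where the two prefix representations of $\mu$ differ but yield the same value of $\max_{\A\in\mathbb{A}(\mu)}\dim(\A)$; and verifying the boundary inequalities $1\le\mu\le n-1$ that place $\mu$ strictly inside $[n]$ and thereby exclude the trivial alternative $\mu=n$.
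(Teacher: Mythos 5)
Your proof is correct and follows essentially the same route as the paper: reduce to ``both $\Cc$ and $\Cc^\perp$ are MSRD'' via Proposition~\ref{prop:C3}, then use the two MSRD conditions to identify $\dim(\Cc)$ with both extremes in Lemma~\ref{dim oac} and invoke its ``moreover'' clause. The only (cosmetic) difference is that you squeeze the anticode dimensions in $\mathbb{A}(d(\Cc^\perp)-1)$ against $\dim(\Cc)$, whereas the paper squeezes $\mathbb{A}(d(\Cc)-1)$ against $\dim(\Cc^\perp)$; your bookkeeping at the block boundaries and the check that $\mu\leq n-1$ are both handled correctly.
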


\begin{proof}
Write $d(\Cc^\perp)-1=\sum_{i=1}^{k-1} n_i+\varepsilon$ for some $k\in[\ell]$ and $0\leq\varepsilon\leq n_k-1$. Since $d(\Cc)+d(\Cc^\perp)-2=n$, one has 
\begin{equation}\label{eq:dmin}
d(\Cc)-1=\sum_{i=1}^{j-1} n_i+\delta=\sum_{i=k}^{\ell} n_i-\varepsilon
\end{equation}
for some $j\in[\ell]$ and $0\leq\delta\leq n_j-1$.
Since $\Cc$ and $\Cc^\perp$ are MSRD by Proposition \ref{prop:C3}, one has 
\begin{equation}\label{eq:dim}
\dim(\Cc)=\sum_{i=j}^{\ell} n_im_i-\delta m_j=\sum_{i=1}^{k-1} n_im_i+\varepsilon m_k=\dim(\MM)-\dim(\Cc^\perp).
\end{equation}
Lemma \ref{dim oac}, together with (\ref{eq:dim}), implies that
$$\max\dim\mathbb{A}(d(\Cc)-1)=\min\dim\mathbb{A}(d(\Cc)-1),$$ which by Lemma \ref{dim oac} implies that $m_1=\cdots=m_\ell$.
\end{proof}

\begin{comment}
\begin{proposition}\label{prop.fw}
Let $\Cc\subseteq\MM$ be an MSRD code with $d(\Cc)=\sum_{i=1}^{j-1} n_i+\delta+1$. Then $$d_1(\Cc)=\cdots=d_{m_j}(\Cc).$$
\end{proposition}

\begin{proof}
Let $\mathcal{A}$ be the optimal anticode consisting of the elements of $\MM$ supported on the first $d(\Cc)$ columns. By Proposition \ref{prop.moac}, we have
$$\dim(\Cc\cap\mathcal{A})=m_j$$
which implies
$$d_{m_j}(\Cc)\leq d(\Cc).$$
We conclude by Proposition \ref{properties}.
\end{proof}
\end{comment}

In the remainder of this section, we study the generalized weights of MSRD codes and propose a definition of $r$-MSRD codes, analogous to that of $r$-MRD codes. 
The next theorem states that the generalized weights of an MSRD code are determined by its parameters. This generalizes similar results for MDS codes in the Hamming metric and MRD codes in the rank metric. We postpone the proof, since in Theorem \ref{thm:rMSRD} we will prove a more general result. 

\begin{theorem}\label{genwtsMSRD}
Let $\Cc\subseteq\MM$ be an MSRD code and write $d(\Cc)=\sum_{i=1}^{j-1} n_i+\delta+1$ for some $j\in[\ell]$ and $0\leq \delta\leq n_j-1$. Let 
$d(\Cc)\leq h\leq n$ and let $k=\max\{\nu\mid\sum_{i=1}^{\nu-1}n_i<h\}$.
Let $r\in[\dim(\Cc)]$ be of the form
$$r=r_h-r_{d(\Cc)-1}-m_k+1.$$
\begin{comment}
$$r=(\varepsilon-\delta)m_k+1 \:\mbox{ where }\:
k=j,\, \delta\leq \varepsilon\leq n_k-1$$
or $$r=(n_j-\delta)m_j+\sum_{i=j+1}^{k-1} m_in_i+\varepsilon m_k+1 \:\mbox{ where }\: j< k\leq\ell,\, 0\leq \varepsilon\leq n_k-1.$$
\end{comment}
Then %$$d_{(\varepsilon-\delta)m_j+1}(\Cc)=\cdots=d_{(\varepsilon-\delta+1) m_j}(\Cc)=\sum_{i=1}^{j-1} n_i+\varepsilon+1$$ for $\delta\leq \varepsilon\leq n_j-1$ and
$$d_r(\Cc)=\cdots=d_{r+m_k-1}(\Cc)=h.$$
\end{theorem}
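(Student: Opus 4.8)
The plan is to prove the two inequalities $d_{r+m_k-1}(\Cc)\le h$ and $d_r(\Cc)\ge h$ separately; since $d_r(\Cc)\le d_{r+1}(\Cc)\le\cdots\le d_{r+m_k-1}(\Cc)$ by Proposition~\ref{properties}, these squeeze the whole chain to the value $h$. Two bookkeeping identities drive the argument. Writing $d(\Cc)=\sum_{i=1}^{j-1}n_i+\delta+1$, one checks directly from the definition of $r_\mu$ that $r_{d(\Cc)-1}=\sum_{i=1}^{j-1}m_in_i+\delta m_j$, so the MSRD hypothesis reads $\dim(\Cc)=\sum_{i=j}^\ell m_in_i-\delta m_j=\dim(\MM)-r_{d(\Cc)-1}$. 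Moreover, writing $h=\sum_{i=1}^{k-1}n_i+\varepsilon$ with $1\le\varepsilon\le n_k$, a short computation gives $r_h-m_k=r_{h-1}$, whence $r+m_k-1=r_h-r_{d(\Cc)-1}$ and $r-1=r_{h-1}-r_{d(\Cc)-1}$.

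For the upper bound I would exhibit the single optimal anticode $\A^\ast=\F_q[[h]]$ of all elements of $\MM$ supported on the first $h$ columns: the product of the full blocks $\F_q^{m_i\times n_i}$ for $i<k$, the standard anticode $\langle E_{s,l}\mid l\in[\varepsilon]\rangle$ in block $k$, and zero afterwards. It is a product of optimal rank-metric anticodes, with $\maxsrk(\A^\ast)=h$ and $\dim(\A^\ast)=r_h$. The dimension inequality $\dim(\Cc\cap\A^\ast)\ge\dim(\Cc)+\dim(\A^\ast)-\dim(\MM)$ together with the first identity gives $\dim(\Cc\cap\A^\ast)\ge r_h-r_{d(\Cc)-1}=r+m_k-1$, so $\A^\ast$ witnesses $d_{r+m_k-1}(\Cc)\le h$.

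For the lower bound I would show that every product optimal anticode $\A=\A_1\times\cdots\times\A_\ell$ with $\maxsrk(\A)\le h-1$ satisfies $\dim(\Cc\cap\A)\le r-1$, which immediately yields $d_r(\Cc)\ge h$. Set $u_i=\maxrk(\A_i)$ and $\mu=\sum_i u_i\le h-1$. If $\mu\le d(\Cc)-1$, then $\Cc\cap\A=0$ by property $(C1)$ (which is equivalent to being MSRD), and we are done. Otherwise $\mu\ge d(\Cc)$, and I would choose inside $\A$ a product optimal anticode $\A''=\prod_i\A_i''$ with $\A_i''\subseteq\A_i$ and $\maxsrk(\A'')=d(\Cc)-1$, placing the values $\maxrk(\A_i'')$ greedily on the blocks with largest $m_i$ so as to maximize $\dim(\A'')$. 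Again by $(C1)$ we have $\Cc\cap\A''=0$, so the projection $\A\to\A/\A''$ is injective on $\Cc\cap\A$, giving $\dim(\Cc\cap\A)\le\dim(\A)-\dim(\A'')$.

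It then remains to bound $\dim(\A)-\dim(\A'')$ uniformly in $\A$. I would rewrite it as $\min\{\sum_i m_iw_i: 0\le w_i\le u_i,\ \sum_i w_i=\mu-(d(\Cc)-1)\}$ (with $w_i=u_i-\maxrk(\A_i''))$ and then maximize over profiles $(u_i)$ with $\sum_iu_i=\mu\le h-1$ and $u_i\le n_i$, solving it by an exchange argument: since $m_1\ge\cdots\ge m_\ell$, moving a unit of $u$-mass from a higher-indexed block to a not-yet-saturated lower-indexed one never decreases this quantity, so the maximum is attained at the top-loaded profile. There $\dim(\A)=r_\mu$ and $\dim(\A'')=r_{d(\Cc)-1}$, giving value $r_\mu-r_{d(\Cc)-1}\le r_{h-1}-r_{d(\Cc)-1}=r-1$. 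I expect this rearrangement step to be the main obstacle: one must control the interaction between the adversarial choice of $\A$ and the optimal choice of $\A''$, and verify that top-loading is simultaneously the worst case for both $\dim(\A)$ and the attainable $\dim(\A'')$. Everything else reduces to the dimension counts above.
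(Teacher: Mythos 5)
Your proposal is correct, and it reaches the conclusion by a genuinely different route from the paper. The paper does not prove Theorem \ref{genwtsMSRD} directly: it deduces it from Theorem \ref{thm:rMSRD}, which is an induction on $h$ starting from $d_1(\Cc)=d(\Cc)$, where each step uses Lemma \ref{lemma} (the weak monotonicity $d_{r+m_k}(\Cc)\geq d_r(\Cc)+1$ whenever $d_{r+m_k}(\Cc)>\sum_{i=1}^{k-1}n_i$) for the lower bound and Corollary \ref{cor.rsb} for the matching upper bound. Your upper bound is exactly the dimension count behind Corollary \ref{cor.rsb}, unwound with the specific anticode $\F_q[[h]]$; your lower bound, however, is direct rather than inductive: you bound $\dim(\Cc\cap\A)$ for every product anticode $\A$ with $\maxsrk(\A)\leq h-1$ by quotienting out a sub-anticode $\A''$ with $\maxsrk(\A'')=d(\Cc)-1$ that meets $\Cc$ trivially by property $(C1)$. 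The rearrangement step you flag as the main obstacle does go through, and can be done without an exchange argument: writing $u_i=\maxrk(\A_i)$ and letting $s$ be the first index with $\sum_{i\leq s}u_i\geq d(\Cc)-1$, the quantity $\dim(\A)-\dim(\A'')$ equals $\sum_{i\geq s}m_iw_i$ for the leftover profile $w_s=u_s-(d(\Cc)-1-\sum_{i<s}u_i)$, $w_i=u_i$ for $i>s$; one checks that $s\geq j'$ and $w_{j'}\leq n_{j'}-\delta'$ (where $d(\Cc)-1=\sum_{i<j'}n_i+\delta'$), so $w$ is feasible for the top-loaded optimization whose optimum is $r_\mu-r_{d(\Cc)-1}\leq r_{h-1}-r_{d(\Cc)-1}=r-1$. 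What each approach buys: the paper's induction via Lemma \ref{lemma} avoids this combinatorial optimization entirely and simultaneously yields the stronger statement about $r$-MSRD codes (Theorem \ref{thm:rMSRD}); your argument is self-contained for the MSRD case and makes transparent exactly which anticodes are extremal, at the cost of the profile-rearrangement lemma.
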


\begin{remark}
One can also write down the generalized weights computed in Theorem \ref{genwtsMSRD} as follows.
Let $j\in[\ell]$, $0\leq \delta\leq n_j-1$, and let $\Cc\subseteq\MM$ be an MSRD code with $d(\Cc)=\sum_{i=1}^{j-1} n_i+\delta+1$ and $\dim(\Cc)=\sum_{i=j}^{\ell} m_in_i-\delta m_j$. Write
$$h=\sum_{i=1}^{k-1}n_i+\varepsilon+1$$ where $k\geq j$. Since $d(\Cc)\leq h\leq n$, one has that
$\delta\leq\varepsilon\leq n_j-1$ if $k=j$, and $0\leq \varepsilon\leq n_k-1$ if $k>j$. Then
$$r=\left\{\begin{array}{cc}
    (\varepsilon-\delta)m_j+1 & \mbox{ if } k=j,\, \delta\leq \varepsilon\leq n_j-1,\\
    (n_j-\delta)m_j+\sum_{i=j+1}^{k-1} m_in_i+\varepsilon m_k+1 & \mbox{ if } j< k\leq\ell,\, 0\leq \varepsilon\leq n_k-1.
\end{array}\right.$$
\end{remark}

\begin{remark}
It follows from Theorem \ref{genwtsMSRD} that both bounds in the statement of Theorem \ref{singletonbound} are met for $r\in[\dim(\Cc)]$ of the form $r=1,m_j+1,\ldots,(n_j-\delta-1)m_j+1$, and $$r=(n_j-\delta)m_j+\sum_{i=j+1}^{k-1} m_in_i+\varepsilon m_k+1$$ with $j<k\leq \ell$ and $0\leq \varepsilon\leq n_k-1$.
\end{remark}

\begin{remark}\label{rmk:genwtsintervals}
Let $d_0(\Cc)=0$ and $d_{\dim(\Cc)+1}(\Cc)=n+1$. Theorem \ref{genwtsMSRD} states that, for any $d(\Cc)\leq h\leq n$ and $r$ of the form $r=r_h-r_{d(\Cc)-1}-m_k+1$, we have
$$d_{r-1}(\Cc)<d_{r}(\Cc)=\ldots=d_{r+m_k-1}(\Cc)<d_{r+m_k}(\Cc).$$
\end{remark}

Inspired by Remark \ref{rmk:genwtsintervals} and by the definition of $r$-MRD codes, we define a notion of $r$-MSRD code as follows.
Notice that being $1$-MSRD is equivalent to being MSRD.

\begin{definition}\label{defn:rMSRD}
Let $j\in[\ell]$, $0\leq \delta\leq n_j-1$, and let $\Cc\subseteq\MM$ be a code of $\dim(\Cc)=\sum_{i=j}^\ell m_in_i-\delta m_j$. 
Define $d_{\max}=\sum_{i=1}^{j-1}n_i+\delta+1$, let $d_{\max}\leq h\leq n$ and $$r=r_h-r_{d_{\max}-1}-m_k+1,$$ 
where $k=\max\{\nu\mid \sum_{i=1}^{\nu-1}n_i<h\}$.
We say that $\Cc$ is {\bf $r$-MSRD} if 
$$d_r(\Cc)=h.$$
\end{definition}

\begin{comment}
\begin{remark}
Write $h=\sum_{i=1}^{k-1}n_i+\varepsilon+1$ where: $\delta\leq\varepsilon\leq n_j-1$ if $k=j$, 
$0\leq \varepsilon\leq n_k-1$ if $j<k\leq\ell$. Then
\begin{eqnarray*}
r & = & \sum_{i=1}^{k-1}m_in_i+(\varepsilon+1) m_k-\sum_{i=1}^{j-1}m_in_i-\delta m_j-m_k+1 \\
 & = & \left\{\begin{array}{ll}
    (\varepsilon-\delta)m_j+1 & \mbox{ if } k=j, \\
    (n_j-\delta)m_j+\sum_{i=j+1}^{k-1}m_in_i+\varepsilon m_k+1 &  \mbox{ if } k>j.
\end{array}\right.
\end{eqnarray*}
In particular, $r\geq 1$ for any choice of $h$. 
\end{remark}
\end{comment}

%The next lemma will help us to determine the generalized weights of MSRD and $r$-MSRD codes.

We conclude this section by showing that, if $\Cc$ is $r$-MSRD, then $\Cc$ is $r'$-MSRD for all $r'\geq r$, where $r,r'$ are integers of the form given in Definition \ref{defn:rMSRD}. This observation allows us to compute the generalized weights of an $r$-MSRD code. 
Since an MSRD code is $1$-MSRD, the proof of next theorem also proves Theorem \ref{genwtsMSRD}.

\begin{theorem}\label{thm:rMSRD}
Let $j\in[\ell]$, $0\leq\delta\leq n_j-1$, and let $\Cc\subseteq\MM$ be a non-trivial code of $\dim(\Cc)=\sum_{i=j}^\ell m_in_i-\delta m_j$. Define $d_{\max}=\sum_{i=1}^{j-1}n_i+\delta+1$, let $d_{\max}\leq h\leq n$ and $$r=r_h-r_{d_{\max}-1}-m_k+1,$$ 
where $k=\max\{\nu\mid \sum_{i=1}^{\nu-1}n_i<h\}$. If $\Cc$ is $r$-MSRD, then  $$d_{r}(\Cc)=\ldots=d_{r+m_k-1}(\Cc)=h.$$
Moreover, $\Cc$ is $(r+m_k)$-MSRD. 
\end{theorem}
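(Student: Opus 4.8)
The plan is to split the two assertions into an upper-bound half, which holds for \emph{every} code of the prescribed dimension, and a lower-bound half, which is exactly where the $r$-MSRD hypothesis $d_r(\Cc)=h$ is used. First I would fix the bookkeeping: writing $h=\sum_{i=1}^{k-1}n_i+\varepsilon+1$ with $0\le\varepsilon\le n_k-1$, one has $r_h=\sum_{i=1}^{k-1}m_in_i+(\varepsilon+1)m_k$ and $r_{d_{\max}-1}=\sum_{i=1}^{j-1}m_in_i+\delta m_j$. The two identities I would record at the outset are
$$\dim(\Cc)=\sum_{i=1}^\ell m_in_i-r_{d_{\max}-1}=\dim(\MM)-r_{d_{\max}-1}$$
and $r+m_k=r_h-r_{d_{\max}-1}+1=r_{h+1}-r_{d_{\max}-1}-m_{k(h+1)}+1$, using $r_h-m_{k(h)}=r_{h-1}$, a direct check from the formula for $r_\mu$. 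The second identity shows that $r+m_k$ is precisely the index attached to $h+1$ in Definition~\ref{defn:rMSRD}, so that the assertion that $\Cc$ is $(r+m_k)$-MSRD is exactly $d_{r+m_k}(\Cc)=h+1$.

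For the upper bounds I would prove the single claim that $d_{r_h-r_{d_{\max}-1}}(\Cc)\le h$ for every $d_{\max}\le h\le n$. This follows from the Singleton-type bound Corollary~\ref{cor.rsb}: applying it with $(j',\delta',r')=(k,\varepsilon+1,\,r_h-r_{d_{\max}-1})$ when $\varepsilon<n_k-1$, and with $(k+1,0,\,r_h-r_{d_{\max}-1})$ when $\varepsilon=n_k-1$, the required hypothesis $\dim(\Cc)\ge\sum_{i=j'}^\ell m_in_i-m_{j'}\delta'+r'$ collapses to $\dim(\Cc)\ge\dim(\Cc)$ by the first identity above, while its conclusion reads $\sum_{i=1}^{j'-1}n_i+\delta'=h$. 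The boundary value $h=n$, where $r_h-r_{d_{\max}-1}=\dim(\Cc)$, is covered directly by Proposition~\ref{properties}.4. From the claim I obtain $d_{r+m_k-1}(\Cc)=d_{r_h-r_{d_{\max}-1}}(\Cc)\le h$, and, when $h<n$, monotonicity together with the claim applied to $h+1$ gives $d_{r+m_k}(\Cc)\le d_{r_{h+1}-r_{d_{\max}-1}}(\Cc)\le h+1$.

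The lower bounds are where I invoke $d_r(\Cc)=h$. Monotonicity (Proposition~\ref{properties}.2) gives $h=d_r(\Cc)\le d_{r+i}(\Cc)$ for $0\le i\le m_k-1$, and combined with $d_{r+m_k-1}(\Cc)\le h$ this forces the block $d_r(\Cc)=\cdots=d_{r+m_k-1}(\Cc)=h$. For the strict increase I would apply Lemma~\ref{lemma} with its index equal to our $k$ and its base index equal to our $r$: both hypotheses hold, namely $r+m_k\in[\dim\Cc]$ (valid precisely when $h<n$, since $r+m_k=r_h-r_{d_{\max}-1}+1\le r_n-r_{d_{\max}-1}=\dim\Cc$) and $d_{r+m_k}(\Cc)\ge d_r(\Cc)=h>\sum_{i=1}^{k-1}n_i$. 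The lemma then yields $d_{r+m_k}(\Cc)\ge d_r(\Cc)+1=h+1$, and with the matching upper bound we get $d_{r+m_k}(\Cc)=h+1$, i.e. $\Cc$ is $(r+m_k)$-MSRD; for $h=n$ the index $r+m_k$ exceeds $\dim(\Cc)$ and the second assertion is vacuous.

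The mathematics of the inequalities is light once the identity $\dim(\Cc)=\dim(\MM)-r_{d_{\max}-1}$ is in place, so the main obstacle I anticipate is purely combinatorial bookkeeping: translating correctly between the index $r$ and the anticode maximum rank $h$ through the piecewise formula for $r_\mu$, verifying $r+m_k=r(h+1)$ and $r_{h}-m_{k(h)}=r_{h-1}$, and keeping the two boundary cases $\varepsilon=n_k-1$ and $h=n$ consistent. A secondary point worth stating explicitly is that the upper-bound claim is hypothesis-free, so that the role of the $r$-MSRD assumption is isolated entirely to the monotonicity step and to Lemma~\ref{lemma}.
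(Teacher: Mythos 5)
Your proof is correct and follows essentially the same route as the paper's: the constant block $d_r(\Cc)=\cdots=d_{r+m_k-1}(\Cc)=h$ comes from the $r$-MSRD hypothesis, monotonicity (Proposition \ref{properties}), and the Singleton-type upper bound (Corollary \ref{cor.rsb}), while the jump to $h+1$ comes from Lemma \ref{lemma} together with the index identity $r+m_k=r_{h+1}-r_{d_{\max}-1}-m_{k(h+1)}+1$. Your version is somewhat more explicit about the bookkeeping (the identity $\dim(\Cc)=\dim(\MM)-r_{d_{\max}-1}$ and the boundary case $h=n$), but the ingredients and their roles are identical.
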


\begin{proof}
We have $$h=d_{r}(\Cc)\leq\ldots\leq d_{r+m_k-1}(\Cc)\leq h,$$
where the equality follows from the definition of $r$-MSRD code, the first and second inequalities from Proposition \ref{properties}, and the third from Corollary \ref{cor.rsb}. Therefore $d_{r}(\Cc)=\ldots=d_{r+m_k-1}(\Cc)=h$.

Since $d_{r+m_k}(\Cc)\geq d_r(\Cc)=h>\sum_{i=1}^{k-1}n_i$, then by Lemma \ref{lemma}
$$d_{r+m_k}(\Cc)\geq d_r(\Cc)+1=h+1.$$ The reverse inequality follows from Corollary \ref{cor.rsb}, hence $d_{r+m_k}(\Cc)=h+1$.
Since $$\max\left\{\nu : \sum_{i=1}^{\nu-1}n_i<h+1 \right\}=\left\{\begin{array}{ll}
    k & \mbox{ if } \varepsilon<n_k-1, \\
    k+1 & \mbox{ if } \varepsilon=n_k-1,
\end{array}\right.$$ we let
$$m'=\left\{\begin{array}{ll}
    m_k & \mbox{ if } \varepsilon<n_k-1,\\
    m_{k+1} & \mbox{ if } \mbox{ if } \varepsilon=n_k-1.
\end{array}\right.$$
Since $m'=r_{h+1}=r_h$, one has that $r+m_k=r_{h+1}-r_{d_{\max}-1}-m'+1$, hence we proved that $\Cc$ is $(r+m_k)$-MSRD.
\end{proof}

\begin{remark}
We follow the notation of the last theorem. If a code $\Cc$ is such that $d_r<h$ but $d_{r+s}(\Cc)=h$ for some $1\leq s\leq m_k-1$ then by Corollary \ref{cor.rsb} we have    $$d_{r+s}(\Cc)=\cdots=d_{r+m_k-1}(\Cc)=h.$$
However, this does not imply that $\Cc$ is an $(r+m_k)$-MSRD code, as the next example shows.
\end{remark}

\begin{example}
An MSRD code $\mathcal{D}$ of dimension 4 in $\mathbb{F}_2^{4\times 4}\times\mathbb{F}_2^{4\times 2}\times\mathbb{F}_2^{2\times 2}$ has weights $d_1(\mathcal{D})=d_2(\mathcal{D})=7$, $d_3(\mathcal{D})=d_4(\mathcal{D})=8$. Let $\mathcal{C}$ be generated by
\begin{align*}
&\left\{\left(\begin{pmatrix}1&0&0&0\\ 0&1&0&0\\0&0&1&0\\0&0&0&1\end{pmatrix},\begin{pmatrix} 1&0\\ 0&1\\ 0&0\\ 0&0\end{pmatrix},\begin{pmatrix} 1&0\\ 0&0\end{pmatrix}\right)\right. ,\\
&\left(\begin{pmatrix}0&0&0&1\\1&0&0&1\\ 0&1&0&0\\ 0&0&1&0\end{pmatrix},\begin{pmatrix}0&1\\ 1&1\\ 0&0\\ 0&0\end{pmatrix},\begin{pmatrix} 0&0\\ 1&0\end{pmatrix}\right),\\
&\left(\begin{pmatrix}0&0&1&0\\0&0&1&1\\1&0&0&1\\0&1&0&0\end{pmatrix},\begin{pmatrix} 0&0\\ 0&0\\ 1&0\\ 0&1\end{pmatrix},0\right),\\
&\left.\left(0,0,\begin{pmatrix}0&0\\0&1\end{pmatrix}\right)\right\}
\end{align*}
The code $\mathcal{C}$ has dimension $4$ and $d_1(\mathcal{C})=1$, then $\mathcal{C}$ is not MSRD. We checked using the computer algebra system Macaulay2 \cite{M2} that the only nonzero codewords of $\Cc$ of sum-rank less than 7 are the third and the fourth element in the previous list. Hence $d_2(\mathcal{C})=d_2(\mathcal{D})=7$. Taking $\mathcal{A}=\mathbb{F}_2^{4\times 4}\times\mathbb{F}_2^{4\times 2}\times\left\{\begin{pmatrix} a&0\\ b&0\end{pmatrix}\ |\ a,b\in\mathbb{F}_2\right\}$ we can see that $d_3(\mathcal{C})=7<8=d_3(\mathcal{D})$. In particular, $\mathcal{C}$ is not $3$-MSRD.
\end{example}

\section*{Appendix: Support spaces and information leakage}

An alternative notion of generalized weights could be defined for linear codes
$$ \mathcal{C} \subseteq \mathbb{F}_q^{m_1 \times n_1} \times \mathbb{F}_q^{m_2 \times n_2} \times \cdots \times \mathbb{F}_q^{m_\ell \times n_\ell} $$
as follows. Here, we do not assume that $ n_i \leq m_i $, for $ i = 1,2, \ldots, \ell $. For positive integers $ m $ and $ n $, we say that $ \mathcal{V}_\mathcal{L} \subseteq \mathbb{F}_q^{m \times n} $ is the (row) support space associated to the vector space $ \mathcal{L} \subseteq \mathbb{F}_q^n $ if
$$ \mathcal{V}_\mathcal{L} = \{ C \in \mathbb{F}_q^{m \times n} : {\rm Row}(C) \subseteq \mathcal{L} \}, $$
where $ {\rm Row}(C) \subseteq \mathbb{F}_q^n $ denotes the row space of $ C \in \mathbb{F}_q^{m \times n} $. Clearly $ \mathcal{V}_\mathcal{L} $ is an $ \mathbb{F}_q $-linear subspace of $ \mathbb{F}_q^{m \times n} $ of dimension
$$ \dim(\mathcal{V}_\mathcal{L}) = m \dim(\mathcal{L}). $$
Denote by $ \mathcal{P}_{q,n} $ the collection of subspaces of $ \mathbb{F}_q^n $. For a linear code $ \mathcal{C} \subseteq \mathbb{F}_q^{m_1 \times n_1} \times \mathbb{F}_q^{m_2 \times n_2} \times \cdots \times \mathbb{F}_q^{m_\ell \times n_\ell} $, we may give an alternative definition of generalized weights as
\begin{equation}
\begin{split}
d^{Supp}_r(\Cc) = \min \Bigg\lbrace \sum_{i=1}^\ell \dim(\mathcal{L}_i) :\, & \mathcal{L}_i \subseteq \mathcal{P}_{q,n_i}, 1 \leq i \leq \ell , \\ 
& \dim \left( \Cc \cap \left( \mathcal{V}_{\mathcal{L}_1} \times \cdots \times \mathcal{V}_{\mathcal{L}_\ell} \right) \right) \geq r \Bigg\rbrace .
\end{split}
\label{eq gen weights based on supports}
\end{equation}
for $r\in[\dim(\mathcal{C})]$. In the case where,for all $i\in[\ell]$, $ n_i < m_i $ or $ n_i = m_i = 1 $, then by \cite[Theorem 26]{Rav16} 
$$ d^{Supp}_r(\Cc) = d_r(\Cc), $$
for all linear codes $ \Cc \subseteq \mathbb{F}_q^{m_1 \times n_1} \times \mathbb{F}_q^{m_2 \times n_2} \times \cdots \times \mathbb{F}_q^{m_\ell \times n_\ell} $ and all $r\in[\dim(\Cc)]$. In particular, both coincide with the generalized Hamming weights if $ m_i = n_i = 1 $ for $i\in[\ell]$.

Consider now arbitrary values of $ m_i $ and $ n_i $, for $i\in[\ell]$. The weights in (\ref{eq gen weights based on supports}) present an advantage and a disadvantage with respect to using anticodes instead of support spaces. Their disadvantage is that such weights are not always invariant by arbitrary linear sum-rank isometries (simply notice that support spaces are not necessarily again support spaces after transposition of matrices, as we are only considering row supports). On the other hand, their advantage is that they measure information leakage to a wire-tapper in scenarios such as multishot linear network coding \cite{secure-multishot}. 

More concretely, consider a linear code $ \Cc \subseteq \mathbb{F}_q^{m_1 \times n_1} \times \mathbb{F}_q^{m_2 \times n_2} \times \cdots \times \mathbb{F}_q^{m_\ell \times n_\ell} $. Choose a complementary vector space $ \mathcal{M} \oplus \Cc = \mathbb{F}_q^{m_1 \times n_1} \times \mathbb{F}_q^{m_2 \times n_2} \times \cdots \times \mathbb{F}_q^{m_\ell \times n_\ell} $. We may see $ \mathcal{M} $ as our space of messages \cite[Definition 3]{secure-multishot}. A (random) message $ M \in \mathcal{M} $ is encoded by choosing $ C \in \Cc $ uniformly at random in $ \Cc $, and finally setting $ D = M + C $, in order to hide $ M $. If we set $ D = (D_1, D_2, \ldots, D_\ell) $, then $ D_i \in \mathbb{F}_q^{m_i \times n_i} $ is sent through an $ \mathbb{F}_q $-linearly coded network with $ n_i $ outgoing links from the source node, for $ i\in[\ell] $. This could be the scenario in multishot network coding without delays (or treating delays as erasures), or in singleshot network coding where we know that the network has at least $ \ell $ disconnected components (after removing the source and sink nodes), see \cite{secure-multishot}. Assume we have no further knowledge of the network topology or linear network code, but we know that an adversary wire-taps $ \mu $ arbitrary links of the network. Then the information contained in
$$ W = (D_1 B_1, D_2 B_2, \ldots, D_\ell B_\ell) $$
is leaked to the wire-tapper, for matrices $ B_i \in \mathbb{F}_q^{n_i \times \mu_i} $, for $ i\in[\ell] $, such that 
$$ \sum_{i=1}^\ell {\rk}(B_i) \leq \sum_{i=1}^\ell \mu_i = \mu.$$
Equality may be attained, as we do not know nor have control over the matrices $ B_i $. By \cite[Lemma 1]{secure-multishot}, the amount of information on $ M $ leaked to the wire-tapper, measured in symbols in $ \mathbb{F}_q $, is given by the mutual information in base $ q $
$$ I_q(M ; W) \leq \dim \left( \Cc^\perp \cap ( \mathcal{V}_{\mathcal{L}_1} \times \cdots \times \mathcal{V}_{\mathcal{L}_\ell} ) \right), $$
where $ \mathcal{L}_i = {\rm Row}(B_i^t) \in \mathcal{P}_{q,n_i} $, for $ i\in[\ell] $. Furthermore, equality holds if $ M $ is chosen uniformly at random in $ \mathcal{M} $. Thus, in such a situation, $ d^{Supp}_r(\Cc^\perp) $ represents the minimum number of links that an adversary needs to wire-tap in order to obtain at least $ r $ units of information (number of bits multiplied by $ \log_2 (q) $) of the sent message.

We conclude with some remarks on the properties of the generalized weights $ d^{Supp}_r(\Cc) $ as above. First, if $ m = m_1 = m_2 = \ldots = m_\ell $, $ n = n_1 + n_2 + \cdots + n_\ell $, and $ \Cc \subseteq \mathbb{F}_{q^m}^n \cong \mathbb{F}_q^{m_1 \times n_1} \times \mathbb{F}_q^{m_2 \times n_2} \times \cdots \times \mathbb{F}_q^{m_\ell \times n_\ell} $ is an $ \mathbb{F}_{q^m} $-linear code, then 
$$ d^{Supp}_{rm - t}(\Cc) = d^{SR}_r(\Cc), $$
for $ t\in\{0,\ldots, m-1\} $ and $r\in[\dim_{\mathbb{F}_{q^m}}(\Cc)]$, where $ d^{SR}_r(\Cc) $ denotes the $ r $th generalized weight considered in \cite[Definition 10]{gsrws}. This equality is easy to prove and recovers \cite[Theorem 7]{rgmw} when $\ell = 1$. Moreover, this is the analogue of Theorem \ref{thm:klinear} for the case $m =m_1=m_2=\ldots=m_\ell$. Notice that the assumption that $n_i<m_i$ is not needed in this setting.

Finally, if we assume that $ m_1 \geq m_2 \geq \ldots \geq m_\ell $, then all of the properties stated throughout this manuscript for the generalized weights $ d_r(\Cc) $ hold mutatis mutandis for the generalized weights $ d^{Supp}_r(\Cc) $, with similar proofs and without assuming that $ n_i \leq m_i $ for $ i\in[\ell]$. In order to prove this claim, it suffices to prove the analogue of Proposition \ref{properties}, Theorem \ref{singletonbound} and Lemma \ref{lemma} for the generalized weights $d^{Supp}_r(\Cc)$. In fact, the other properties follow from those results. We start with an analogue of Proposition \ref{properties}.

\begin{proposition}\label{properties_app} 
Let $0\neq\Cc \subseteq\mathcal{D} \subseteq \MM$, then:
\begin{enumerate}
\item $d^{Supp}_1(\Cc)= d(\Cc)$,
\item $d^{Supp}_r(\Cc)\leq d^{Supp}_s(\Cc)$ for $1\leq r\leq s\leq\dim(\Cc)$,
\item $d^{Supp}_r(\Cc)\geq d^{Supp}_r(\mathcal{D})$ for $r\in[\dim(\Cc)]$,
\item $d^{Supp}_{\dim(\Cc)}(\Cc)\leq n_1 + \dots + n_\ell$,
\item $d^{Supp}_{r+n_1m_1+\cdots+n_{j-1}m_{j-1}+\delta m_{j}}(\Cc)\geq d^{Supp}_r(\Cc)+n_1+\cdots+n_{j-1}+\delta$\\ 
for $j\in[\ell]$, $r\in[ \dim(\Cc)-(n_1m_1+\cdots+n_{j-1}m_{j-1}+\delta m_j)]$, and $0\leq\delta\leq n_j-1$.
\end{enumerate}
\end{proposition}

\begin{proof} 
1. Let $C=(C_1,\ldots,C_\ell)\in \Cc$ be an element of minimum sum-rank. Let $\mathcal{L}_i={\rm Row}(C_i)$ be a subspace such that $C_i\in \mathcal{V}_{\mathcal{L}_i}$ and $\dim(\mathcal{V}_{\mathcal{L}_i}) = m_i \rk(C_i)$, for $i\in[ \ell ]$. Then $\Cc \cap (\mathcal{V}_{\mathcal{L}_1} \times \cdots \times \mathcal{V}_{\mathcal{L}_\ell})\neq 0$, hence $d^{Supp}_1(\Cc)\leq d(\Cc)$. To prove that they are equal, we observe that if $ \mathcal{L}_i^\prime \in \mathcal{P}_{q, n_i} $, for $ i \in [\ell] $, are such that $ \sum_{i=1}^\ell \dim(\mathcal{L}^\prime_i) < d(\Cc) $, then $\Cc \cap (\mathcal{V}_{\mathcal{L}^\prime_1} \times \cdots \times \mathcal{V}_{\mathcal{L}^\prime_\ell}) = 0$.

2., 3., and 4. follow directly from the definition.

5. Let $s=r+n_1m_1+\cdots+n_{j-1}m_{j-1}+\delta m_j$.
Let $ \mathcal{L}_i \in \mathcal{P}_{q,n_i} $, for $ i \in [\ell] $, such that $ \dim(\Cc \cap (\mathcal{V}_{\mathcal{L}_1} \times \cdots \times \mathcal{V}_{\mathcal{L}_\ell})) \geq s $ and $d^{Supp}_s(\Cc)= \sum_{i=1}^\ell \dim(\mathcal{L}_i)$. For $i\in[\ell]$, let $u_i=\dim(\mathcal{L}_i)$. Since $$\sum_{i=1}^{\ell} m_iu_i=\dim(\mathcal{V}_{\mathcal{L}_1} \times \cdots \times \mathcal{V}_{\mathcal{L}_\ell})\geq \dim(\Cc \cap (\mathcal{V}_{\mathcal{L}_1} \times \cdots \times \mathcal{V}_{\mathcal{L}_\ell})) $$
$$\geq s> n_1m_1+\cdots+n_{j-1}m_{j-1}+\delta m_j$$ and $m_1\geq\cdots\geq m_\ell$, then $d^{Supp}_s(\Cc)=u_1+\ldots+u_\ell> n_1+\cdots+n_{j-1}+\delta$ by Lemma~\ref{lemma:nos}. Let $v_1,\ldots,v_{\ell}$ be such that $n_1+\cdots+n_{j-1}+\delta= v_1+\cdots+v_\ell$ and $v_i\leq u_i$ for $i\in[\ell]$.
We have that $n_1m_1+\cdots+n_{j-1}m_{j-1}+\delta m_j\geq v_1m_1+\cdots+v_\ell m_\ell$, since $m_1\geq\cdots\geq m_\ell$. For all $i\in[\ell]$ there exist vector spaces $\mathcal{L}'_i\subseteq \mathcal{L}_i$ of $\dim(\mathcal{L}'_i)=u_i-v_i$. Then
\begin{align*}
\dim(\Cc \cap (\mathcal{V}_{\mathcal{L}'_1} \times \cdots \times \mathcal{V}_{\mathcal{L}'_\ell}))&\geq s-(v_1m_1+\cdots+v_\ell m_ \ell)\\
&\geq s-(n_1m_1+\cdots+n_{j-1}m_{j-1}+\delta m_j)\\
&=r
\end{align*}
hence
\begin{equation*}
d_r(\Cc)\leq \sum_{i=1}^\ell (u_i-v_i) = d_s(\Cc)-(n_1+\cdots+n_{j-1}+\delta).\qedhere
\end{equation*}
\end{proof}

The Singleton-type bound from Corollary \ref{cor:singletontypebound} is a direct consequence of Proposition \ref{properties}. Notice however that it also follows directly from Proposition \ref{properties_app} for the generalized weights $d^{Supp}_r(\Cc)$. %We now prove an analogous result to Lemma \ref{lemma} for such weights. 
The next theorem is an analogue of Theorem \ref{singletonbound} for the generalized weights $d^{Supp}_r(\Cc)$.

\begin{theorem}\label{singletonbound_app}
Let $\Cc\subseteq\MM$ be a code and let $r\in[\dim(\Cc)]$. Let $j\in[\ell]$ and $0\leq\delta\leq n_j-1$ be such that $$d^{Supp}_r(\Cc)-1\geq\sum_{i=1}^{j-1} n_i+\delta.$$
Then
\begin{equation}
\dim(\Cc)\leq\sum_{i=j}^\ell m_in_i-m_{j}\delta+r-1.
\end{equation}
\end{theorem}

\begin{proof}
Let $\mathcal{L}_i=\mathbb{F}_q^{n_i}$ for $i\in[j-1]$, let $\mathcal{L}_j \in \mathcal{P}_{q,n_j} $ be of dimension $\delta $, and let $ \mathcal{L}_i=0$ for $j+1\leq i\leq\ell$. Then
$$\dim(\Cc\cap (\mathcal{V}_{\mathcal{L}_1} \times \cdots \times \mathcal{V}_{\mathcal{L}_\ell}))\leq r-1.$$
Therefore
\begin{align*}
\dim(\Cc)+\sum_{i=1}^{j-1} m_in_i+m_j \delta-r+1&\leq\dim(\Cc)+\dim(\mathcal{V}_{\mathcal{L}_1} \times \cdots \times \mathcal{V}_{\mathcal{L}_\ell})\\
&-\dim(\Cc\cap (\mathcal{V}_{\mathcal{L}_1} \times \cdots \times \mathcal{V}_{\mathcal{L}_\ell}))\\
&=\dim(\Cc+(\mathcal{V}_{\mathcal{L}_1} \times \cdots \times \mathcal{V}_{\mathcal{L}_\ell}))\leq\sum_{i=1}^\ell m_in_i.
\end{align*}
\end{proof}

A version of the Singleton Bound for the generalized weights $d^{Supp}_r(\Cc)$ follows directly from Theorem \ref{singletonbound_app} and yields the same inequalities as in Corollary \ref{cor.rsb}.
Finally, we establish the analogue of Lemma \ref{lemma} for the generalized weights $ d^{Supp}_r(\Cc) $.

\begin{lemma}\label{lemma_app}
Let $\Cc\subseteq\MM$ be a code and let $k\in[\ell]$, $r+m_k\in[\dim(\Cc)]$. If
$$d^{Supp}_{r+m_k}(\Cc)>\sum_{i=1}^{k-1}n_i$$ then
$$d^{Supp}_{r+m_k}(\Cc)\geq d^{Supp}_r(\Cc)+1.$$
\end{lemma}

\begin{proof}
For $i\in [\ell]$, let $\mathcal{L}_i \in \mathcal{P}_{q,n_i}$ be such that $d^{Supp}_{r+m_k}(\Cc)=\sum_{i=1}^\ell \dim(\mathcal{L}_i)$ and $\dim(\Cc\cap(\mathcal{V}_{\mathcal{L}_1}\times\cdots\times
\mathcal{V}_{\mathcal{L}_\ell}))\geq r+m_k$. We claim that there exists $k\leq j\leq \ell$ such that $\mathcal{L}_j\neq 0$. In fact, if this were not the case, then 
$$\sum_{i=1}^{k-1}n_i\geq \sum_{i=1}^\ell \dim(\mathcal{L}_i) =d^{Supp}_{r+m_k}(\Cc).$$
For $i\in[\ell]$, let $\mathcal{L}^\prime_i \in\mathcal{P}_{q,n_i}$ be such that $\mathcal{L}^\prime_i = \mathcal{L}_i$ if $ i \neq j $, and $\mathcal{L}^\prime_j \subseteq \mathcal{L}_j$ with $\dim(\mathcal{L}^\prime_j)=\dim(\mathcal{L}_j)-1$.
One has
$$\dim(\Cc\cap(\mathcal{V}_{\mathcal{L}^\prime_1} \times \cdots \times \mathcal{V}_{\mathcal{L}^\prime_\ell}))\geq\dim(\Cc\cap(\mathcal{V}_{\mathcal{L}_1} \times \cdots \times \mathcal{V}_{\mathcal{L}_\ell}))-m_j\geq r+m_k-m_j\geq r,$$
hence
$$d^{Supp}_r(\Cc)\leq \sum_{i=1}^\ell \dim(\mathcal{L}^\prime_i) =d^{Supp}_{r+m_k}(\Cc)-1.$$
\end{proof}

%From Lemma \ref{lemma_app}, one can prove the analogue of Theorem \ref{thm:rMSRD} for the generalized weights $ d^{Supp}_r(\Cc) $. We omit the proof, as it is word-by-work the same as the proof of Theorem \ref{thm:rMSRD}, after replacing each result for the generalized weights with its analogue for the generalized weights $ d^{Supp}_r(\Cc) $ discussed in this appendix.

We conclude by noting that Wei's duality (Theorem \ref{weiduality}) also holds for the generalized weights $ d^{Supp}_r(\Cc) $ with the same proof.

\bibliographystyle{plain}
%\bibliography{synthbib}

\end{document}